\documentclass{ieeeaccess}
\usepackage{cite}
\usepackage{amsmath,amssymb,amsfonts,amsthm}
\usepackage{algorithmic}
\usepackage{graphicx}
\usepackage{textcomp}
\usepackage{comment}
\usepackage{bm}

\usepackage{braket}
\usepackage{textalpha}
\usepackage{url}
\usepackage{multirow} %
\usepackage{array}
\usepackage{multirow}
\usepackage{graphicx} %

\usepackage{xcolor}
\usepackage{soul}

\sethlcolor{white}

\soulregister\cite7
\soulregister\ref7
\soulregister\eqref7

\definecolor{greenhl}{RGB}{255,255,255}
\definecolor{bluehl}{RGB}{255,255,255}
\definecolor{redhl}{RGB}{255,255,255}

\newcommand{\hlgreen}[1]{%
    {\sethlcolor{greenhl}\hl{#1}}%
}

\newcommand{\hlred}[1]{%
    {\sethlcolor{redhl}\hl{#1}}%
}

\newif\ifhighlightmath
\highlightmathtrue

\newcommand{\highlight}[1]{\ifhighlightmath\colorbox{white}{$#1$}\else#1\fi}

\theoremstyle{definition}
\newtheorem{lemma}{Lemma}

\makeatletter
\AtBeginDocument{\DeclareMathVersion{bold}
\SetSymbolFont{operators}{bold}{T1}{times}{b}{n}
\SetSymbolFont{NewLetters}{bold}{T1}{times}{b}{it}
\SetMathAlphabet{\mathrm}{bold}{T1}{times}{b}{n}
\SetMathAlphabet{\mathit}{bold}{T1}{times}{b}{it}
\SetMathAlphabet{\mathbf}{bold}{T1}{times}{b}{n}
\SetMathAlphabet{\mathtt}{bold}{OT1}{pcr}{b}{n}
\SetSymbolFont{symbols}{bold}{OMS}{cmsy}{b}{n}
\renewcommand\boldmath{\@nomath\boldmath\mathversion{bold}}}
\makeatother

\def\BibTeX{{\rm B\kern-.05em{\sc i\kern-.025em b}\kern-.08em
    T\kern-.1667em\lower.7ex\hbox{E}\kern-.125emX}}

\begin{document}
\history{This work has been accepted for publication in IEEE Access. This is the authors' accepted manuscript; the final published version is available on IEEE Xplore.}
\doi{to be assigned}

\title{Error Mitigation by Zero-Noise Extrapolation (ZNE) in Indirect-Control System}
\author{\uppercase{Arijit Das}\authorrefmark{1}, %
\uppercase{Masaki Owari}\authorrefmark{2}
}

\address[1]{Department of Information Science and Technology, Graduate School of Science and Technology (GSST), Shizuoka University, Hamamatsu, 432-8011, Shizuoka, Japan (e-mail: a.das.23@shizuoka.ac.jp)}
\address[2]{Department of Information Science and Technology, Graduate School of Science and Technology (GSST), Shizuoka University, Hamamatsu, 432-8011, Shizuoka, Japan (e-mail: masakiowari@inf.shizuoka.ac.jp)}
\tfootnote{\hl{This work was supported by JSPS KAKENHI Grant Number JP24H00952.}}

\markboth
{A. Das and M. Owari: Error Mitigation by ZNE in Indirect-Control System}
{A. Das and M. Owari: Error Mitigation by ZNE in Indirect-Control System}

\corresp{Corresponding author: Masaki Owari (e-mail: masakiowari@inf.shizuoka.ac.jp).}

\begin{abstract}
The indirect-control paradigm offers a scalable approach to quantum computation by restricting external operations to a small subset of qubits, whereas the \hl{entire} system evolves under its intrinsic many-body Hamiltonian. This framework reduces hardware complexity and limits the noise introduced through control channels. Recently, Anan \emph{et al.} proposed an implementation of the \hl{variational quantum eigensolver} (VQE) within this indirect-control architecture. However, the impact and integration of quantum error mitigation techniques in such a setting remain largely unexplored. In particular, zero-noise extrapolation (ZNE), a widely used error mitigation strategy, has not yet been systematically adapted to indirect-control VQE. A key requirement of ZNE is the controlled amplification of noise, which is commonly achieved through circuit-level techniques such as unitary folding. In the indirect-control framework, however, quantum operations are largely governed by continuous time-evolution under the system Hamiltonian, making the construction of physically realizable inverse operations non-trivial. In this study, we investigate the feasibility of implementing ZNE within the indirect-control VQE framework. \hl{We address the challenges associated with unitary folding in the presence of time-evolution operators and, with a minimal relaxation of the indirect-control constraint, we propose a ZNE-based error-mitigation strategy that exploits the symmetry of the one-dimensional XY spin chain.}
\end{abstract}

\begin{keywords}
Indirect-control, quantum error mitigation, variational quantum eigensolver, zero-noise extrapolation
\end{keywords}

\titlepgskip=-21pt

\maketitle

\section{Introduction}
\label{sec:introduction}

\PARstart{T}{he} early conceptualization of quantum computing dates back to the 1980s \cite{Benioff1980, Feynman1982,Deutsch85}. In the 1990s, the development of quantum algorithms, such as Shor's algorithm for integer factorization \cite{Shor_1997} and Grover’s algorithm for unstructured search problems \cite{grover1996fastquantummechanicalalgorithm}, demonstrated that quantum computers could have a profound impact on practical applications and industry. Initially, it was widely believed that the practical realization of quantum computers would be infeasible due to noise arising from interactions between qubits and their environment \cite{Landauer95}. However, the introduction of quantum error-correcting codes (QEC) \cite{Shor95,Steane96,Calderbank96,Steane96b,Gottesman96,Kitaev97} established that, in principle, fault-tolerant quantum computation (FTQC)\hl{,} which enables arbitrarily long quantum computations by suppressing the effects of noise\hl{,} is achievable \cite{Shor96b,Gottesman98,Preskill98,Steane1999,Aharanov08,Nielsen_Chuang_2010,Brun_2020}. Consequently, experimental efforts towards realizing practical quantum computers have accelerated since the late 1990s. In particular, technological advances in superconducting quantum circuits since the 2000s \cite{Blais04,Schuster2007,Blais07,Koch07,Schreier08,Lucero2012,Barends2014} have enabled the development of quantum processors with hundreds of qubits by research groups such as IBM and Google \cite{Kim2023,CarreraVazquez2024,Acharya2024}. Furthermore, some of these devices have demonstrated quantum supremacy \cite{quantumsup,PhysRevLett.127.180501}. Notably, a recently developed 105-qubit superconducting processor by Google has achieved quantum error correction below the fault-tolerance threshold \cite{Acharya2024}. Despite these significant advances, the realization of practical FTQC remains challenging, as encoding a single logical qubit requires a large number of physical qubits. Consequently, current noisy intermediate-scale quantum (NISQ) devices are not yet capable of implementing FTQC \cite{Preskill2018,nisq}.

Conventional quantum algorithms such as Shor's algorithm and Grover's algorithm are not executable on NISQ devices. For this reason, significant \hl{efforts have} been devoted to developing quantum algorithms that can be implemented on NISQ devices. Many of these algorithms are quantum-classical hybrid algorithms that utilize parameterized quantum circuits \cite{Endo_2021,vqa,hegde2024buzzstrategicpathsenabling}, such as the \hl{variational quantum eigensolver} (VQE) \cite{Peruzzo_2014,Tilly_2022}, which is used to find the minimum eigenvalue of an observable in quantum systems. Many of these NISQ algorithms use the expectation value of an observable as their output. This expectation value is obtained by performing statistical processing on the measurement outcomes using a classical computer. Therefore, the bias introduced by noise directly contributes to errors in the output of the algorithm. Quantum error mitigation (QEM) \cite{PhysRevX.7.021050,PhysRevLett.119.180509,Endo_2021,Cai_2023}, which has been explored in recent years as an alternative to QEC, aims to reduce this bias by performing classical post-processing on the measurement data.

While QEC remains the ultimate goal for realizing FTQC, QEM serves as an interim strategy for mitigating errors in current devices. Various QEM techniques have been developed, including zero-noise extrapolation (ZNE) \cite{PhysRevX.7.021050,PhysRevLett.119.180509}, probabilistic error cancellation \cite{PhysRevLett.119.180509}, and Clifford regression \cite{Czarnik2021}, among others. Despite current hardware limitations, QEM has demonstrated significant improvements in the performance of NISQ algorithms \cite{Endo_2021}.

Another approach to mitigating noise is to reduce the number of pathways through which noise can enter the quantum device from the external environment. In current NISQ devices, the entire quantum system is {\em controlled} via direct access to individual qubits. However, such external access also introduces additional noise into the system. Therefore, reducing the number of externally accessed qubits can, in principle, lead to a reduction in noise. It is known that, in a typical quantum many-body system, universal quantum computation on the entire system is possible by accessing only a small number of qubits (control units) \cite{Lloyd04,Burgarth07,Burgarth09,Kay10,Burgarth10,Heule10,Maruyama2016,Gokler17,KATO2020168046}. 

For example, it has been shown that any unitary transformation on a spin chain of arbitrary length can be implemented by accessing only the first qubit in the case of the 1D Heisenberg model \cite{Burgarth10}, and by accessing only the first two qubits in the case of the 1D XY model \cite{Burgarth09}. This method of controlling the entire quantum system through a combination of the free time-evolution of the many-body Hamiltonian and quantum operations on a small control unit is referred to as {\em indirect-control} \cite{Maruyama2016} (Figure \ref{indirect-spinchain}). Indirect-control also simplifies the overall architecture of quantum computing systems. In conventional quantum devices, the number of control mechanisms typically scales with the number of qubits \cite{indirect}.

\Figure[t!](topskip=0pt, botskip=0pt, midskip=0pt){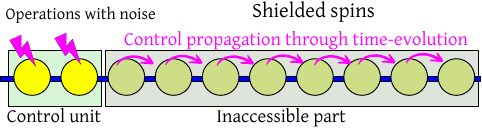}
{Indirect-control architecture in which the first two qubits of the spin chain form the control subsystem, while the remaining qubits are not directly accessible. By applying quantum operations on the control subsystem and allowing the entire system to evolve under its natural (drift) Hamiltonian, effective control over the full system can be achieved.}\label{indirect-spinchain}

When indirect-control is used, we need to repeatedly apply various quantum operations on the control unit at specific timings to compute the desired unitary to the whole system. 
In other words, for a given unitary transformation on the whole system, it is necessary to calculate the pulse sequence that needs to be applied to the control unit to realize it. For the 1D XY model, Burgarth \textit{et al.} showed that this calculation can be performed efficiently \cite{Burgarth10}. However, for general quantum many-body systems, it remains unclear whether an efficient method exists for calculating such pulse sequences. Furthermore, another problem is that the control pulse depends on the details of the many-body Hamiltonian. In other words, it is generally not possible to determine the control pulse without knowing all the coupling constants of the many-body Hamiltonian. In response to this problem,  it is possible to efficiently estimate the coupling constants using indirect-control alone under several assumptions on prior knowledge of a spin model Hamiltonian \cite{Burgarth09b,Franco09,Burgarth09c,Kato14,Zhang14,Sone17}. 
Even in the absence of prior knowledge of the Hamiltonian, its identification is, in principle, possible as long as the system is completely controllable \cite{Burgarth12}. 
Further, even if the system is not completely controllable,
in principle, any many-body Hamiltonian can be identified using indirect-control if FTQC can be used as an ancilla system \cite{Owari15}. However, the analysis on \cite{Burgarth12, Owari15} does not show the efficiency of the proposed methods, and no practically efficient method of identification of general Hamiltonian  has been proposed.

Recently, Anan \textit{et al.} showed that the VQE can be implemented using indirect-control \cite{anan_indirect_vqe}. In VQE, the expectation value of a target observable for the output state of a parameterized quantum circuit is optimized using a classical computer. For efficient optimization, it is important to obtain the partial derivatives of the expectation value of the observable with respect to the circuit parameters. To implement VQE using indirect-control, it is therefore necessary to compute these partial derivatives within the indirect-control framework. Anan \textit{et al.} demonstrated that the partial derivatives with respect to circuit parameters can be evaluated using indirect-control by treating the times at which the control unit performs gate operations as circuit parameters. In addition, they showed that VQE implemented using this method for a typical one-dimensional spin chain exhibits performance comparable to that of VQE implemented using conventional methods. 

In this indirect-control-based VQE approach, finding the optimal control pulse is equivalent to determining the optimal circuit parameters, which is achieved automatically through the parameter optimization in VQE. Furthermore, this optimization can be carried out even when the coupling coefficients of the many-body Hamiltonian are unknown. Therefore, in the VQE implementation method of Anan \textit{et al.}, there is no need to explicitly determine the control pulses or the coupling coefficients, unlike in conventional indirect-control methods.

In this way, the work of Anan \textit{et al.} suggests that indirect-control systems provide a promising alternative architecture for NISQ devices. However, several challenges remain in the indirect-control approach. In particular, it is necessary to develop methods for dealing with noise under indirect-control. In quantum error correction, the entropy that increases due to decoherence must be removed from the system through repeated syndrome measurements. However, because indirect-control methods have limited access to the system, it is not straightforward to efficiently remove this entropy. For this reason, it appears fundamentally difficult to implement quantum error correction using indirect-control. 

On the other hand, in the case of error mitigation, there is no requirement to remove such entropy, and thus it may still be feasible to implement it within the indirect-control framework. Therefore, in this paper, we investigate the feasibility of implementing error mitigation using indirect-control.

We show that, in the case of VQE using indirect-control on a one-dimensional XY spin chain, error mitigation can be achieved using the zero-noise extrapolation method, provided that, in addition to unitary operations on the control unit, $Y$-gate operations are allowed on the odd-numbered qubits, including those in the inaccessible region.

\section{Variational Eigensolver and Indirect-control}\label{sec2}
This section provides an overview of the indirect-control VQE.

\subsection{Variational quantum eigensolver (VQE)}
Due to the inherent limitations of NISQ devices, hybrid algorithms that integrate quantum and classical processing have emerged as a practical approach \cite{hegde2024buzzstrategicpathsenabling}. Among these, the Variational Quantum Eigensolver (VQE) \cite{Tilly_2022}, first proposed by Peruzzo {\em et al.} \cite{Peruzzo_2014}, is recognized as one of the most promising hybrid quantum-classical algorithms tailored for the NISQ era. The primary aim of VQE is to determine the ground-state energy of a quantum system, a fundamental step in exploring the energetic properties and the electronic configuration on the ground state of molecular systems, which has application in quantum chemistry \cite{Cao2019,RevModPhys.92.015003} and condensed matter physics \cite{PhysRevA.92.042303, Kokail2019}. 

The application of VQE to the Hamiltonian of a many-electron system is carried out as follows \cite{Tilly_2022}: The Hamiltonian of a many-electron system can be expanded in terms of a polynomial number of Pauli operators by employing techniques such as the Hartree-Fock approximation or the Jordan-Wigner transformation. 
Similarly, the most of the Hamiltonian of a tight \hlgreen{binding} model and a spin network can be decomposed by a polynomial number of Pauli operators. 
Thus, the target Hamiltonian $H_{T}$ on \hl{an $N$-}qubit system can be written as $H_T = \sum _i r_i P_i$. Here $P_i$ is \hl{an} $N$-qubit Pauli operator generated by $i I$ and $\{X_k, Y_k, Z_k \}_{k=1}^N$, where $X_k$, $Y_k$, $Z_k$ are Pauli $X$, $Y$, $Z$ operators on $k$-th qubit, respectively.

The minimum eigenvalue $E_0$ of $H_{T}$ is a solution of the following variational problem $E_0 = \min_{\ket{\Psi}} \bra{\Psi}H_T\ket{\Psi}$. 
VQE approximates $E_0$ using variational method with trial state generated by a parametric ansatz circuit.
Suppose a parametric ansatz circuit is described by a unitary $\mathbf{U}(\boldsymbol{\xi})$, where $\boldsymbol{\xi}$ is a vector-parameter.
The trial state $\ket{\Psi(\boldsymbol{\xi})}$ with the parameter $\boldsymbol{\xi}$ can be written as 
$\mathbf{U}(\boldsymbol{\xi}) \ket{\mathbf{0}}$, where $\mathbf{\ket{0}}$ is a typical initial state.
Suppose $E_\text{VQE}$ is the minimum of 
the expectation value of $H_T$ of the trial state:
\begin{align}\label{E VQE}
    E_\text{VQE} &= \min_{\boldsymbol{\xi}}
E(\boldsymbol{\xi}),
\end{align}
where  $E(\boldsymbol{\xi}) = 
\bra{\Psi(\boldsymbol{\xi})}H_T \ket{\Psi(\boldsymbol{\xi})}$.
By means of the decomposition $H_T = \sum _i r_i P_i$, 
$E(\boldsymbol{\xi})$ can be written as 
\begin{equation}
    E(\boldsymbol{\xi})= \bra{\mathbf{0}} \mathbf{U}^{\dagger}(\boldsymbol{\xi}) H_T \mathbf{U}(\boldsymbol{\xi}) \ket{\mathbf{0}}
    =\sum _i \highlight{r_i} \bra{\mathbf{0}} \mathbf{U}^{\dagger}(\boldsymbol{\xi}) P_i \mathbf{U}(\boldsymbol{\xi}) \ket{\mathbf{0}}
\end{equation}
The expectation values $\bra{\mathbf{0}} \mathbf{U}^{\dagger}(\boldsymbol{\xi}) P_i \mathbf{U}(\boldsymbol{\xi}) \ket{\mathbf{0}}$ can be calculated by the parametric circuit $\mathbf{U}(\boldsymbol{\xi})$.
Then, $E(\boldsymbol{\xi})$ is derived by summing these expectation values on a classical computer. 
Furthermore, the partial derivatives of $E(\boldsymbol{\xi})$ with respect to $\boldsymbol{\xi}$ can also be computed using a similar quantum circuit. Consequently, by utilizing the output of the parameterized quantum circuit, the minimization of $E(\boldsymbol{\xi})$ can be performed sequentially on a classical computer using gradient methods. 

Here, we should note that this minimization may not be easy.  
That is, unless the ansatz circuit is chosen carefully, the so-called plateau problem, where the gradient vanishes, arises \cite{McClean2018,Cerezo2021,Wang2021}, and furthermore, it has been demonstrated that the minimization problem for this parameterized quantum circuit is NP-hard \cite{PhysRevLett.127.120502}. \hlgreen{To address the barren plateau problem, several strategies have recently been proposed, including exploiting many-body localization \cite{Cao_2025} and strategically initializing the circuit parameters \cite{Grant_2019}.}
Consequently, the above optimization is not performed rigorously, and the output of VQE becomes an approximation $E_\text{VQE}'$ of $E_\text{VQE}$. 
In experiments using actual quantum computers, furthermore, errors due to decoherence are added to the output of VQE. Hence, the accuracy of VQE estimates must be further enhanced through quantum error mitigation (QEM) techniques like ZNE or tailored probabilistic machine learning \cite{Kurita_2023, PhysRevResearch.6.033069}. 

\subsection{Indirect-control variational quantum eigensolver (VQE)}\label{vqe}

In quantum many-body systems, external access is often possible only to a limited number of degrees of freedom. Nevertheless, manipulating these limited degrees of freedom affects the entire system through time-evolution governed by the many-body Hamiltonian. Indeed, in typical spin chains like the XY model, it is known that accessing only a small number of spins allows complete control over the entire spin chain\hl{, meaning that,} in principle, any unitary transformation can be applied to the entire chain \cite{Burgarth09, Burgarth10,Maruyama2016}. 
Here, we refer to the directly accessible subsystem, where we perform unitary operators, as the control unit and the inaccessible subsystem,  where we cannot perform any active operations, as the inaccessible part.

To indirectly control a quantum many-body system by accessing only the control unit and achieve the desired unitary transformation, it is necessary to explicitly determine the operations to be performed on the control unit over time. However, at present, no general method has been established to efficiently determine the specific control method required to realize a given unitary transformation.
On the other hand, when implementing variational quantum algorithms like VQE on quantum many-body systems via indirect-control, the above problem does not arise. In this case, the desirable control method is automatically determined by optimizing the parameters of the ansatz circuit to minimize the cost function \cite{anan_indirect_vqe}.

Let us consider indirect-control of a quantum many-body system undergoing free time-evolution with the system Hamiltonian $H_S$. To implement an ansatz circuit in such system, we instantaneously operate a parameterized unitary operator $V_n(\vec{\theta}_n)$ on the control unit at time $t_n$ for $n=1,2,3, \cdots, L$ with $t_n <t_{n+1}$.  Then, an ansatz circuit which is given by the time-evolution of the whole system can be described as 
\begin{align}\label{eq total evolution}
\mathbf{U(\boldsymbol{\xi})}:=    \Pi_{n=1}^L U(t_n,t_{n+1}) V_n(\Vec{\theta}_n)
\end{align}
Here, the parameter of the circuit $\boldsymbol{\xi}$ consists of the time parameters $\{t_n \}_
{n=1}^L$ and the angular parameters $\{\vec{\theta}_n\}_{n=1}^L$. In Equation \eqref{eq total evolution}, $U(t_n,t_{n+1}):= \exp (-iH_S (t_{n+1}-t_n))$ is a time-evolution operator acting on the whole system from time $t_n$ to time $t_{n+1}$.

In \cite{anan_indirect_vqe}, the first two qubits of 1-D XY-spin chain are chosen as a control unit, and two-qubits unitary $V_n(\vec{\theta}_n)$ consist of $X$ and $Y$ rotation gates $R_X, R_Y$ and control-$Z$ gates $CZ$ as 
\begin{equation}
    V_n(\vec{\theta}_n) 
    = R_Y(\theta_{4n-2})R_X(\theta_{4n-3})\otimes R_Y(\theta_{4n})R_X(\theta_{4n-1})\cdot CZ, \label{eq def local unitary}
\end{equation}
where $\vec{\theta}_n$ is defined by
$\vec{\theta}_n=(\theta_{4n-3},\theta_{4n-2},\theta_{4n-1},\theta_{4n})$.
Further, the system Hamiltonian $H_S$ is chosen as the 1-D XY model Hamiltonian $H_\text{XY}$:

\begin{equation}
H_{\text{XY}} 
= \sum_{k=1}^{N-1} c_k \left[ (1 + \highlight{\nu}) X_k X_{k+1} + (1 - \highlight{\nu}) Z_k Z_{k+1} \right]
+ \sum_{k=1}^{N} b_k Y_k,
\label{eq-xy-hamiltonian}
\end{equation}
where $N$ is the number of qubits,  \(c_k\) are the coupling constants, \(b_k\) are the strengths of the local magnetic fields, and \hl{$\nu$} is an anisotropy parameter.
Figure \ref{fig-indirect-vqe} depicts the $n$-th layer of the indirect-control VQE ansatz.  In \cite{anan_indirect_vqe}, Anan {\em et al.} showed that the performance of VQE using the above indirect-control parametric circuit is not so different from that of using standard parametric circuits. 

Here, we add a note regarding the measurements necessary to implement VQE. In previous research on indirect-control, it was often assumed that measurements could only be performed on the control unit. On the other hand, implementing VQE generally requires obtaining the expectation value of the Pauli operator, including the inaccessible part. For the one-dimensional XY model without an external magnetic field, Anan  {\em \hl{et al.}} provided a method using indirect-control to efficiently determine the expectation value of such a Pauli operator using only measurements of the control unit \cite{anan_indirect_vqe}. However, for general many-body Hamiltonians, no known method efficiently performs such measurements. Therefore, this study assumes that it is possible to perform a measurement of the Pauli operator, including the inaccessible part, just once at the end of the circuit.

\Figure[t!](topskip=0pt, botskip=0pt, midskip=0pt){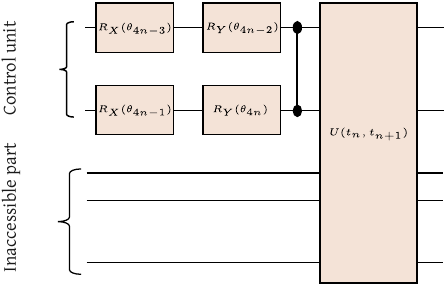}
{ An illustration of the $n$-th layer of an indirect-control VQE ansatz circuit for a five-qubit system. The first two qubits constitute the control unit, where arbitrary quantum gate operations can be applied, while the remaining three qubits are not directly accessible and evolve solely under the system’s intrinsic dynamics.\label{fig-indirect-vqe}}

\section{Error mitigation by zero noise extrapolation (ZNE)}\label{zne}
\subsection{ZNE by the Richardson extrapolation}
When a VQE is implemented on a real quantum device, the output of the VQE, which is an estimator of minimum eigenvalue of a target Hamiltonian $H_T$, must be affected by environmental noise. 
This is the case even though we reduce the noise by implementing the VQE via indirect-control.
In general, an output of a variational quantum algorithm like a VQE is an estimate of the mean value of 
an observable. Thus, the variance and bias of the estimate may increase by the effect of noise. 
In such a situation, error mitigation reduces the bias at the expense of the variance by post-processing of the measurement results. 
Zero-noise extrapolation (ZNE), which is one of the most common error mitigation techniques, mitigates errors by extrapolating data to the zero-noise limit via classical post-processing \cite{PhysRevX.7.021050, PhysRevLett.119.180509}. 

Let us consider the simplest problem setting, where the energy estimation by a VQE, denoted by $E(\lambda)$, is a function of the noise level represented by a single parameter $ \lambda $. 
By definition, $E(0)$ is an output $E_{VQE}'$ of the VQE on a noiseless circuit,
and, thus, $E(0)$ is a good estimator of the minimum eigenvalue $E_0$ of $H_T$. However, in actual experiments, due to hardware limitations, there exists a certain $\lambda_0>0$ such that the noise level $\lambda$ of the ansatz circuits  cannot be set to a value smaller than $\lambda_0$.
Zero-noise extrapolation (ZNE) mitigates errors by extrapolating $E(\lambda)$ to the zero-noise limit $\lambda \rightarrow 0$ through classical post-processing. 
 Although it is difficult to decrease the noise level $\lambda$ less than $\lambda_0$ in a real quantum device, we may increase the noise level  $\lambda$ by various methods,  such as unitary folding or parameter noise scaling \cite{PhysRevLett.120.210501,digitalZne,PhysRevA.102.012426}. 
 Thus, by calculating $E(\lambda)$ for various values of $\lambda$ satisfying $\lambda \ge \lambda_0$ and extrapolating the function $E(\lambda)$ to $\lambda = 0$, we can achieve a good estimated value of $E(0)$.

 Suppose we have $n+1$ noisy samples $\big \{E\left(\alpha_0 \lambda_0\right),$ $ E\left(\alpha_1 \lambda_0\right),$ $ \dots, E\left(\alpha_n \lambda_0 \right)\big \}$ of the noisy expectation value of an observable for a given $\lambda_0 >0$, where $\alpha_n \ge 1$ for all $n$.
In the Richardson extrapolation method with a single variable \cite{doi:10.1098/rsta.1911.0009},  we assume that $E\left(\lambda\right)$ can be expressed as a polynomial of $\lambda$:
\begin{equation}
    E\left(\lambda\right) = E\left(0\right) + \sum_{k=1}^n C_k \lambda^k + \mathcal{O}\left(\lambda^{n+1}\right),
    \label{eq-noisy-expec-val}
\end{equation}
where $E\left(0\right)$ is the zero-noise expectation value and $C_k$ are coefficients.
When $\lambda$ is small, the last term of Equation \eqref{eq-noisy-expec-val} can be ignored.
Then, $E\left(\lambda\right)$ can be approximated by an $n$-th order polynomial,
which is written as $E_\text{Ric}(\lambda)$. 
The polynomial $E_\text{Ric}$ satisfying $E\left(\alpha_i \lambda_0 \right)=E_\text{Ric}(\alpha_i \lambda_0 )$
for $i=0,\cdots, n$ is unique, and given by the following Lagrange interpolating polynomials:
\begin{align}\label{eq lagrange interpolating poly}
  \highlight{  E_\text{Ric}(\lambda)=\sum_{i=0}^n \frac{\omega_{n,i}(\lambda )}{\omega_{n,i}(\alpha_i\lambda_0)}E\left(\alpha_i \lambda_0 \right),}
\end{align}
where \hl{$\omega_{n,i}$} is defined by 
\begin{equation}
   \highlight{ \omega_{n,i}(\lambda)=\frac{(\lambda-\alpha_0\lambda_0)(\lambda-\alpha_1\lambda_0)\cdots (\lambda-\alpha_n\lambda_0)}{(\lambda - \alpha_i\lambda_0)}.}
\end{equation}

From Equation \eqref{eq lagrange interpolating poly}, $E_\text{Ric}\left(0\right)$, which is an approximation  of
$ E\left(0\right)$, can be written as 
\begin{equation}\label{eq E Ric 0 equal sum}
    E_\text{Ric}\left(0\right) = \sum_{k=0}^{n}\beta_k E\left(\alpha_k \highlight{\lambda_0}\right),
\end{equation}
where the coefficients $\beta_k$ are given by   
\begin{equation}
    \beta_k = \prod_{i \neq k} \frac{\alpha_i}{\alpha_i - \alpha_k}.
    \label{zne-beta}
\end{equation}

Although $E_\text{Ric}\left(0\right)$ can be calculated from $\{ E\left(\alpha_k \highlight{\lambda_0}\right) \}_{k=0}^n$ by means of Equation \eqref{eq E Ric 0 equal sum}, 
what we can access in a real experiment is not $E\left(\alpha_k \highlight{\lambda_0}\right)$ itself, but an estimator $\mathbf{E}\left(\alpha_k \highlight{\lambda_0}\right)$ of $E\left(\alpha_k \lambda_0\right)$, where $\mathbf{E}\left(\alpha_k \highlight{\lambda_0}\right)$ is calculated from measurement results of corresponding circuits. 
Suppose $\mathbf{E}_\text{Ric}\left(0\right)$ is an estimator of $E_\text{Ric}\left(0\right)$. 
The variance of $\mathbf{E}_\text{Ric}\left(0\right)$ can be written down in terms of the variances of $\mathbf{E}\left(\alpha_k \highlight{\lambda_0}\right)$ as 
\begin{align}
\highlight{\operatorname{Var}}\big(\mathbf{E}_{\text{Ric}}(0)\big)
&= \sum_{k=0}^{n} \beta_k^2 \, \highlight{\operatorname{Var}}\big(\mathbf{E}(\alpha_k \highlight{\lambda_0})\big) \nonumber \\
&= \sum_{k=0}^{n}
\left( \prod_{i \neq k} \frac{\alpha_i^2}{(\alpha_i - \alpha_k)^2} \right)
\, \highlight{\operatorname{Var}}\big(\mathbf{E}(\alpha_k \highlight{\lambda_0})\big).
\label{eq-vari-ric-uni}
\end{align}
where $\highlight{\operatorname{Var}}(\mathbf{X})$ is the variance of random variable $\mathbf{X}$.
In this paper, we will choose $\{ \alpha_i\}_{i=0}^n$ as 
$\alpha_i=2i+1$. In this case, by a straightforward calculation, we can show  $\prod_{i \neq k} \frac{\alpha_i^2}{(\alpha_i - \alpha_k)^2}\ge 2^n/(2n+1)^2$. Thus, $\highlight{\operatorname{Var}}(\mathbf{E}_\text{Ric}\left(0\right))$ increases exponentially with respect to the number $n$ of the noisy samples.
This means that the choice of $n$ should be moderately small.

\subsection{Unitary folding of $n$-qubit gates under local depolarizing channel}

Let us consider a realistic situation where the ansatz has a stochastic noise process
as the $N$-qubit quantum channel $\mathcal{E}$.
Taking into account the possibility that time-evolution gates could be the primary source of noise
in our ansatz circuit (Figure \ref{fig-indirect-vqe}),
here, for simplicity, we focus on circuits consisting of a sequence of $N$-qubit gates.  
Suppose \hl{an} $N$-qubit unitary gate $G$ acting on \hl{an} $N$-qubit density matrix $\rho$ as $\rho\mapsto [G](\rho)=G \rho G^\dagger$ in an ideal ansatz circuit, where $[G]$ is a unitary channel corresponding to $G$. 
Then, it will act as $\rho \mapsto \mathcal{E} \circ [G](\rho)=\mathcal{E} (G \rho G^\dagger)$ in the corresponding noisy ansatz circuit.

In this paper, we opted for unitary folding, where, for an arbitrary $N$-qubit unitary gate $G$, we insert  $G^{\dagger}G$, which is equal to an identity in an ideal case, just after $G$ in the circuit. In other words, $G$ is replaced  by $G^\dagger GG$.
Then, $G^\dagger GG$ acts as 
\begin{equation}
\rho\mapsto \mathcal{E} \circ [G]^\dagger \circ \mathcal{E} \circ [G]\circ \mathcal{E} \circ [G](\rho)    
\end{equation}
in a noisy circuit.
Let us consider a simple situation where the noisy channel $\mathcal{E}$ is \hl{an} $N$-qubit ``\textit{global}'' depolarizing channel $\mathcal{D}_\lambda^{(N)}$ defined as follows:
\begin{equation}
     \mathcal{D}_p^{(N)} (\rho)= (1-p)\rho+ p \cdot 2^{-N} I_{N},
     \label{eq-noise-process}
\end{equation}
 where $p$ satisfies $0\le p \le 1$, and $I_{N}$ is the $2^N$ dimensional identity matrix. 
In this case, 
for all $N$-qubit unitary operator $G$, $[G]$ commutes with $\mathcal{D}_p^{(N)}$: 
$[G]\circ \mathcal{D}_p^{(N)} = \mathcal{D}_p^{(N)} \circ [G]$.
Then, we can evaluate the action of $G^\dagger GG$ as follows:
\begin{align}\label{commute global depolarizing}
\mathcal{D}_p^{(N)} \circ [G]^\dagger \circ \mathcal{D}_p^{(N)} \circ [G]
\circ \mathcal{D}_p^{(N)} \circ [G] =\mathcal{D}_{3p}^{(N)} \circ [G] +\mathcal{O}(p^2).
\end{align}
Therefore, when $p$ is sufficiently small, by replacing $G$ with $G^\dagger GG$, we can effectively amplify the noise probability of the global depolarizing channel from $p$ to $3p$. Similarly, replacing $G$ with $(G^\dagger G)^K G$ corresponds to  amplifying 
the noise probability of the global depolarizing channel from $p$ to $(2K+1)p$ in the first order of $p$. Hereafter, for unitary folding, in which $G$ is replaced by $(G^\dagger G)^K G$, we will refer to $K$ as the \textit{folding number} and $2K+1$ as the \textit{scale factor}. 

When all the noise in an ansatz circuit for a VQE can be described by a global depolarizing channel $\mathcal{D}_p^{(N)}$ as above, an energy estimation $E(p)$ of the VQE is a function of $p$, and 
can be approximated by a polynomial like Equation~\eqref{eq-noisy-expec-val} with $\lambda =p$.
In this case, we can derive an estimation of $E((2K+1)p)$ 
by replacing all $N$-qubit unitary gates, $G_i$ with $(G_i^\dagger G_i)^KG_i$ in the ansatz circuit. 
We refer to such a circuit containing noisy identities of the form $G^\dagger G$ as a {\em folded circuit}. We refer the reader to the work of Nachman \textit{et al.}~\cite{PhysRevA.102.012426} for a detailed analysis of ZNE via unitary folding under a global depolarizing channel.

In the above discussion, we have assumed that the quantum channel $\mathcal{E}$ is a global depolarizing channel $\mathcal{D}_p^{(N)}$.
On the other hand, a global depolarizing channel can be implemented by uniformly and randomly applying all $N$-qubit Pauli operators. This implies that the interaction from the environment \hl{(}which is the source of the noise affecting the $N$ qubits\hl{)} exhibits very strong non-local correlations. In contrast, in this paper, we consider implementing VQE by indirectly controlling a quantum many-body system. Given that quantum many-body systems are inherently local in nature, it is unreasonable to assume that the time-evolution of the quantum many-body system is subject to noise with strong non-locality, such as that of a global depolarizing channel.
Therefore, in this paper, we use a local depolarizing channel defined by $( \mathcal{D}_p^{(1)})^{\otimes N}$ rather than a global depolarizing channel $\mathcal{D}_p^{(N)}$ as the quantum channel $\mathcal{E}$.  Unlike a global depolarizing channel, a local depolarizing channel does not possess non-local correlations.
Therefore, the local depolarizing channel is a suitable simple noise model for quantum many-body systems, as we consider in this paper.

\hl{Also,} unlike \hl{the} global depolarizing channels, \hl{the} local depolarizing channels do not commute with unitary transformations of the first order with respect to $p$.
Therefore, when noise is represented by a local depolarizing channel, even after unitary folding, the simple scaling law for noise parameter shown in Equation~\eqref{commute global depolarizing} does not hold even in the first-order approximation with respect to $p$.
Nevertheless, as we will see in the next section, even when $p$ is small, ZNE via unitary folding is sufficiently effective for error mitigation.
In Appendix \ref{appendix local depolarizing}, we show that the commutation relation breaks down when noise is represented by a local depolarizing channel.

\section{Unitary folding in indirect-control VQE}

In this section, we describe how to implement ZNE using unitary folding when VQE is implemented using indirect-control on a one-dimensional XY spin chain.

\subsection{Hamiltonian of indirect-control system}
In this paper, we choose a one-dimensional XY spin chain as the quantum many-body system in which we implement VQE via indirect-control. As we will explain later, this choice of the system Hamiltonian is crucial for implementing unitary folding using indirect-control.
Thus, the system Hamiltonian $H_S$ is given by
\begin{equation}
    H_S = H_\text{XY},
\end{equation}
where
\begin{equation}
    H_\text{XY} = \frac{1}{2}\sum_{k=1}^{N-1} \left( X_k X_{k+1} + Z_k Z_{k+1} \right).
    \label{equation-hxz}
\end{equation}
The single-qubit depolarizing channel $\mathcal{D}_p^{(1)}$ is given by Equation \eqref{eq-noise-process},
where $p$ denotes the probability that a random Pauli operation acts on the state $\rho$. As we have explained in the last section, we assume that for multi-qubit gates each qubit is subject to \hl{a} local depolarizing channel (see Figure \ref{indirect_vqe_redun_ansatz}).

\Figure[t!](topskip=0pt, botskip=0pt, midskip=0pt){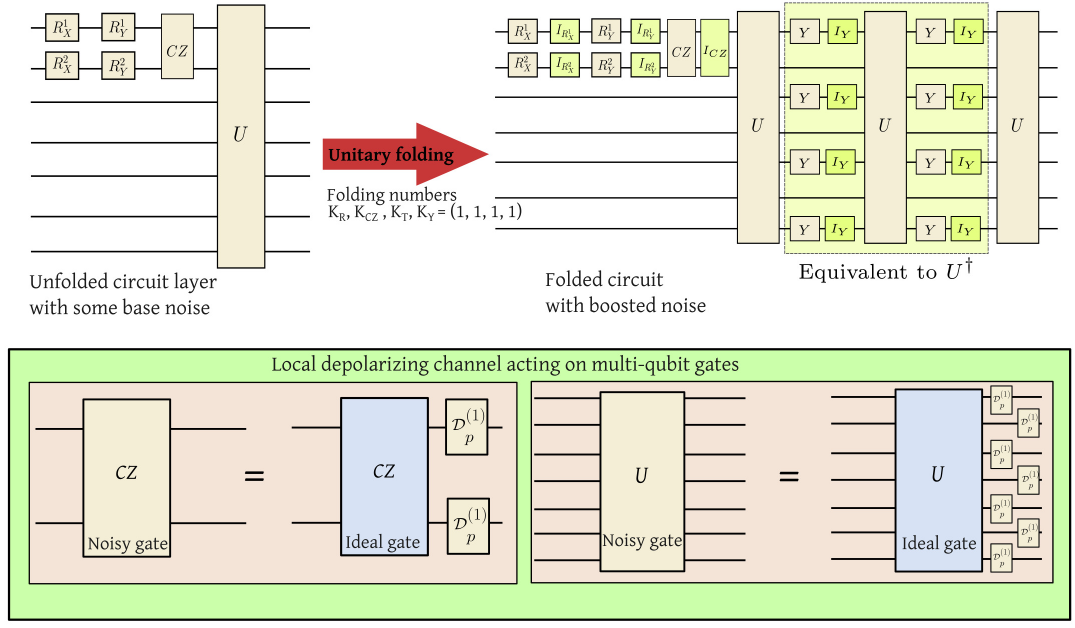}{\hl{Noise boosting of the standard indirect-control VQE circuit is achieved using unitary folding, where a noisy unitary gate $G$ is replaced by $GG^\dagger G$. We refer to the circuit containing such noisy $G^\dagger G$ identity gates as a \textit{folded circuit}. The $n$-th layer of a 7-qubit indirect-control folded circuit with noisy time-evolution is depicted, where $I_{R^{1}_{x}}$ is an abbreviation for $R_X^\dagger R_X$ on the 1st qubit, $I_{CZ}$ represents $CZ^\dagger CZ$, and $I_Y$ represents $Y^\dagger Y$. The highlighted section corresponds to $U^\dagger(t_n, t_{n+1})$}.\label{indirect_vqe_redun_ansatz}}

\subsection{Implementing negative time-evolution}
As explained in Section \ref{sec2}, our ansatz circuit is implemented by indirectly controlling a 1-D XY model.
Therefore, the system is constantly driven by a drift Hamiltonian $H_{XY}$, and under the assumption that the norms of the Hamiltonian elements realizing 1-qubit and 2-qubit gates are sufficiently larger than the norm of the drift Hamiltonian $H_{XY}$, an approximate circuit like the one shown in Figure \ref{fig-indirect-vqe} is obtained.
For example, \hlgreen{in Figure \ref{fig-indirect-vqe}}, the time-evolution gate $U(t_n, t_{n+1})$ is not a gate that is actively applied. Rather, it is a description of the time-evolution from time $t=t_n$ to time $t=t_{n+1}$ resulting from the instantaneous application of single-qubit and two-qubit gates at times $t=t_n$ and $t=t_{n+1}$, respectively. \hl{Nonetheless, throughout this study, we use the terms ``time evolution'' and ``time-evolution gate'' interchangeably to refer to this description of the system's evolution.}

Now, let us consider the case where decoherence associated with time-evolution is the primary source of noise, and that noise is described by a local depolarizing channel. In this case, to amplify the noise using unitary folding, we need to operate $U^\dagger U$.
For $U = U(t_n, t_{n+1})$, $U^\dagger$ can be written as $U^\dagger = U(t_{n+1}, t_n)$. Thus, $U^\dagger$ represents the reverse time-evolution of the drift Hamiltonian $H_{XY}$. However, in general, an indirect-control system is not time-reversible, and
it is difficult to implement reverse time-evolution in quantum many-body systems using indirect-control \cite{Burgarth10,Maruyama2016,KATO2020168046}.

Even in the 1-D XY model, we are not aware of any method for implementing reverse time-evolution simply by applying 1-qubit or 2-qubit gates to the two qubits at the ends, which serve as the control unit.
Therefore, in this \hl{study}, we aim to implement reverse time-evolution by adding a minimal number of operations to the inaccessible part.
In fact, on a 1D-XY spin chain, that is, when the drift Hamiltonian is $H_{XY}$ as given by Equation \eqref{equation-hxz}, it is possible to show that $U^\dagger (t_n, t_{n+1})$ can be implemented by simply adding an ability to perform $Y$ gates on odd-numbered qubits including the inaccessible part. 
This is due to the symmetry of 1-D XY model Hamiltonian $H_{XY}$ without local magnetic fields given by $\Pi_{j=0}^MY_{2j+1}\cdot H_{XY}\cdot \Pi_{j=0}^MY_{2j+1}=-H_{XY}$ when $N=2M+1$; the similar equation also holds when $N=2M$. This equation leads 
\begin{equation}
    \Pi_{j=0}^MY_{2j+1}\cdot U (t_n, t_{n+1})\cdot \Pi_{j=0}^MY_{2j+1}=U^\dagger (t_n, t_{n+1})
\end{equation}
Hence, for an arbitrary $n$-qubit XY time-evolution gate $U(t_n, t_{n+1})$, the inverse gate $U^\dagger(t_n, t_{n+1})$ can be implemented by setting $b_k = 0$ in Equation \eqref{eq-xy-hamiltonian} and \hlgreen{applying the Pauli $Y$ gate}
to the odd-numbered qubits before and after the time-evolution, as illustrated in the highlighted section of Figure \ref{indirect_vqe_redun_ansatz} for the $H_{XY}$ Hamiltonian. 
\hlgreen{Therefore, by adding the ability to apply $Y$ gates to the odd-numbered qubits, in addition to the 1-qubit and 2-qubit gates applied to the two end qubits, it is possible to implement ZNE via unitary folding in the one-dimensional XY model. Although we relaxed the indirect-control constraint by applying $Y$ gates to the odd sites, it is theoretically possible to selectively apply $Y$ gate operations to these otherwise inaccessible sites. This can be achieved by constructing a spin chain with two types of qubits having different resonance frequencies and applying a resonant control pulse for the odd-site qubits to the entire spin chain \cite{Menta_2026, PhysRevResearch.7.L012065}. We note that such a global control pulse could in principle introduce correlated errors across the odd-site qubits. Nonetheless, in our simulations we treat the $Y$-gate error on the same footing as the other single-qubit gate errors.}

\section{Numerical simulations}
\label{Numerical simulations}
\hl{In this section, we present the results of simulations using Qulacs \cite{Suzuki_2021}, a Python library for quantum circuit simulation, to demonstrate the implementation of ZNE in VQE using indirect-control for the 1-D XY model described in the previous section.}
Our target Hamiltonian is a one-dimensional transverse-field Ising spin chain expressed as
\begin{equation}
    H_\text{Ising} = \sum_{k=1}^{N-1} X_k X_{k+1} + \sum_{k=1}^{N} Z_k.
    \label{ising-hamiltonian}
\end{equation}
\hl{We simulated the system using density matrices, thereby avoiding finite-shot sampling noise.}
In the VQE simulation, the initial value of the angle parameter $\theta_n$ is selected uniformly at random from the interval $[0,2\pi]$.
In addition,
the initial value of the time-parameter $t_0$ is set $0$, and
the initial values of the time parameters $\{t_n \}_{n=2}^{30}$ are sampled uniformly from $[0, T_\text{max}]$, and then reindexed such that $t_{n+1} > t_n$.
The optimization of the angle parameter $\theta_n$ and the time-parameter $t_n$, subject to the constraint $t_{n+1}>t_n$, is performed using the \texttt{minimize} function from the Python library \texttt{scipy}, with Sequential Least Squares Programming (SLSQP) \cite{kraft1988software} specified as the solver. 
\hl{To keep the optimization time tractable, the precision tolerance \texttt{ftol} is set to $10^{-3}$ in the \texttt{scipy} optimization for all subsequent VQE optimizations. In the SLSQP algorithm, this parameter acts as a tolerance bound not only for the relative change in the objective function but, more strictly, for the residuals of the Karush–Kuhn–Tucker (KKT) optimality conditions derived from the Lagrange multipliers and the absolute constraint violations computed at each sequential quadratic programming subproblem \cite{nocedal2006numerical, lawson1995solving}.}

\subsection{Description of decoherence using the noisy ansatz circuit}
\label{Description of decoherence using the noisy ansatz circuit}
We first consider the case where all noise introduced into a 1-D XY model is due to decoherence, which occurs independently for each qubit.
 As mentioned in the previous section, the simplest model describing this situation is local depolarization noise. \hl{Further, local dephasing noise and local amplitude damping noise are also typical models.} 
The Lindblad master equation is the appropriate tool for describing such a situation \cite{breuer2002theory}:
\begin{align}\label{eq lindblad eq}
    \frac{d\rho}{dt}& =\mathcal{L}_\text{XY}(\rho) +\mathcal{L}_\text{noise}(\rho).
\end{align}
Here, $\mathcal{L}_\text{XY}$ and $\mathcal{L}_\text{noise}$ are  $N$-qubit Lindbladians describing a Hamiltonian dynamics and decoherence, respectively:
\begin{align}
\mathcal{L}_\text{XY}(\rho) &= -i[H_\text{XY},\rho], \nonumber \\ 
\mathcal{L}_\text{noise}(\rho) &=\gamma \sum_{i=1}^N\mathcal{L}_i(\rho),    
\end{align}
where $\gamma$ is a single-qubit decoherence rate, and $\mathcal{L}_i$ is a single-qubit Lindbladian representing decoherence of the $i$-th qubit.
When $\mathcal{L}_\text{noise}$
represents decoherence due to local depolarizing noise, using the Liouville representation, $\mathcal{L}_i$ can be written as $\mathcal{L}_i=- \Pi_{XYZ,i}$,
where $\Pi_{XYZ,i}$ is a projection onto 3-dimensional subspace of a single-qubit Hilbert-Schmidt space spanned by $\{ X_i, Y_i, Z_i\}$.
\hl{Similarly, when $\mathcal{L}_\text{noise}$
represents decoherence due to local dephasing noise, $\mathcal{L}_i$ can be written as $\mathcal{L}_i=- \Pi_{XY,i}$,
where $\Pi_{XY,i}$ is a projection onto 3-dimensional subspace of a single-qubit Hilbert-Schmidt space spanned by $\{ X_i, Y_i\}$.}

\Figure[t!](topskip=0pt, botskip=0pt, midskip=0pt){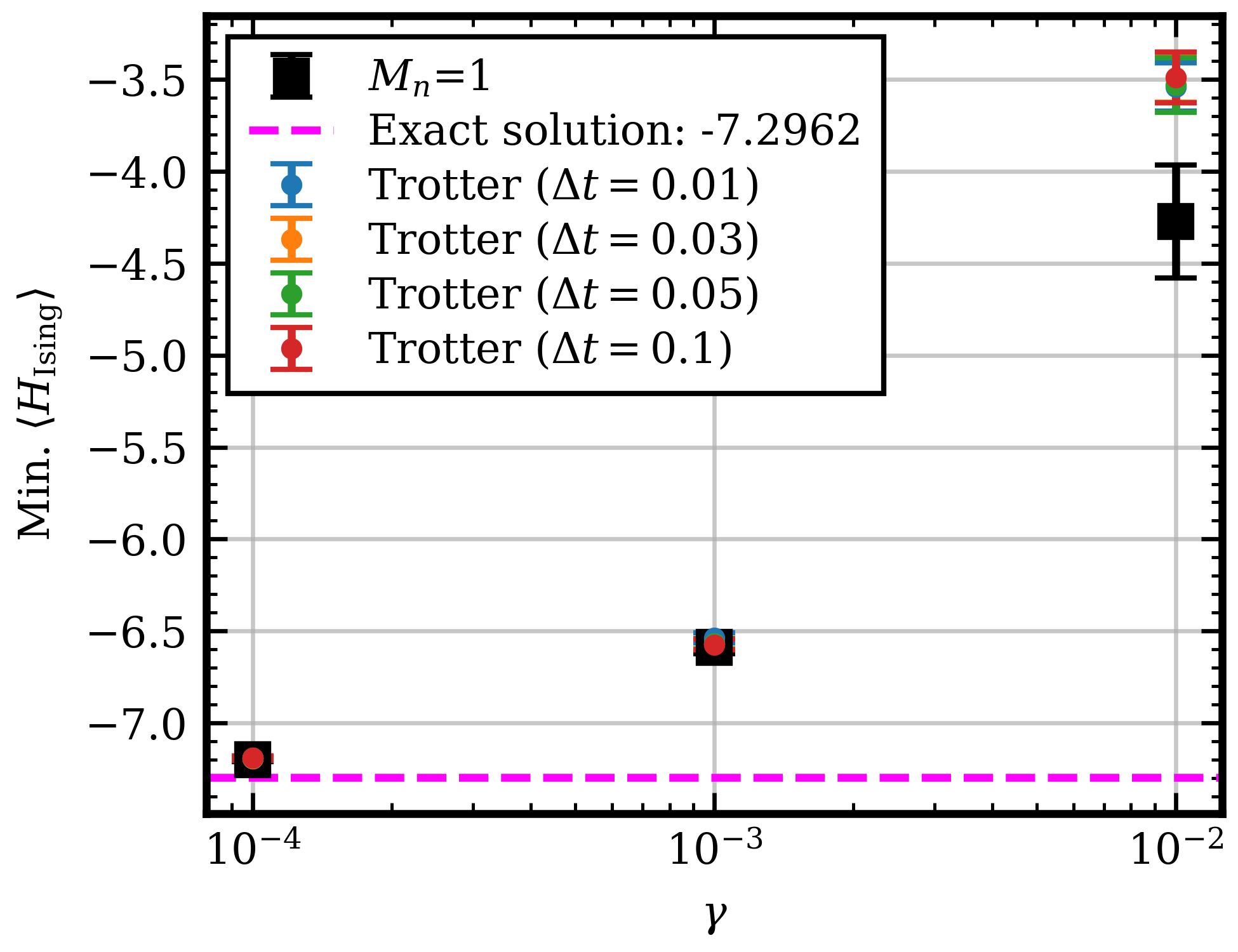}
{A 6-qubit, 30-layer indirect-control VQE ansatz with initial random time parameters sampled from the range $[0, T_\text{max}=20]$ is used. The minimum expectation value of the 6-qubit $H_{\text{Ising}}$ is estimated. The results compare Trotterized time-evolution with a single time-evolution gate subject to local depolarizing noise for various given decoherence rates $\gamma$, where the noise strength scales with the duration of the evolution. \hl{The error bars represent the standard deviation computed from 10 independent trials, each consisting of VQE optimization. The resulting standard deviation reflects the variability arising from the random initialization of the variational parameters.}\label{fig-trotter-vs-no-trotter-gamma}}

\Figure[t!](topskip=0pt, botskip=0pt, midskip=0pt){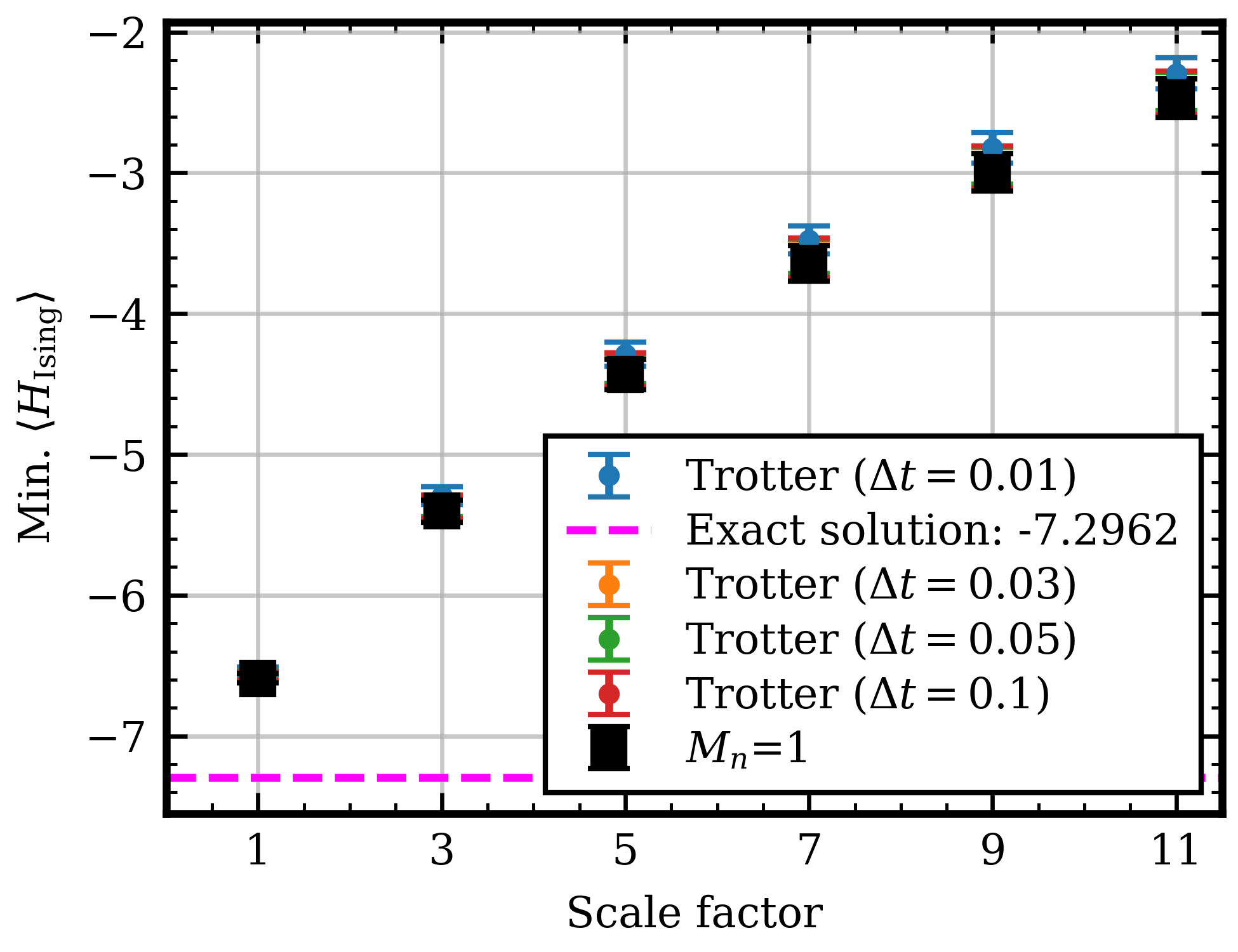}
{A 6-qubit, 30-layer indirect-control VQE ansatz with initial random time parameters sampled from the range $[0, T_\text{max}=20]$ and a decoherence rate of $\gamma = 0.001$ is used. The minimum expectation value of the 6-qubit $H_{\text{Ising}}$ is estimated. The results compare Trotterized time-evolution with a single time-evolution gate subject to local depolarizing noise, where the noise strength scales with the duration of the evolution. Noise levels are amplified using unitary folding. \hl{The error bars represent the standard deviation computed from 10 independent trials, each consisting of VQE optimization. The resulting standard deviation reflects the variability arising from the random initialization of the variational parameters.}\label{fig-trotter-vs-no-trotter}}

The solution of Equation \eqref{eq lindblad eq} can be written formally as 
$e^{(\mathcal{L}_\text{XY}+\mathcal{L}_\text{noise})t}\rho$. However, since computing this matrix numerically is time-consuming, we use an approximation via the Trotter decomposition \cite{Trotter59}:
\begin{equation}
e^{(\mathcal{L}_\text{XY}+\mathcal{L}_\text{noise})t}=\lim_{M\rightarrow \infty}
    \left(
e^{\mathcal{L}_\text{XY}\frac{t}{M}}\circ e^{\mathcal{L}_\text{noise}\frac{t}{M}}\right)^M.
\label{eq-trotter-lindblad}
\end{equation}
\hl{Here, $e^{\mathcal{L}_\text{XY}t}$ and  $e^{\mathcal{L}_\text{noise}t}$ correspond to a unitary time-evolution and a quantum channel representing decoherence, respectively.
They can be written as}
\begin{align}
e^{\mathcal{L}_\text{XY}t}\rho & = e^{-iH_\text{XY}t}\rho e^{iH_\text{XY}t}, \nonumber \\  e^{\mathcal{L}_\text{noise}t}\rho & =(e^{\gamma \mathcal{L}_it})^{\otimes N}(\rho).
\label{eq-lindblad-noise}
\end{align}

\hl{Let us focus on the case of local depolarizing noise, where $\mathcal{L}_i=-\Pi_{XYZ,i}$. 
In this case, we can easily see that $e^{\gamma \mathcal{L}_it}$ is given by $e^{\gamma \mathcal{L}_it}=\mathcal{D}_{1-e^{-\gamma t}}^{(1)}$,}
where $\mathcal{D}_{1-e^{-\gamma t}}^{(1)}$ is a single-qubit depolarizing channel defined by Equation \eqref{eq-noise-process}.
Here, we note that when $\gamma t$ is small, $\mathcal{D}_{1-e^{-\gamma t}}^{(1)}\approx \mathcal{D}_{\gamma t}^{(1)}$ for $\gamma t \ll 1$. 
Therefore, when $t$ is much smaller than $\gamma^{-1}$, that is, when the typical timescale is sufficiently smaller than the inverse of the decoherence rate, the error probability of the local depolarization noise corresponding to $e^{\mathcal{L}_\text{noise}t}$ is given by $\gamma t$.

Let us consider the indirect-control ansatz circuit given by Figure \ref{fig-indirect-vqe} under the decoherence given by Equation \eqref{eq lindblad eq}.
Since we implement single-qubit gates and \hl{$CZ$} gates instantaneously, only the time-evolution gates $U(t_n, t_{n+1})$ are affected by decoherence according to Equation \eqref{eq lindblad eq}. Thus,  
a time-evolution gate $U(t_n, t_{n+1})$ on $n$-th layer is replaced by 
\begin{equation}\label{eq trotterized time-evolution}
\left(
e^{\mathcal{L}_\text{XY}\frac{t_{n+1}-t_n}{M_n}}\circ
(\mathcal{D}_{\gamma (t_{n+1}-t_n)/M_n}^{(1)})^{\otimes N}\right)^{M_n},    
\end{equation}
where error probability $p_n$ is given by
\begin{equation}\label{eq p n gamma}
    p_n=\gamma (t_{n+1}-t_n)/M_n
\end{equation}
Here, we used the assumption that $\gamma (t_{n+1}-t_n)/M_n \ll 1$.
Since the error in the Trotter decomposition given by \hlgreen{Equation \eqref{eq-trotter-lindblad}} is of order \hl{$\mathcal{O}$}$(\frac{(t_{n+1}-t_n)^2}{M_n})$, we will fix $\Delta t:=\frac{t_{n+1}-t_n}{M_n}$ here and choose a different Trotter step $M_n$ for each layer.
Then, for a fixed $\Delta t$, if $t_{n+1}-t_n \ge \Delta t$, we choose $M_n$ such that $M_n = \left \lfloor \frac{t_{n+1}-t_n}{\Delta t} \right \rfloor$ , and if
$t_{n+1}-t_n < \Delta t$, we choose $M_n$ to be $1$. 
Here, since $M_n=1$ for $t_{n+1}-t_n < 2\Delta t$,  
the corresponding noisy time-evolution gate is given by 
$e^{\mathcal{L}_\text{XY}(t_{n+1}-t_n)}\circ
(\mathcal{D}_{\gamma (t_{n+1}-t_n)}^{(1)})^{\otimes N}$,
which is nothing but the noisy time-evolution gate given in Figure \ref{indirect_vqe_redun_ansatz} with noise parameter $p_n$ given by Equation \eqref{eq p n gamma} with $M_n=1$:
\begin{equation}
p_{n} = \gamma(t_{n+1} - t_n).
\label{eq-time-evo-noise}
\end{equation}
Choosing $M_n = 1$ in Equation \eqref{eq trotterized time-evolution} is equivalent to using the approximation
\begin{equation}\label{eq approximation Mn 1}
    e^{(\mathcal{L}_{XY}+\mathcal{L}_\text{noise})(t_{n+1}-t_n)} \approx e^{\mathcal{L}_\text{XY}(t_{n+1}-t_n)}e^{\mathcal{L}_\text{noise}(t_{n+1}-t_n)}.
\end{equation}

\hl{As explained in Appendix \ref{app:trotter_error_explicit}, the above error is bounded by $\frac{1}{2}\gamma (t_{n+1}-t_n)^2(N-1)c_\text{noise}$ in terms of the diamond norm,
where $c_\text{noise}$ is a constant of order $\mathcal{O}(1)$.
Therefore, as long as $t_{n+1}-t_n$ satisfies 
$t_{n+1}-t_n\ll
\left[c_\text{noise}(N-1)\gamma\right]^{-1/2}    
$,} we can use the approximation given by Equation \eqref{eq approximation Mn 1}.
Under this approximation, the noisy time-evolution between $t_n$ and $t_{n+1}$ is given by the unitary gate with local depolarizing noise introduced in Section \ref{zne}.

We present the results of numerical simulations of VQE using indirect-control ansatz with $30$-layer and $6$-qubit in Figures~\ref{fig-trotter-vs-no-trotter-gamma} and
\ref{fig-trotter-vs-no-trotter}, obtained by approximating decoherence using various values of $\Delta t$
and Equation \eqref{eq trotterized time-evolution}. 
Here, $T_\text{max}$ is set to $20$. Thus,
the initial values of the time parameters $\{t_n \}_{n=2}^{30}$ are sampled uniformly from $[0, 20]$, and then reindexed such that $t_{n+1} > t_n$. 
No error mitigation is performed here.

Figure~\ref{fig-trotter-vs-no-trotter-gamma} shows the results of indirect VQE using a standard (non-folded) circuit when the decoherence rate $\gamma$ is $10^{-4}$, $10^{-3}$, and $10^{-2}$.
The x-axis in Figure~\ref{fig-trotter-vs-no-trotter-gamma}  represents the decoherence rate $\gamma$, and the y-axis represents the average value of the target Hamiltonian $H_\text{Ising}$.
From Figure~\ref{fig-trotter-vs-no-trotter-gamma} and \ref{fig-trotter-vs-no-trotter},
we observe that for $\gamma = 10^{-4}$ and $\gamma = 10^{-3}$, 
the VQE results depend very little on $\Delta t$.
On the other hand, when $\gamma =10^{-2}$, it can be seen that differences in $\Delta t$ \hl{lead to} significant differences in the VQE results.
In particular, when $M_n = 1$ is chosen in the Trotter expansion of Equation \eqref{eq trotterized time-evolution} \hl{(}that is, when the approximation in Equation \eqref{eq approximation Mn 1} is used\hl{)} the results differ significantly from those obtained for small $\Delta t$.
\hl{In fact, as mentioned earlier, for the approximation in Equation \eqref{eq approximation Mn 1} to hold, the relationship $t_{n+1}-t_n \ll [c_\text{noise}(N-1)\gamma]^{-\frac{1}{2}}$ must hold. Since $c_\text{noise}\le 4$ for a local depolarizing channel (see Appendix \ref{app:trotter_error_explicit}), when $N=6$ and $\gamma =10^{-2}$, $t_{n+1}-t_n \ll [c_\text{noise}(N-1)\gamma]^{-\frac{1}{2}} \le \sqrt{5}\approx 2.2$, so this relationship is not satisfied, and the approximation in Equation \eqref{eq approximation Mn 1} is considered to fail.
As mentioned earlier, in the VQE algorithm, the time parameters $\{t_n\}_{n=2}^{30}$ are first sampled uniformly from $[0,20]$ and then optimized. }
In our observations, in many cases, even after optimization, the time-parameter $t_n$ does not take values that deviate significantly from the range $[0, 20]$. 
Consequently, $t_{n+1}-t_n$ is \hl{$\mathcal{O}$}$(1)$ in high probability, 
and in fact, in most cases, \hl{the condition $t_{n+1} -t_n \ll 2.2$ is not satisfied.
This is why the result with $M_n = 1$ differs significantly from those obtained for small $\Delta t$ when $\gamma=10^{-2}$.
On the other hand, for $\gamma = 10^{-3}$, $t_{n+1}-t_n \ll [c_\text{noise}(N-1)\gamma]^{-\frac{1}{2}} \le \sqrt{50}\approx 7$. This condition is satisfied in most case, since average of $t_{n+1}-t_n$ is smaller than $1$ in high probability. This is why the result with $M_n = 1$ is not so different from those obtained for small $\Delta t$ when $\gamma=10^{-3}$.}

In Figures~\ref{fig-trotter-vs-no-trotter},  we perform indirect-control VQE using folded circuits with various scale factors of $2K_T+1$, where unitary folding is applied only for time-evolution gates, and  $K_T$ is a folding number of time-evolution gates.
Here, the decoherence rate is fixed at \hl{$\gamma=10^{-3}$.}
As noted above, when $\gamma$ was set to this value, no significant dependence of the VQE results on $\Delta t$ was observed in the standard circuit.
The x-axis in Figure~\ref{fig-trotter-vs-no-trotter} represents the scale factor $2K_T+1$, and the y-axis represents the average value of the target Hamiltonian $H_\text{Ising}$ obtained when performing VQE using the corresponding folded circuit. 

From Figure~\ref{fig-trotter-vs-no-trotter}, we observe that for a given scale factor, 
the VQE results depend very little on $\Delta t$. In particular, the results  for very small $\Delta t$ are nearly identical to those obtained when $M_n$ is set to 1 for all layers.
This suggests that even when simulations are performed using a circuit with $M_n$ set to $1$, where decoherence is represented by adding a local depolarizing channel with $p_n$ given by Equation \eqref{eq-time-evo-noise} following the time-evolution gate as shown in Figure~\ref{indirect_vqe_redun_ansatz}, the results obtained are almost indistinguishable from those obtained by exactly simulating according to the Lindblad equation (Equation  \eqref{eq lindblad eq}).
\hl{On the other hand, the condition $t_{n+1}-t_n \ll [c_\text{noise}(N-1)\gamma]^{-\frac{1}{2}}$ depends
only on the square root of $N$. Hence, although the numerical calculations above were performed for $N=6$, it is expected that the VQE results for $M_n=1$ will be nearly identical to those obtained using the Trotter decomposition even for $N=7$ or $N=8$. }
Therefore, to save computation time, we fix $\gamma$ to be \hl{$10^{-3}$} and simulate decoherence using the folded circuit given by Figure~\ref{indirect_vqe_redun_ansatz} \hl{even for $N=7$ or $N=8$}.

\begin{table}
\caption{\textbf{List of \hl{the symbols}}}
\label{tab1-summary-hyper-param}
\setlength{\tabcolsep}{3pt}
\newcolumntype{L}[1]{>{\raggedright\arraybackslash}p{#1}}
\begin{tabular}{|L{25pt}|L{190pt}|}
\hline
Symbol & Quantity \\
\hline
$\gamma$ & Decoherence rate \\
$p_n$ & Time-dependent local depolarizing noise probability \\
$T_{\text{max}}$ & Upper bound for the initial random time parameters, uniformly sampled from $0$ to $T_{\text{max}}$ \\
$M_n$ & Number of Trotter steps \\
$\Lambda^{(i)}$ & Noise level in circuit $i$ \\
$K_G^{(i)}$ & Folding number for gate $G$ in circuit $i$ \\
$\vec{\Lambda}^{(i)}$ & Vector noise parameter in circuit $i$ \\
$\mathcal{N}_G^{(i)}$ & Number of gate $G$ operations in a layer in circuit $i$ \\
\hline
\end{tabular}
\end{table}
\subsection{Results of ZNE when noise is present only in the time domain}
\label{7q-uni-var-zne}

\hl{Below, we present the results of simulations performed using a $7$-qubit, 30-layer (i.e., 150 variational parameters) indirect-control VQE ansatz circuit followed by ZNE for error mitigation. Local depolarizing noise acts only on the time-evolution gates, while the $R$, $Y$, and $CZ$ gates are assumed to be noise-free. First, we obtain the optimized variational parameters $\boldsymbol{\xi}$ using the indirect-control VQE. To implement ZNE, it is necessary to execute the VQE circuit multiple times using the same optimized variational parameters $\boldsymbol{\xi}$ at different noise levels. We denote this group of circuits as:}
\begin{equation*}
    \left\{ \mathcal{C}(\Lambda^{(1)}, \boldsymbol{\xi}), \mathcal{C}(\Lambda^{(2)}, \boldsymbol{\xi}), \mathcal{C}(\Lambda^{(3)}, \boldsymbol{\xi}), \dots \right\},
\end{equation*}
together with their corresponding expectation value estimations,
\begin{equation*}
    \left\{ E(\Lambda^{(1)}), E(\Lambda^{(2)}), E(\Lambda^{(3)}), \dots \right\},
\end{equation*}
\hl{where $\Lambda^{(i)}$ denotes the corresponding noise level arising from the noisy time-evolution gates and can be quantified simply in terms of the number of time-evolution gates in a layer, since all layers are structurally identical within a given circuit. Equivalently, $\Lambda^{(i)}$ represents the scale factor resulting from folding the time-evolution gates and can be expressed in terms of the folding number as:}
\begin{equation}
    \Lambda^{(i)} = 2K_T^{(i)} + 1
\end{equation}
\hl{where $K_T^{(i)}$ is the folding number of the time-evolution gates when executing the $i$-th circuit. From Equation~\eqref{eq E Ric 0 equal sum}, we see that an extrapolation of order $n$ requires $(n+1)$ data points. We can further organize the measurements as a dataset of noise--expectation value pairs,}
\begin{equation}
    D = \left\{ (\Lambda^{(1)}, E(\Lambda^{(1)})), (\Lambda^{(2)}, E(\Lambda^{(2)})), \dots \right\}.
    \label{eq-zne-dataset}
\end{equation}

\hl{The following results were obtained:}

\Figure[t!](topskip=0pt, botskip=0pt, midskip=0pt){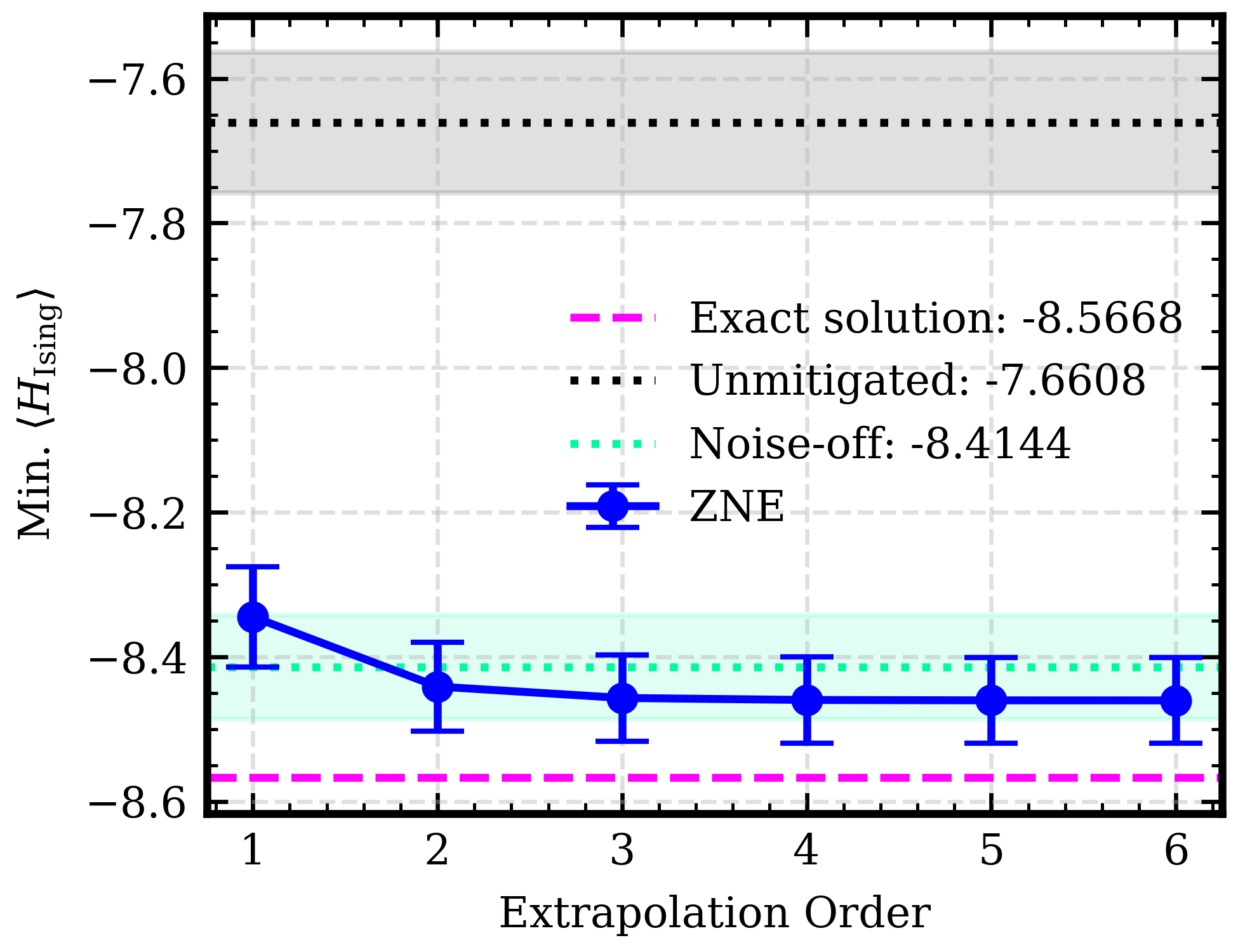}
{Single-variate Richardson extrapolation of different orders is shown. A 7-qubit, 30-layer indirect-control VQE ansatz with initial random time parameters sampled from the range $[0, T_\text{max}=20]$ is used. The minimum expectation value of the 7-qubit $H_{\text{Ising}}$ is estimated. The time-evolution gates are simulated using $M_n=1$ in Equation \eqref{eq trotterized time-evolution}. Local depolarizing noise with $\gamma = 0.001$ in Equation~\eqref{eq-time-evo-noise} is applied only to the multi-qubit time-evolution gate. \hl{The error bars represent the standard deviation computed from 10 independent trials, each consisting of VQE optimization followed by ZNE. The resulting standard deviation reflects the variability arising from the random initialization of the variational parameters.}\label{fig-plots-zne-diff-orders}}

\Figure[t!](topskip=0pt, botskip=0pt, midskip=0pt){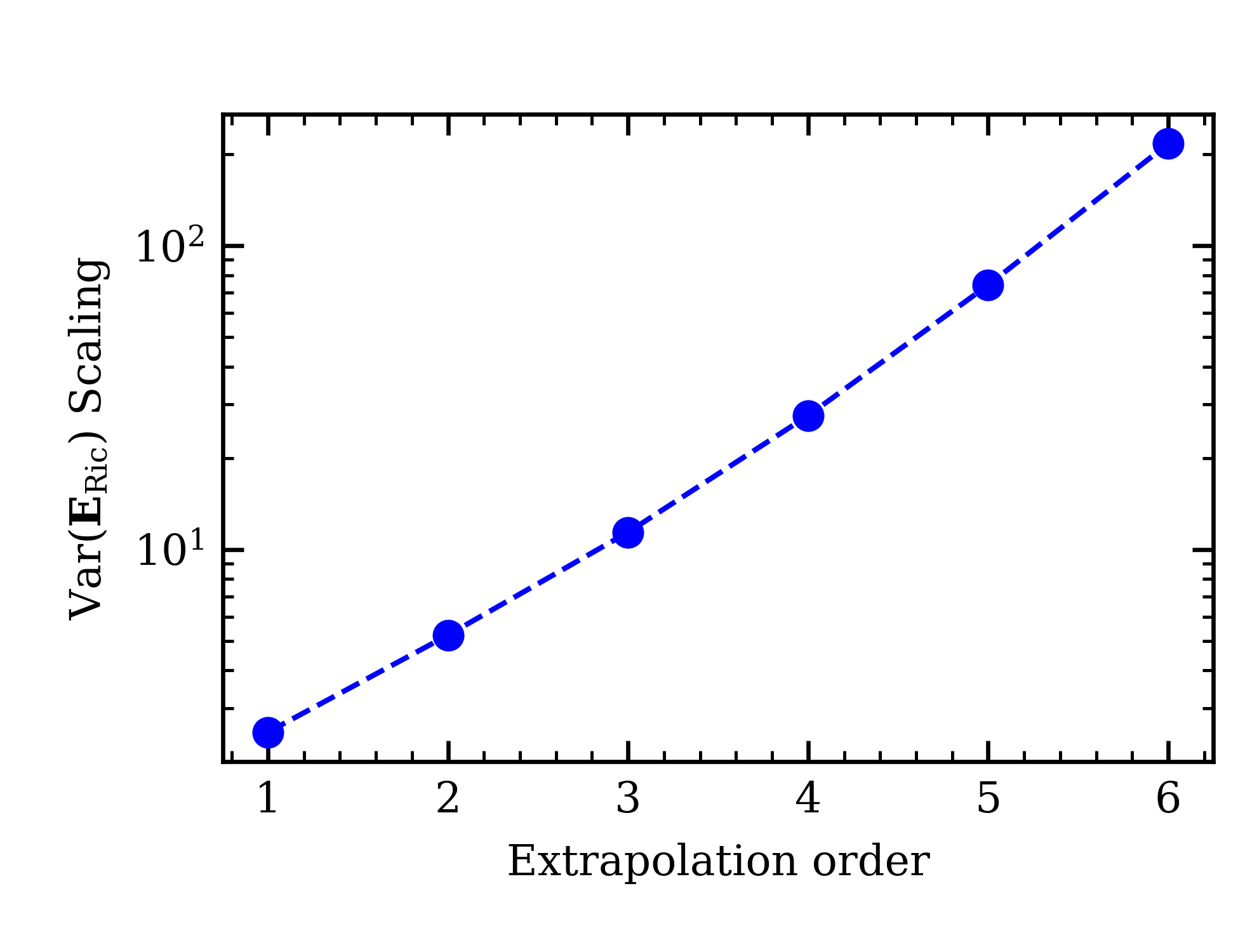}
{\hl{$\operatorname{Var}(\mathbf{E}_{\mathrm{Ric}}(0))$ scales according to Equation~\eqref{eq-vari-ric-uni}, assuming that all noisy expectation values in Equation~\eqref{eq-vari-ric-uni} have a fixed variance. Local depolarizing noise acts only on the time-evolution gates in the 7-qubit, 30-layer system, as discussed in Section~\ref{7q-uni-var-zne}. The minimum expectation value of the 7-qubit $H_{\text{Ising}}$ is estimated under this noise model.}\label{fig-plots-zne-vs-variance}}

\begin{enumerate}
    \item \textbf{Vary Over ZNE Orders:}  We performed indirect-control VQE to estimate the minimum expectation value (ground-state energy) of $\langle H_\text{Ising} \rangle$. Using the same optimized parameters, we applied unitary folding to construct various folded circuits and measured the expectation values at different noise levels. Once the measurement dataset in Equation \eqref{eq-zne-dataset} is obtained, classical post-processing techniques, such as Richardson extrapolation, can be applied. In Figure~\ref{fig-plots-zne-diff-orders} (refer to Table \ref{table-single-var-zne} ), the performance of different orders of Richardson extrapolation is shown for $\gamma =0.001$ and $T_\text{max}=20$. 
    As shown in this plot, when using Richardson \hl{extrapolation} of order two or higher, no significant difference was observed between the estimates obtained using ZNE and those obtained in the absence of noise.
    Thus, we confirm that when the decoherence rate $\gamma$ is sufficiently small (e.g., $\gamma=0.001$),  ZNE can mitigate errors in the expectation value within the statistical uncertainty \hl{range} of the estimation obtained from indirect-control VQE without noisy gates. \hlgreen{Additionally, Figure~\ref{fig-plots-zne-vs-variance} (refer to Table \ref{table-single-var-zne} ) illustrates the near-exponential scaling of the variance in ZNE estimation with increasing extrapolation order, assuming that all noisy expectation values in Equation~\eqref{eq-vari-ric-uni} have a fixed variance.}

\begin{table}
\caption{\hl{\textbf{ZNE results for the 7-qubit, 30-layer indirect-control VQE estimating $\langle H_{\text{Ising}} \rangle$. The ansatz contains 150 variational parameters: 30 time parameters and 120 angle parameters.}}}
\label{table-single-var-zne}
\setlength{\tabcolsep}{3pt}
\begin{tabular}{|p{90pt}|p{71pt}|p{54pt}|}
\hline
Method & Estimated Value (Mean $\pm$ Std. Dev.) $^{\mathrm{1}}$ & \hl{$\operatorname{Var}\big(\mathbf{E}_{\text{Ric}} (0)\big)$ Scaling $^{\mathrm{2}}$}\\

\hline
    Noise-free estimation & $-8.41 \pm 0.07 $ & -- \\
    \hline
    At base noise $\Lambda^{(1)}= 1$ & $-7.66 \pm 0.10 $ & -- \\
    At $\Lambda^{(2)}= 3$ & $-6.29 \pm 0.15 $ & -- \\
    At $\Lambda^{(3)}= 5$ & $-5.18 \pm 0.19 $ & -- \\
    At $\Lambda^{(4)}= 7$ & $-4.28 \pm 0.21 $ & -- \\
    At $\Lambda^{(5)}= 9$ & $-3.54 \pm 0.22 $ & -- \\
    At $\Lambda^{(6)}= 11$ & $-2.94 \pm 0.22 $ & -- \\
    At $\Lambda^{(7)}= 13$ & $-2.45 \pm 0.21 $ & -- \\
    \hline
    ZNE of order 1 & $-8.34 \pm 0.07 $ & $2.50 \times 10^{0}$ \\
    ZNE of order 2 & $-8.44 \pm 0.06 $ & $5.22 \times 10^{0}$ \\
    ZNE of order 3 & $-8.46 \pm 0.06 $ & $1.14 \times 10^{1}$ \\
    ZNE of order 4 & $-8.46 \pm 0.06 $ & $2.76 \times 10^{1}$ \\
    ZNE of order 5 & $-8.46 \pm 0.06 $ & $7.43 \times 10^{1}$ \\
    ZNE of order 6 & $-8.46 \pm 0.06 $ & $2.17 \times 10^{2}$ \\
\hline

\multicolumn{3}{p{215pt}}{\hl{All values are expressed in dimensionless units. The first-order Richardson extrapolation is constructed using the base noise $\Lambda^{(1)} = 1$ and a single amplified noise point at $\Lambda^{(2)} = 3$. The second-order Richardson extrapolation incorporates the base value together with two amplified noise points at $\Lambda^{(2)} = 3$ and $\Lambda^{(3)} = 5$. Higher-order Richardson estimates are obtained in a similar manner by progressively including additional data points corresponding to larger noise amplification factors.}} \\
\multicolumn{3}{p{215pt}}{\hl{$^{\mathrm{1}}$ The error bars represent the standard deviation computed from 10 independent trials, where each trial consists of VQE optimization with a different random initialization of the variational parameters, followed by ZNE using the optimized parameters.} }\\
\multicolumn{3}{p{215pt}}{\hl{$^{\mathrm{2}}$$\operatorname{Var}(\mathbf{E}_{\mathrm{Ric}}(0))$ scales according to Equation~\eqref{eq-vari-ric-uni}, assuming that all noisy expectation values in Equation~\eqref{eq-vari-ric-uni} have a fixed variance.}}\\
\hline
\end{tabular}
\end{table}

    \item \textbf{Varying the Time-Parameter Range:} We sampled the initial random time parameters uniformly from $[0, T_{\text{max}}]$ for $0 < T_{\text{max}} \leq 100$, and for each choice of $T_{\text{max}}$ performed \hl{3rd-order} Richardson extrapolation to obtain ZNE estimates, plotted in Figure~\ref{fig-plots-zne-vs-tmax}. Local depolarizing noise was applied exclusively to the time-evolution gate.  
    In this plot, the colored circles represent the \hl{3rd-order} ZNE estimates obtained from the standard and folding ansatz circuits with noise, while the colored squares represent the VQE estimates from the noise-free standard ansatz circuit. 
    
    First, the results show that the choice of $T_{\text{max}}$ significantly affects the accuracy of the ZNE estimate. Notably, the optimized time parameters remain close to their initial values across all choices of $T_{\text{max}}$ regardless of whether noise is present or absent. This is not imposed by the optimizer constraints, which only enforce positivity and monotonic ordering of the time parameters with no upper bound; rather, it reflects an intrinsic property of the cost landscape, where local minima exist across all time scales and the optimizer consistently converges to a basin near the initialization. This initialization-dependence suggests that the indirect-control VQE energy landscape contains multiple local minima distributed across a wide range of evolution times.

    Next, from this figure, we can see that both the estimate plots obtained from the noise-free circuit and the ZNE estimate plots obtained from the noisy circuit show a sharp initial decrease in estimate values \hl{(}that is, the accuracy of the estimates increases rapidly\hl{)} and then reach a minimum around $T_{\text{\hl{max}}} = 15$. This phenomenon can be explained by the fact that, when $T_{\text{\hl{max}}}$ is small, the time-evolution gate is close to the identity matrix, and the ansatz circuit does not possess sufficient computational power. 
    As $T_{\text{\hl{max}}}$ is increased from that minimum value, the plot of estimates obtained from the noise-free circuit appears to decrease slightly and converge to a constant value. Even when $T_{\text{\hl{max}}}$ is increased, the computational capacity of the ansatz circuit should not decrease. Therefore, this phenomenon indicates that the optimization algorithm has stopped at a local minimum.

    On the other hand, when $T_{\text{\hl{max}}}$ is increased from the minimum value, the plot of the estimates obtained using ZNE from the noisy folding circuits continues to decrease without converging to a constant value. As a result, as $T_{\text{\hl{max}}}$ increases, the difference between the estimates obtained from the noise-free circuit and those obtained using ZNE from the noisy folding circuits monotonically widens. This phenomenon can be explained by the fact that as $T_{\text{\hl{max}}}$ increases, the execution time of the circuits lengthens, and the total amount of decoherence experienced by the circuits increases. Consequently, it is considered that the total amount of error introduced into the circuits increases, causing the accuracy of error mitigation to decrease as $T_{\text{\hl{max}}}$ increases.

    \Figure[t!](topskip=0pt, botskip=0pt, midskip=0pt){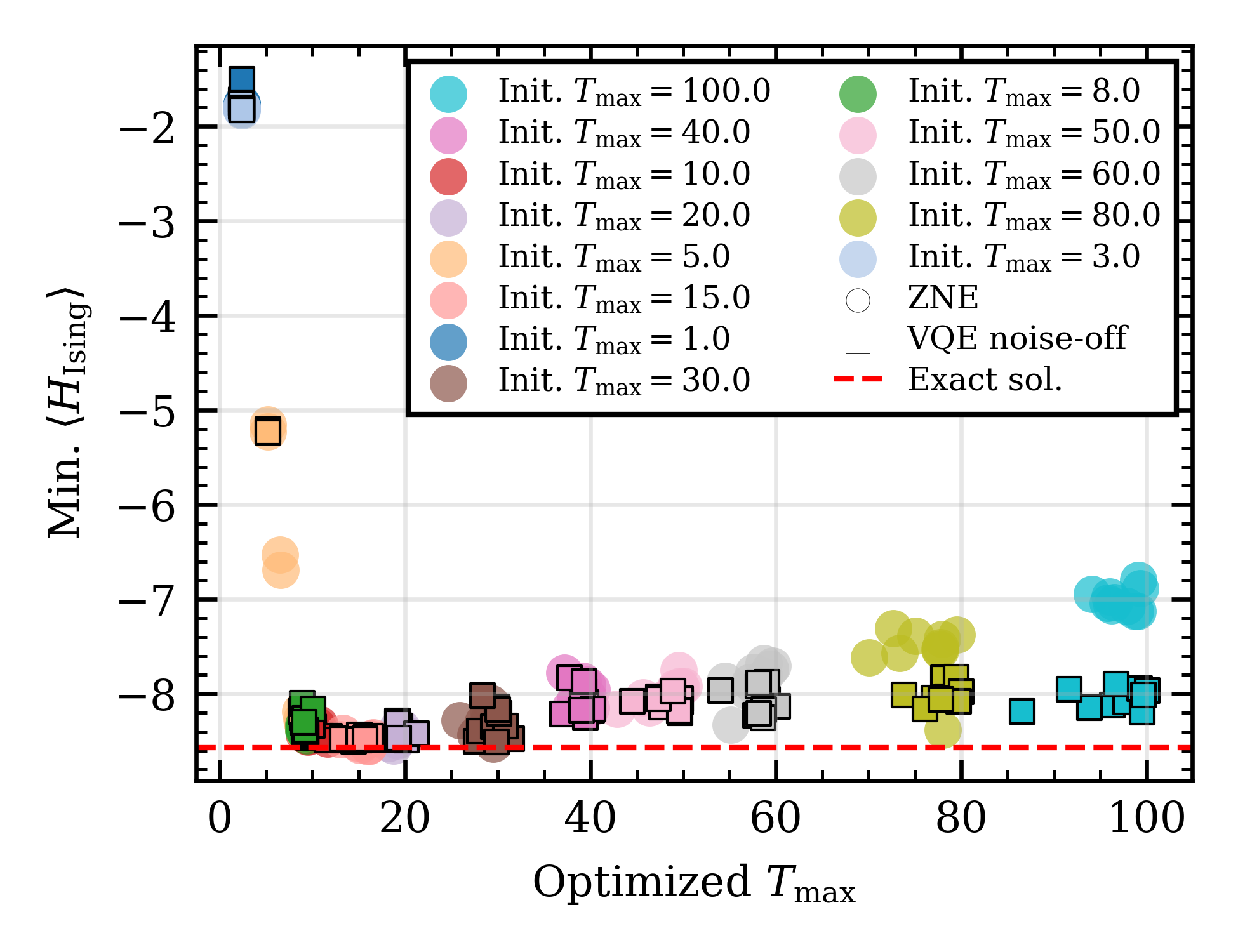}
    {A 7-qubit, 30-layer indirect-control VQE ansatz is used with $\gamma = 0.001$ in Equation~\eqref{eq-time-evo-noise}, while all other gates, including $R, Y$ and $CZ$, are assumed to be noise-free. The minimum expectation value of the 7-qubit $H_{\text{Ising}}$ is estimated. The VQE/ZNE performance for different choices of the initial time-parameter range $[0, T_{\text{max}}]$ is shown. The time-evolution gates are simulated using $M_n=1$ in Equation \eqref{eq trotterized time-evolution}. The colored circles denote \hl{3rd-order} ZNE estimates, whereas the colored squares represent noise-free ansatz VQE estimates. Each color corresponds to a distinct choice of $T_{\text{max}}$ when sampling the initial random time parameters.\label{fig-plots-zne-vs-tmax}}

    \item \textbf{Vary Over Noise Strength:} Subsequently, we fix $T_{\text{max}} = 20$ and vary the decoherence rate $\gamma$. We then apply ZNE using different orders of Richardson extrapolation to investigate the range of noise strengths over which accurate estimation can be recovered. We observe that as $\gamma$ increases, the ZNE estimates progressively deviate from the exact solution as can be confirmed from the plot in Figure \ref{fig-plots-zne-vs-delp}. On the other hand, the improvement in the estimates obtained by using ZNE \hl{(}that is, the difference between the estimates in the unmitigated case and those obtained using ZNE\hl{)} is minimal when $\gamma = 10^{-2}$. At $\gamma = 10^{-1}$, all estimates are nearly zero; this indicates that the decoherence experienced by the circuit is too large, causing the state to approach a completely mixed state. Consequently, it becomes practically impossible to perform any computations.

    \Figure[t!](topskip=0pt, botskip=0pt, midskip=0pt){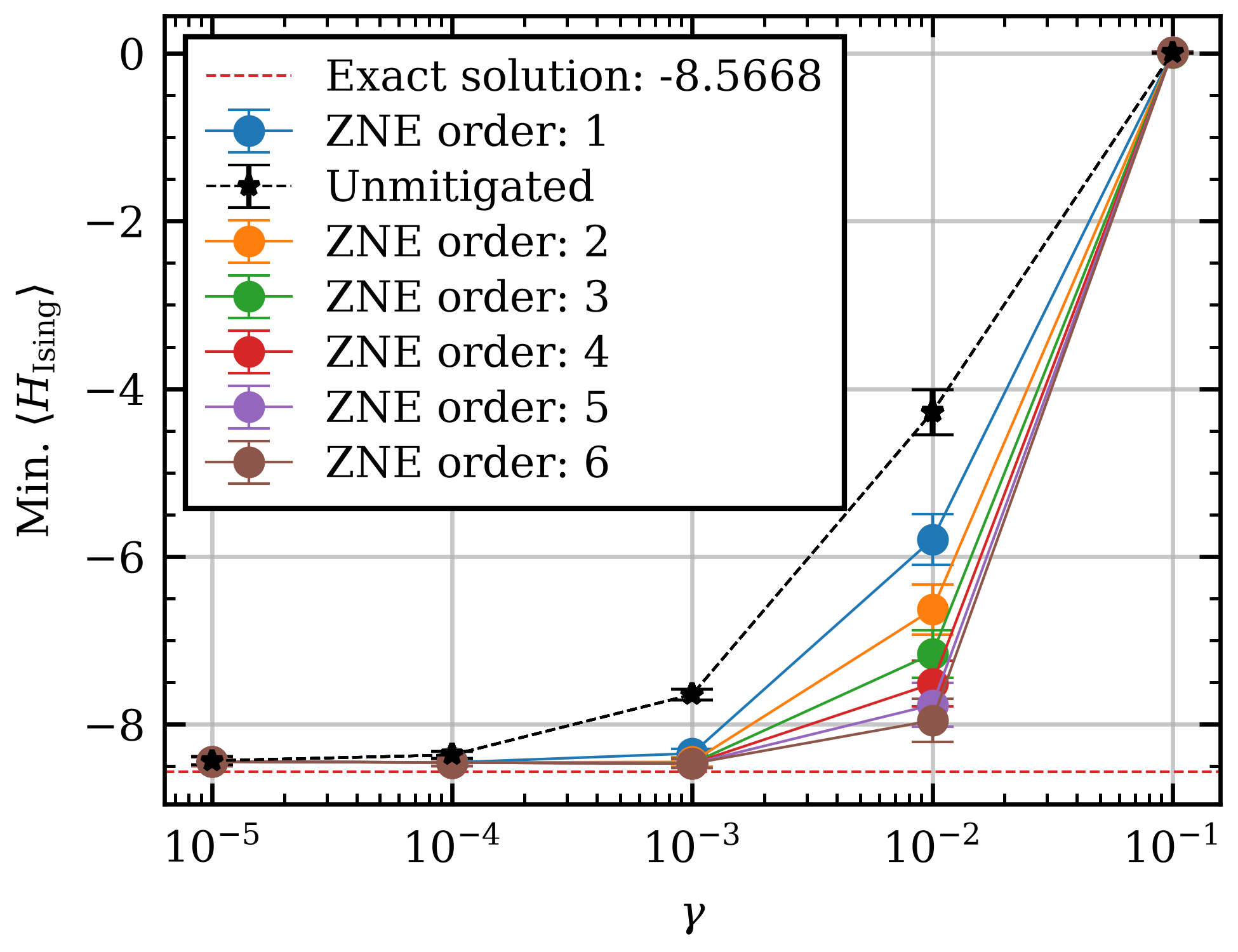}
    {Variation with respect to the parameter $\gamma$ in Equation~\eqref{eq-time-evo-noise}. A 7-qubit, 30-layer indirect-control VQE ansatz is used, while the $R$, $Y$, and $CZ$ gates are assumed to be noise-free. The minimum expectation value of the 7-qubit $H_{\text{Ising}}$ is estimated. Single-variate ZNE of different orders is shown as we vary $\gamma$ in Equation~\eqref{eq-time-evo-noise}, with $T_{\text{max}}=20$. The time-evolution gates are simulated using $M_n=1$ in Equation \eqref{eq trotterized time-evolution}. \hl{The error bars represent the standard deviation computed from 10 independent trials, each consisting of VQE optimization followed by ZNE. The resulting standard deviation reflects the variability arising from the random initialization of the variational parameters.} \label{fig-plots-zne-vs-delp}}

\end{enumerate}

\subsection{\hl{Multivariate ZNE with vector Noise Model}}

Finally, we consider a more realistic scenario in which all gates, including both control unit and time-evolution gates, are subject to local depolarizing noise channels. Single-qubit gates such as $R$ and $Y$ are assigned a depolarizing noise probability of $0.001$. We also set $\gamma = 0.001$ in Equation \eqref{eq-time-evo-noise}, thereby accounting for the noise associated with the time-evolution gate. In contrast, the local depolarizing noise acting on the two-qubit $CZ$ gate is assigned a higher depolarizing noise probability of $0.005$.

\begin{table}
\caption{\hl{\textbf{Multivariate ZNE results for the 8-qubit, 30-layer indirect-control VQE estimating $\langle H_{\text{Ising}} \rangle$. The ansatz contains 150 variational parameters: 30 time parameters and 120 angle parameters.}}}
\label{table-multi-var-zne}
\setlength{\tabcolsep}{3pt}

\begin{tabular}{|p{71pt}|p{71pt}|p{71pt}|}
\hline

Method & Estimated Value (Mean $\pm$ Std. Dev.) $^{\mathrm{1}}$ & ZNE Sampling Overhead $(c)$ $^{\mathrm{2}}$ \\
\hline
Noise-free estimation & $-9.66 \pm 0.05$ & -- \\
\hline
Unmitigated & $-7.03 \pm 0.16$ & -- \\
\hline
ZNE of order 1 & $-8.82 \pm 0.12$ & $1.75 \times 10^{1}$ \\
ZNE of order 2 & $-9.37 \pm 0.09$ & $3.51 \times 10^{2}$ \\
ZNE of order 3 & $-9.52 \pm 0.08$ & $4.35 \times 10^{3}$ \\
ZNE of order 4 & $-9.54 \pm 0.07$ & $3.87 \times 10^{4}$ \\
ZNE of order 5 & $-9.51 \pm 0.07$ & $3.15 \times 10^{5}$ \\
\hline
\multicolumn{3}{p{215pt}}{All values are expressed in dimensionless units. Due to the large number of data points, we do not include them explicitly here.}\\
\multicolumn{3}{p{215pt}}{\hl{$^{\mathrm{1}}$ The error bars represent the standard deviation computed from 10 independent trials, where each trial consists of VQE optimization with a different random initialization of the variational parameters, followed by ZNE using the optimized parameters.}}\\
\multicolumn{3}{p{215pt}}{\hl{$^{\mathrm{2}}$ Refer to Appendix \ref{sec smapling overhead}}}\\
\hline
\end{tabular}
\end{table}
\hlgreen{We scaled up the system to $8$ qubits. First, we perform the indirect-control VQE in the presence of local depolarizing noise on all gates and obtain the optimized variational parameters. We then perform ZNE using the multivariate noise framework. In this case, unitary folding is applied to all gates, with folding numbers $K_{R}^{(i)}$, $K_{CZ}^{(i)}$, $K_{T}^{(i)}$, and $K_{Y}^{(i)}$ for the $R$, $CZ$, time-evolution, and $Y$ gates, respectively, when executing the $i$-th circuit. In our simulation pipeline, folding is applied at the level of individual gates, allowing different scale factors for each gate type. For illustration, in Figure~\ref{indirect_vqe_redun_ansatz}, a 7-qubit folded circuit is specified by the folding numbers $K_R=1$, $K_{CZ}=1$, $K_T=1$, and $K_Y=1$, indicating that the $R$, $CZ$, time-evolution, and $Y$ gates are each folded once. }

To account for the heterogeneous noise contributions arising from different gate types, we employ a multivariate framework of Richardson extrapolation \cite{lre} by defining a vector-valued proxy noise parameter in terms of the number of noisy gates as

\begin{equation}
    \vec{\Lambda}^{(i)} := \bigl(\mathcal{N}_{R}^{(i)}  + \mathcal{N}_{Y}^{(i)},\, \mathcal{N}_{CZ}^{(i)},\, \mathcal{N}_{T}^{(i)}\bigr),
    \label{eq-vector-noise}
\end{equation}
where \hl{ $\mathcal{N}_{R}^{(i)}$, $\mathcal{N}_{CZ}^{(i)}$, $\mathcal{N}_{T}^{(i)}$, and $\mathcal{N}_{Y}^{(i)}$} denote the number of single-qubit rotation gates, two-qubit $CZ$ gates, time-evolution gates acting on \hl{$8$}-qubit, and $Y$ gates in a single layer, respectively\hl{, when executing $i$-th circuit}.  Here, \hl{$\mathcal{N}_{CZ}^{(i)}$ and  $\mathcal{N}_{T}^{(i)}$} are equivalent to the scale factors of a $CZ$-gate and a time-evolution gate, respectively.   
\hl{Following the notation convention used in Appendix~\ref{sec multi variate richardson}, the subscript $i$ is indexed as $i = 1,2,3,\dots$, with $i = 1$ corresponding to the noise level of the unfolded base circuit.}
A detailed explanation of the multivariate Richardson interpolation is provided in Appendix \ref{sec multi variate richardson}.

We plot the ZNE estimates obtained using different orders of extrapolation under this vector-valued noise model in Figure~\ref{fig-plots-zne-diff-orders-multivariate} (refer to Table \ref{table-multi-var-zne}) which confirms that ZNE can mitigate errors in the expectation value \hl{very close to} the statistical uncertainty \hl{range} of the estimation obtained from indirect-control VQE without noisy gates. Finally, we would like to note the total number of circuits used in the results obtained here. In the multivariate Richardson interpolation method proposed by Russo and Mari, interpolation using a $d$\hl{-th order} $l$-variable polynomial requires $\frac{(d+l)!}{d!l!}$ \hl{(refer to Equation \eqref{eq-lre-m})} data points. Consequently, the ZNE runs shown in Figure~\ref{fig-plots-zne-diff-orders-multivariate}  using 1st, 2nd, 3rd, 4th, and 5th-order Richardson interpolation use $4$, $10$, $20$, $35$, and $56$ data points, respectively. For example, in Figure~\ref{fig-plots-zne-diff-orders-multivariate} , the ZNE using 3rd-order Richardson interpolation yields results that are sufficiently close  to the estimate obtained in the absence of noise \hl{($98\%$ relative accuracy with respect to the noise-free estimate; see Equation~\eqref{eq-relative-accuracy})}; this estimate was obtained using  $20$ different folding circuits.

\Figure[t!](topskip=0pt, botskip=0pt, midskip=0pt){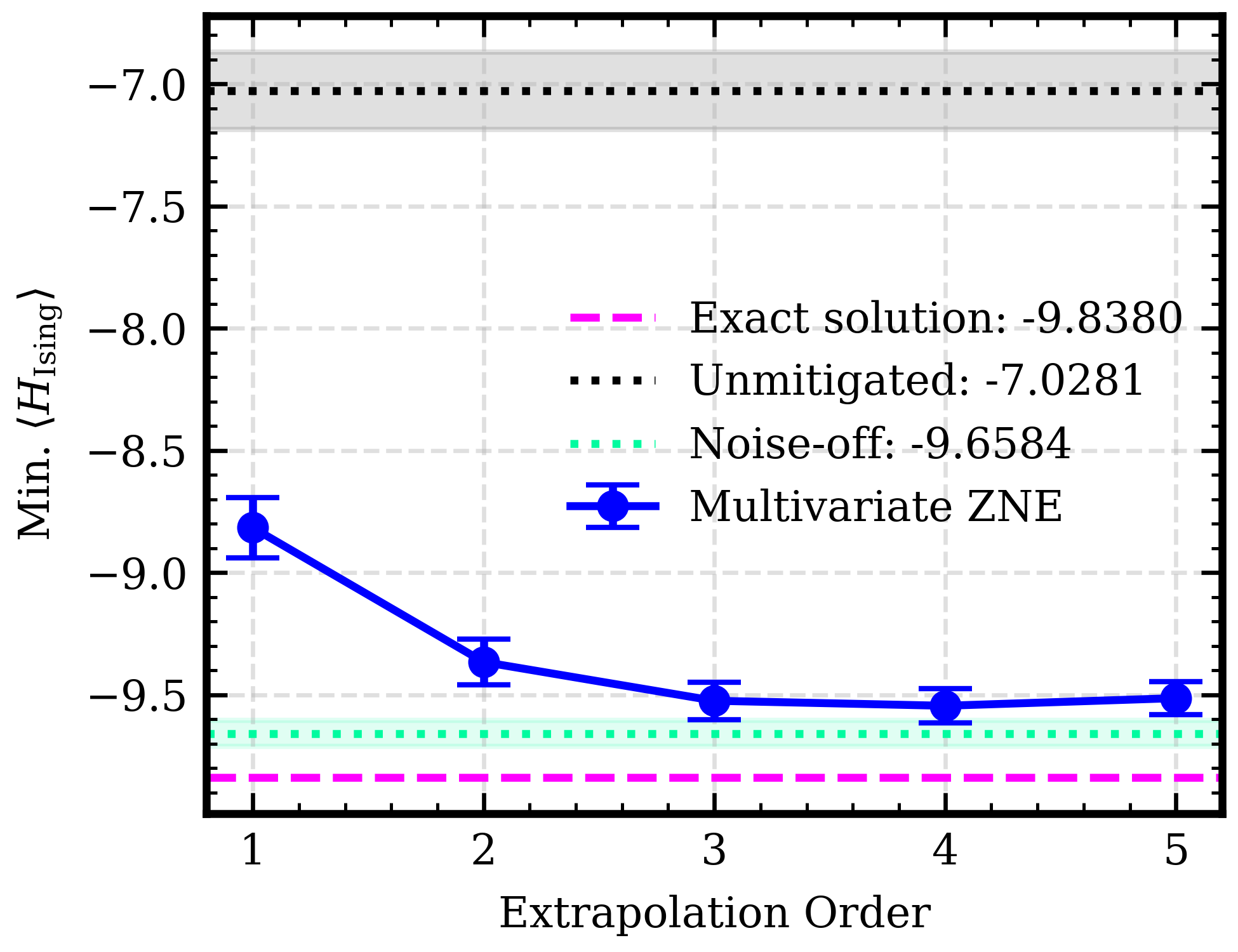}
{\hl{Multivariate Richardson extrapolation of different orders with vector-valued noise. An 8-qubit, 30-layer indirect-control VQE ansatz is used. The minimum expectation value of the 8-qubit $H_{\text{Ising}}$ is estimated. Local depolarizing noise is applied to the entire system with $\gamma = 0.001$ for the time-evolution gate, while the single-qubit gates ($R$, $Y$) are subjected to depolarizing noise with probability $0.001$, and the $CZ$ gate is subjected to local depolarizing noise with probability $0.005$. Initial random time parameters are sampled from the range $[0, T_{\text{max}}=20]$. The time-evolution gates are simulated using $M_n=1$ in Equation \eqref{eq trotterized time-evolution}. The error bars represent the standard deviation computed from 10 independent trials, each consisting of VQE optimization followed by ZNE. The resulting standard deviation reflects the variability arising from the random initialization of the variational parameters.}\label{fig-plots-zne-diff-orders-multivariate}}

\section{\hl{Summary and Conclusion}}\label{conclusion}

Indirect-control systems offer an alternative architecture for quantum computing. In this work, we conducted a comprehensive study of \hl{ZNE} error-mitigation technique for indirect-control VQE. Boosting noise in the ansatz for ZNE by applying unitary folding to the time-evolution gate presented a non-trivial challenge. However, by choosing the XY model as the Hamiltonian for time-evolution and relaxing the definition of indirect-control to allow an appropriate Pauli gate to be applied to odd-numbered qubits before and after the time-evolution gate, it became possible to implement unitary folding effectively. \hl{Although we relaxed the indirect-control constraint by applying $Y$ gates to the odd sites, these gates can, in principle, be selectively addressed by using two types of qubits with different resonance frequencies and applying a resonant pulse for the odd-site qubits to the entire spin chain. The simulation results confirmed that ZNE consistently improved the estimation of the noise-free ground-state energy for both the 7-qubit and 8-qubit transverse-field Ising Hamiltonians using the indirect-control VQE.} 

Unfortunately, the method used in this study clearly relies heavily on the symmetries of the one-dimensional XY model, which is the system Hamiltonian. Therefore, it is not easy to extend this method to quantum many-body systems, such as the Heisenberg model, which possess different symmetries.
Consequently, it is considered difficult to perform ZNE using unitary folding when implementing VQE via indirect-control for general quantum many-body systems that do not possess specific symmetries.
Therefore, to perform ZNE in such cases, it is necessary to employ methods other than unitary folding.
On the other hand, since modifying the drift Hamiltonian itself deviates from the purpose of indirect-control, using pulse stretching on the drift Hamiltonian is likely inappropriate.
Therefore, for quantum many-body systems without special symmetries, exploring methods to implement ZNE using indirect-control via techniques other than unitary folding or pulse stretching, as well as methods to implement probabilistic error cancellation \hl{(PEC) \cite{van_den_Berg_2023}} using indirect-control, is considered an important area for future research.
As a topic for future research,
it would be interesting to investigate other algorithms that can be implemented in indirect-control systems, as well as useful additional error \hl{mitigation} techniques. Furthermore, while implementing indirect-control VQE in experimental setups is valuable, it is first necessary to identify which types of many-body quantum systems are suitable for the indirect-control approach.

In this study, we considered the implementation of ZNE for VQE via indirect-control in the XY model; however, the method presented here can always be adapted to the observable estimates obtained using indirect-control in the XY model. Consequently, it is considered adaptable to variational quantum algorithms that use observable estimates other than those from VQE. 
To this end, it will be important to first conduct research on implementation methods for variational quantum algorithms other than VQE using indirect-control.

\section{\hlred{Further Discussion}}\label{further discussion}

\paragraph{Realistic noise sources}
\hlred{ZNE is a noise-agnostic technique. Our numerical results confirm that ZNE can effectively mitigate local depolarizing noise in indirect-control systems. We have also extended this analysis to local dephasing and local amplitude damping channels, with the corresponding results presented in Appendix~\ref{extending to other noise channels}.}
\begin{table}
\caption{\textbf{Noise types and their impact on indirect-control system.}}
\label{table-hardware-noise}
\setlength{\tabcolsep}{3pt}

\begin{tabular}{|p{65pt}|p{65pt}|p{82pt}|}
\hline
\textbf{Noise Type} & \textbf{Noise Source} & \textbf{Impact on Indirect-Control System} \\
\hline
Local depolarizing & Model noise in this work. & Multivariate ZNE estimates (orders $3$--$5$) agree with the noise-free estimate to $98\%$ relative accuracy.$^1$\\
\hline
Local dephasing$^2$ & Model noise in this work. & Multivariate ZNE estimates (order $3$) agree with the noise-free estimate to $98\%$ relative accuracy. \\
\hline
Local amplitude damping$^2$ & Model noise in this work. & Multivariate ZNE estimates (orders $2,3$) agree with the noise-free estimate to $99\%$ relative accuracy. \\
\hline
\noalign{\hrule height 0.8pt}
Crosstalk\cite{Murali_2020} & Shared microwave/optical control lines. & Would affect the control unit directly; $Y$ gates on odd sites may introduce additional crosstalk during unitary folding. \\
\hline
Leakage \cite{Brown_2019, battistel_mitigating_leakage} & Population transfer to higher levels (e.g., $|2\rangle$). & Corrupts the noise-scaling factor by leaking amplified noise outside the computational subspace. \\
\hline
Coherent errors \cite{Cai_2023} & Pulse over- or under-rotation. & Would affect the rotation gates in control unit. Produces nonlinear noise scaling with unitary folding, reducing ZNE extrapolation accuracy. \\
\hline
\multicolumn{3}{p{215pt}}{$^1$ Relative accuracy is defined in Equation \eqref{eq-relative-accuracy}}\\
\multicolumn{3}{p{215pt}}{$^2$ Refer to Appendix~\ref{extending to other noise channels}.}\\
\hline
\end{tabular}
\end{table}
\hlred{However, the assumptions underlying ZNE, particularly the requirement that the expectation value depends smoothly on the noise level, may not always hold on real quantum hardware, where error sources such as crosstalk, leakage, and coherent control errors may also be present. In the indirect-control setting, these noise sources primarily affect the control unit. Leakage into higher energy levels \cite{battistel_mitigating_leakage} of the control unit, for instance, can invalidate the effective noise-scaling factor assumed by ZNE, as the amplified noise may no longer remain confined to the intended two-level computational subspace spanned by $\ket{0}$ and $\ket{1}$, causing subsequent gate operations to deviate from their intended behavior. Coherent pulse over- or under-rotation errors~\cite{Cai_2023} in the control field can lead to a nonlinear scaling of the error under unitary folding, thereby affecting the accuracy of ZNE. Crosstalk \cite{Murali_2020} arising from shared control lines would in principle affect the control unit directly; additionally, the $Y$ gates used on odd sites during unitary folding may introduce further crosstalk beyond what is present in the unscaled circuit. Addressing these noise sources typically requires complementary error-suppression protocols, such as dynamical decoupling~\cite{Cai_2023} to suppress low-frequency coherent errors, crosstalk-adaptive scheduling (CAS)~\cite{Murali_2020} to suppress crosstalk, Pauli twirling~\cite{Cai_2023} to convert certain coherent crosstalk errors into more stochastic effective noise, and leakage reduction units (LRUs)~\cite{xin2025improvederrorcorrectionleakage} to remove leakage in superconducting quantum processors. However, incorporating such complex noise sources, such as extending the model beyond the two-level computational subspace to account for leakage and introducing physically motivated simulations of coherent errors and crosstalk, as well as adapting the corresponding error-suppression techniques to the indirect-control architecture, requires further investigation and is beyond the scope of the present study.}

\hlred{We would also like to emphasize that a direct experimental assessment of indirect-control VQE (along with ZNE) on quantum hardware is beyond the scope of the present study, as mainstream quantum computing platforms, such as those developed by IBM and Google, are not based on the indirect-control technique. Experimental implementations of indirect-control quantum systems have so far been demonstrated only on a small scale, such as through individual one- and two-qubit gate operations in heteronuclear spin systems~\cite{PhysRevA.91.012330} or a one-qubit quantum memory system~\cite{diamondnv, indirect}, rather than as part of a fully realized variational algorithm such as VQE. We therefore restrict our quantitative studies to indirect-control VQE simulations followed by ZNE under local depolarizing (Section \ref{Numerical simulations}), local dephasing, and local amplitude damping (Appendix \ref{extending to other noise channels}) noise channels. Table~\ref{table-hardware-noise} provides a brief overview of different noise types (both modeled and real) and their potential impact on indirect-control system.}

\paragraph{Contrasting discussion: ZNE, other VQE robustness architectures, and indirect-control}

\hlred{ZNE mitigates errors through classical post-processing and is an appealing technique due to its simplicity of implementation, given that the noise can be controllably scaled. However, its main disadvantage is the exponential growth of the estimator variance, which leads to an increased sampling overhead (see Appendix~\ref{sec smapling overhead}). In addition, bias can arise from a mismatch between the assumed extrapolation model and the true dependence of the observable on the noise strength. On the other hand, various architecture-level strategies have been proposed to address different challenges and performance limitations in VQE, including barren plateaus, optimization difficulties, and sensitivity to noise. For example, gradient evaluation can impose significant computational overhead in VQE (refer to Appendix \ref{comp-cost}). Recent approaches have therefore sought to reduce the dependence on quantum-backend evaluations; for instance, TensorHyper-VQC \cite{Qi_2026} uses low-rank Tensor-Train hypernetworks to generate variational parameters classically, thereby avoiding repeated quantum-backend parameter optimization. Furthermore, noise resilience is  enhanced because gradient updates are  decoupled from quantum measurement stochasticity and hardware imperfections without any additional mitigation overhead \cite{Qi_2026}. Other approaches include symmetry-preserving ansatz preparation with a minimal number of variational parameters \cite{Gard_2020} and symmetry-adapted VQE, which restores the spatial symmetry of the Hamiltonian when the circuit structure breaks this symmetry \cite{PhysRevA.101.052340}. The exponential growth of the Hilbert space and the complexity of gradient estimation also motivate the use of well-prepared initial variational parameters rather than random initialization \cite{McClean_2018, Grant_2019}. These approaches primarily modify the ansatz, parameterization, or optimization procedure to improve trainability, symmetry preservation, or resource efficiency, and may also provide noise resilience. In contrast, the indirect-control approach is a hardware-level control architecture that can improve robustness by restricting external control operations to a small subset of qubits. By restricting external operations to a small subset of qubits while allowing the entire system to evolve under its intrinsic many-body Hamiltonian, it limits the pathways through which control-induced noise can enter the system. These approaches therefore operate at different levels and are complementary rather than mutually exclusive. Investigating the compatibility and potential benefits of combining these strategies with indirect-control VQE is an interesting direction for future work.}

\section*{Experimental data availability}\label{data-availability}

\hl{Simulation data and plots are available at} \url{https://github.com/arijit-ship/indirect-zne-data/}

\section*{Acknowledgment}\label{Acknowledgment}
The \hl{authors} gratefully acknowledge Yoshifumi Nakata and Yasunari Suzuki for insightful discussions and comments. The \hl{authors} also thank Toshifumi Anan and Mako Honda for discussions related to this research project.

\appendices

\section{\break MultiVariate Richardson \hlgreen{Extrapolation}} \label{sec multi variate richardson}

Russo and Mari~\cite{lre} introduced Layer-wise Richardson Extrapolation (LRE) as a generalization of conventional Richardson extrapolation to a multivariate noise framework. Rather than characterizing the total circuit noise by a single global parameter, LRE assigns an independent noise variable to each circuit layer. Noise is then amplified layer by layer, and the resulting expectation values are combined through a linear combination whose coefficients are derived analytically from multivariate Lagrange interpolation theory. This finer-grained noise characterization has been shown to reduce estimation bias compared to standard single-parameter zero-noise extrapolation.
\subsection{Noise Scaling via Layer-wise Folding}
The noise amplification step in LRE is carried out through layer-wise unitary folding. Given an $n$-qubit circuit $\mathcal{C}$ built from $l$ sequential layers $L_1, L_2, \dots, L_l$, the circuit is written as
\begin{equation}
    \mathcal{C} = L_l L_{l-1} \cdots L_2 L_1.
\end{equation}
To scale noise independently across layers, each circuit execution is characterized by a vector of $l$ scale factors
\begin{equation}
    \vec{\lambda}^{(i)} = \left(\lambda_1^{(i)}, \lambda_2^{(i)}, \dots, \lambda_l^{(i)}\right), \qquad \lambda_k^{(i)} \geq 1,
\end{equation}
where $\lambda_k^{(i)}$ controls how aggressively layer $L_k$ is folded in the $i$-th circuit execution. A collection of $N$ such vectors is assembled into the set
\begin{equation}
    \Lambda = \left\{ \vec{\lambda}^{(1)}, \vec{\lambda}^{(2)}, \dots, \vec{\lambda}^{(N)} \right\},
\end{equation}
and each vector generates a distinct noise-scaled circuit, yielding the family
\begin{equation}
    \mathcal{C}^{\Lambda} = \left\{ \mathcal{C}^{\vec{\lambda}^{(1)}}, \mathcal{C}^{\vec{\lambda}^{(2)}}, \dots, \mathcal{C}^{\vec{\lambda}^{(N)}} \right\}.
\end{equation}
Executing every circuit in $\mathcal{C}^{\Lambda}$ and measuring the target observable $O$ yields one expectation value per noise configuration. Stacking these into a column vector gives
\begin{equation}
    \mathbf{z} = \left( \langle O(\vec{\lambda}^{(1)})\rangle,\; \langle O(\vec{\lambda}^{(2)})\rangle,\; \dots,\; \langle O(\vec{\lambda}^{(N)})\rangle \right)^{\!T}.
    \label{eq-lre-vec-exp-val}
\end{equation}

\subsection{Extrapolation via multivariate Lagrange Interpolation}
The post-processing step treats the expectation value as an unknown multivariate polynomial in the $l$ noise variables and exploits Lagrange interpolation to evaluate that polynomial at the zero-noise point $\vec{\lambda} = \vec{0}$. The expectation value is modelled by the most general $l$-variable polynomial of total degree at most $d$,
\begin{equation}
    P(\vec{\lambda}) = \sum_{j=1}^{M} c_j \, \mathcal{M}_j(\vec{\lambda}),
    \label{eq-lre-poly}
\end{equation}
where each $\mathcal{M}_j(\vec{\lambda})$ is a monomial in $l$ variables of total degree $\leq d$, and $c_j \in \mathbb{R}$ are unknown real coefficients. The total number of such monomials is
\begin{equation}
    M = \binom{d + l}{d} = \frac{(d + l)!}{d!\,l!}.
    \label{eq-lre-m}
\end{equation}
As an illustration, for $l = 2$ variables $\vec{\lambda} = (\lambda_1, \lambda_2)$ and degree $d = 2$, one obtains $M = 6$ monomials ordered in graded lexicographic order:
\begin{equation}
    \mathcal{M}(\vec{\lambda},\, d=2) = \{1,\; \lambda_1,\; \lambda_2,\; \lambda_1^2,\; \lambda_1\lambda_2,\; \lambda_2^2\}.
\end{equation}
The monomials are ordered so that the constant term $1$ occupies the first position; this convention is important for the zero-noise extrapolation step below.
Evaluating all $M$ monomials at every noise vector in $\Lambda$ defines the $N \times M$ sample matrix

\begin{equation}
    \mathbf{A}(\Lambda, d) = \bigl[a_{ij}\bigr]_{\substack{i=1,\ldots,N \\ j=1,\ldots,M}}, 
    \quad a_{ij} = \mathcal{M}_j(\vec{\lambda}^{(i)}, d).
\end{equation}
Requiring the polynomial in Equation~\eqref{eq-lre-poly} to pass through all $N$ measured points $\{(\vec{\lambda}^{(i)},\, \langle O(\vec{\lambda}^{(i)})\rangle)\}$ translates the fitting problem into the linear system
\begin{equation}
    \mathbf{A}\,\mathbf{c} = \mathbf{z},
    \label{eq-lre-linsys}
\end{equation}
where $\mathbf{c} = (c_1, c_2, \dots, c_M)^T$ collects the polynomial coefficients and $\mathbf{z}$ is the vector of measured expectation values defined in Equation \eqref{eq-lre-vec-exp-val}. For a unique solution to exist, the system must be square with a non-singular coefficient matrix, which requires
\begin{equation}
    N = M \quad \text{and} \quad \det(\mathbf{A}) \neq 0.
\end{equation}
The desired quantity is the zero-noise limit $O_{\text{LRE}} = P(\vec{0}) = c_1$, i.e., the coefficient of the constant monomial. Rather than solving the full system~\eqref{eq-lre-linsys} for all of $\mathbf{c}$, one can recover this single coefficient directly via the multivariate Lagrange interpolation formula~\cite{Saniee}. The interpolating polynomial can be written as
\begin{equation}
    P(\vec{\lambda}) = \sum_{i=1}^{M} \langle O(\vec{\lambda}^{(i)}) \rangle \, \frac{\det\!\left(\mathbf{M}_i(\vec{\lambda})\right)}{\det(\mathbf{A})},
\end{equation}
where $\mathbf{M}_i(\vec{\lambda})$ is the matrix formed by taking $\mathbf{A}$ and replacing its $i$-th row with the monomial row $(\mathcal{M}_1(\vec{\lambda},d),\dots,\mathcal{M}_M(\vec{\lambda},d))$ evaluated at the query point $\vec{\lambda}$. This construction guarantees exact interpolation: evaluating at any $\vec{\lambda}^{(j)} \in \Lambda$ collapses the determinant ratio to the Kronecker delta $\delta_{ij}$, since a repeated row makes $\det(\mathbf{M}_i(\vec{\lambda}^{(j)})) = 0$ for $i \neq j$ while $\det(\mathbf{M}_j(\vec{\lambda}^{(j)})) = \det(\mathbf{A})$.
Setting $\vec{\lambda} = \vec{0}$ to extract the zero-noise limit, all non-constant monomials vanish. With the constant term placed first in the monomial ordering, the $i$-th modified row reduces to the unit vector $\mathbf{e}_1 = (1, 0, \dots, 0)^T$, and the mitigated estimator becomes
\begin{equation}
    O_{\text{LRE}} = \sum_{i=1}^{M} \eta_i \,\langle O(\vec{\lambda}^{(i)}) \rangle,
    \qquad \eta_i = \frac{\det\!\left(\mathbf{M}_i(\vec{0})\right)}{\det(\mathbf{A})}.
    \label{eq-lre-estimator}
\end{equation}
The coefficients $\{\eta_i\}$ depend only on the chosen noise scaling vectors $\Lambda$ and the extrapolation degree $d$; they are entirely independent of the quantum hardware and can be precomputed classically before any circuit is executed.

Although we did not employ the LRE approach for noise scaling in our simulations, the formula in Equation~\eqref{eq-lre-estimator} remains applicable as long as a vector-valued noise parameter can be constructed, as discussed in Equation \eqref{eq-vector-noise}.

\section{\break Perturbative Commutation of Local Depolarizing Noise with Multi-Qubit Unitaries}
\label{appendix local depolarizing}

In this study, we performed simulations under the assumption that the noise entering the ansatz circuit is provided via a local depolarizing channel. In this appendix, we analyze how the scaling law for the noise parameter is derived when unitary folding is applied in such a scenario.

We consider an $N$-qubit circuit $G$ composed of $N$-qubits unitary gates $\{G_i\}_{i=1}^L$:
$G = G_L G_{L-1} \cdots G_1$.
After each gate $G_i$, we insert a local depolarizing channel
with noise parameter $p_i$
\[
{\mathcal D}_i := \bigotimes_{k=1}^{N} {\mathcal D}^{(1)}_{p_i},
\]
where the single-qubit depolarizing channel ${\mathcal D}^{(1)}_{p}$ is given by
Equation \eqref{eq-noise-process}; that is, we allow the noise parameter to depend on $i$. 
Then, the time-evolution on the noisy circuit is given by 
\begin{equation}\label{eq app full noisy circuit}
\mathcal E
:=
\mathcal D_L \circ [G_L]
\circ \cdots
\circ
\mathcal D_1 \circ [G_1].    
\end{equation}
Here, we assume that the total amount of noise entering the circuit is small:
$\sum _{i=1}^L p_i \ll 1$.

Here, we introduce the notations of the so-called Liouville representation.
As is well known, under appropriate normalization, 
the single-qubit Pauli matrices including an identity matrix form an orthonormal basis for the $4$-dimensional 
real Hilbert-Schmidt space consisting of all $2\times 2$ Hermitian matrices.
Here, let $\Pi$ denote the projector onto the 3-dimensional subspace of the Hilbert-Schmidt space consisting of the Pauli $X$, $Y$, and $Z$ matrices. 
Then, a single qubit depolarizing channel can be expressed using $\Pi$ as $\mathcal{D}^{(1)}(p)=\mathcal{I}_1-p\Pi$,
where $\mathcal{I}_1$ is an identity map on a single qubit Hilbert-Schmidt space.
Thus the full $i$-th local depolarizing channel $\mathcal{D}_i$ can be written as 
\[
\mathcal{D} _i
=
\mathcal{I}_N - p_i \sum_{k=1}^N \Pi_k + \mathcal{O}(p_i^2).
\]
where $\Pi_k$ is the projector 
onto Pauli matrices on qubit $k$, and $\mathcal{I}_N$ is 
an identity map on an $N$-qubit Hilbert-Schmidt space.
By defining $\Pi$ as $\Pi = \sum_{k=1}^n \Pi_k
$, we can rewrite the above equation as follows:
\[
\mathcal{D} _i = \mathcal{I}_N - p_i  \Pi + \mathcal{O}(p_i^2).
\]

First, we consider a single unitary $[G_i]$ and
compare ${\mathcal D}_i \circ [G_i]$ and $[G_i] \circ {\mathcal D}_i.
$
Since they can be expanded as 
\begin{align}\label{eq app mathcal D i circ G i}
{\mathcal D}_i \circ [G_i]
=
(\mathcal{I}_N - p_i\Pi)[G_i] + \mathcal{O}(p_i^2),\nonumber \\
[G_i] \circ {\mathcal D}_i
=
[G_i](\mathcal{I}_N - p_i\Pi) + \mathcal{O}(p_i^2).    
\end{align}
Thus, we can evaluate the commutator $[{\mathcal D}_i,[G_i]]$ in the first order of $p_i$ as
\[
[{\mathcal D}_i,[G_i]]
=
-p_i(\Pi[G_i] - [G_i]\Pi)
+ \mathcal{O}(\|\vec{p}\|^2),
\]
where $\vec{p}:=(p_1,\cdots, p_L)$.
Since $[G_i]$ acts as an orthogonal transformation
on the real Hilbert-Schmidt space, $\Pi[G_i]- [G_i]\Pi = \mathcal{O}(1)$.
Hence, we have $[{\mathcal D}_i,[G_i]]= \mathcal{O}(p_i)$, which means
the non-commutativity is first order in $p_i$.

Now, we consider the full circuit given by Equation \eqref{eq app full noisy circuit}.
Using Equation \eqref{eq app mathcal D i circ G i}
and keeping only first order terms,
$\mathcal E$ can be rewritten as
\begin{align}
\mathcal E
&=
[G]
-
 \sum_{i=1}^L
p_i [G_L] \cdots
[G_{i+1}]
\Pi
[G_i]
\cdots
[G_1]
+ \mathcal{O}(\|\vec{p}\|^2) \nonumber \\
&=
\left(\mathcal{I}_N-
 \sum_{i=1}^L
p_i 
\tilde{\Pi}_i
 \right)[G]
+ \mathcal{O}(\|\vec{p}\|^2),
\end{align}
where $\tilde{\Pi}_i$ is given by
\begin{equation}
\tilde{\Pi}_i
=
[G_L] \cdots
[G_{i+1}]
\Pi
[G_{i+1}]^\dagger
\cdots
[G_L]^\dagger.    
\end{equation}
Therefore, in a first-order approximation for parameter $\vec{p}$, the effect of the local depolarizing channel inserted after each unitary gate $G_i$ in a noisy circuit is equivalent to the effect of the following effective noise channel $\mathcal D_{\text{eff}}(\vec{p})$ inserted at the end of the corresponding noise-free circuit:
\begin{equation}
    \mathcal D_{\text{eff}}(\vec{p})
=\mathcal{I}_N-
 \sum_{j=1}^L
p_j 
\tilde{\Pi}_j.
\end{equation}

\subsection{Implication for Unitary Folding}
Next, we consider the implication for unitary folding.
Under $(2K+1)$-fold unitary folding,
$\mathcal{D}_i\circ [G_i]$ is replaced by 
$(\mathcal{D}_i\circ [G_i]^\dagger \circ \mathcal{D}_i\circ [G_i])^K\circ [G_i]$. 
This map can be rewritten as 
\begin{align}
   &(\mathcal{D}_i\circ [G_i]^\dagger \circ \mathcal{D}_i\circ [G_i])^K\circ \mathcal{D}_i[G_i] \nonumber \\
   =&
   \left((\mathcal{I}_N-p_i\Pi) [G_i]^\dagger  (\mathcal{I}_N-p_i\Pi) [G_i]\right)^K
   \nonumber \\
& \qquad   \circ (\mathcal{I}_N-p_i\Pi) [G_i]
+\mathcal{O}(p_i^2)\nonumber \\
=&\left(\mathcal{I}_N
-p_i\left(
(K+1)\Pi+K[G_i]^\dagger \Pi [G_i]\right)\right)[G_i]
+\mathcal{O}(p_i)^2
\end{align}
Then, in this case, the time-evolution $\mathcal{E}_{2K+1}$ on the noisy circuit under $2K+1$-fold unitary folding is given by 
\begin{align}\label{eq app mathcal E unitary folding}
   & \mathcal{E}_{2K+1} \nonumber \\
=&[G]
-\sum_{i=1}^L p_i
[G_L]\cdots [G_{i+1}]\big(
(K+1)\Pi+K[G_i]^\dagger \Pi [G_i]\big)\nonumber \\
&\qquad \qquad \cdot [G_i]\cdots
[G_1]
+\mathcal{O}(p_i)^2\nonumber \\
=& \Big(\mathcal{I}_N
-\sum_{i=1}^L
p_i((K+1)\tilde{\Pi}_i + K\breve{\Pi}_i)
\Big)[G],
\end{align}
where $\breve{\Pi}_{i}$ is defined by 
\begin{align}
\breve{\Pi}_{i}
=
[G_L] \cdots
[G_{i+1}]
[G_i]^\dagger\Pi[G_i]
[G_{i+1}]^\dagger
\cdots
[G_L]^\dagger.    
\end{align}

Equation \eqref{eq app mathcal E unitary folding} shows that 
in a first-order approximation for parameter $\vec{p}$, the effect of the local depolarizing channel in the noisy circuit with $2K+1$-fold unitary folding is equivalent to the effect of the following effective noise channel ${\mathcal D}_{\text{eff},K}(\vec{p})$ inserted at the end of the corresponding noise-free circuit:
\begin{equation}
{\mathcal D}_{\text{eff},K}(\vec{p})=    \mathcal{I}_N
-\sum_{i=1}^L
p_i((K+1)\tilde{\Pi}_i + K\breve{\Pi}_i).
\end{equation}

Thus, the effect of local depolarizing noise on the circuit without folding is given by ${\mathcal D}_{\text{eff}}(\vec{p})$, and that on the circuit with $2K+1$ folding is given by 
${\mathcal D}_{\text{eff},K}(\vec{p})$.
However, it is clear that, in general, ${\mathcal D}_{\text{eff},K}(\vec{p})$ is not equal to 
${\mathcal D}_{\text{eff}}((2K+1)\vec{p})$.
Taking into account that 
both ${\mathcal D}_{\text{eff},K}(\vec{p})$
and ${\mathcal D}_{\text{eff}}((2K+1)\vec{p})$
contain only first-order terms in $\vec{p}$,
we can conclude that the noise strength does not scale
linearly with the number of insertions even in the first order of noise parameter $\vec{p}$ in our problem setting, where the ansatz circuit consisting of $N$-qubit gates and noise is given by the local depolarizing channel.

\section{\break \hlred{Quantitative analysis}} \label{sec Quantitative analysis}
\hlred{
To perform multivariate ZNE at a fixed extrapolation order, we require $M$ circuits (refer to Equation~\eqref{eq-lre-m}). In this section, we provide an additional quantitative analysis of the overhead associated with the execution of these additional circuits for error mitigation of the indirect-control VQE.
}

\subsection{\hlred{Sampling Overhead for Multivariate ZNE}} \label{sec smapling overhead}
\hlred{All simulations presented in this study were performed using density matrices, and the expectation values were evaluated deterministically from the resulting noisy quantum states, without finite-shot sampling. However, on real quantum hardware, the number of measurement shots required to achieve a fixed statistical error generally increases with the extrapolation order.
Let $s_u$ denote the number of shots required to estimate the unmitigated expectation value of a target observable using the noisy VQE estimator. For a given order of ZNE, to evaluate Equation \eqref{eq-lre-estimator}, the total number of shots, $s_{\mathrm{tot}}$, required to achieve the same statistical error is greater than $s_u$. The sampling overhead of ZNE is defined as the ratio $s_{\mathrm{tot}}/s_u$ \cite{lre}.

From Equation~\eqref{eq-lre-estimator}, the variance of ZNE estimation is
}
\begin{equation}
    \operatorname{Var}[O_{\mathrm{LRE}}]
    =
    \sum_{i=1}^{M}
    \eta_i^2\,
    \operatorname{Var}\!\left[\langle O(\vec{\lambda}^{(i)}) \rangle\right].
    \label{eq-lre-var}
\end{equation}
\hlred{
Assuming that $\operatorname{Var}[\langle O(\vec{\lambda}^{(i)}) \rangle]$ is identical for all circuits, and that each of the $M$ circuits is measured using $s_\mathrm{eq} = s_{\mathrm{tot}}/M$ shots, the sampling overhead is given by \cite{lre,Cai_2023}}
\begin{equation}
    c := \frac{s_{\text{tot}}}{s_{\text{u}}} = M\ \Gamma^2, \quad \Gamma := \left( \sum_{i=1}^M |\eta_i|^2 \right)^{\frac{1}{2}}, \quad s_\mathrm{eq} = \frac{s_{\text{tot}}}{M}.
    \label{eq-lre-sampling-c-bar}
\end{equation}
\hlred{
In \cite{lre}, an optimal shot-allocation strategy was proposed to further reduce the sampling overhead. Rather than distributing the measurement budget equally among the $M$ circuits, the number of shots allocated to the $i$-th circuit is chosen to be proportional to the magnitude of its extrapolation coefficient, i.e. $s_i \propto |\eta_i|$. Now the optimal sampling overhead can be written as \cite{lre, Cai_2023, PhysRevLett.119.180509}}
\begin{equation}
    c_{\mathrm{opt}} := \frac{s_{\text{tot}}}{s_{\text{u}}} = \Gamma_{\mathrm{opt}}^2, \quad \Gamma_{\mathrm{opt}} := \sum_{i=1}^M |\eta_i|, \quad s_i = \frac{s_{\text{tot}} |\eta_i|}{\Gamma_{\mathrm{opt}}}.
    \label{eq-lre-sampling-c}
\end{equation}
\hlred{
We note that the unmitigated VQE expectation value obtained using $s_u$ shots cannot be directly used as a data point for ZNE. Instead, at the final converged variational parameters, the unfolded circuit must be re-executed with an increased shot allocation: $s_{\mathrm{eq}} = \Gamma^2 s_u$ per circuit under equal shot allocation, or $s_1 = \Gamma_{\mathrm{opt}}|\eta_1| s_u$ under optimal allocation, to satisfy the total sampling budget requirements ($s_{\mathrm{tot}} =c s_u$ or $s_{\mathrm{tot}} = c_{\mathrm{opt}} s_u$) established in Equations~\eqref{eq-lre-sampling-c-bar} and \eqref{eq-lre-sampling-c}. 

Figure~\ref{fig-plots-zne-overhead-multivariate} (see Table~\ref{table-multi-var-zne}) illustrates the ZNE sampling overhead for an 8-qubit, 30-layer indirect-control ansatz with the Ising Hamiltonian, $H_{\text{Ising}}$, as the target observable under a local depolarizing noise channel. }

\Figure[t!](topskip=0pt, botskip=0pt, midskip=0pt){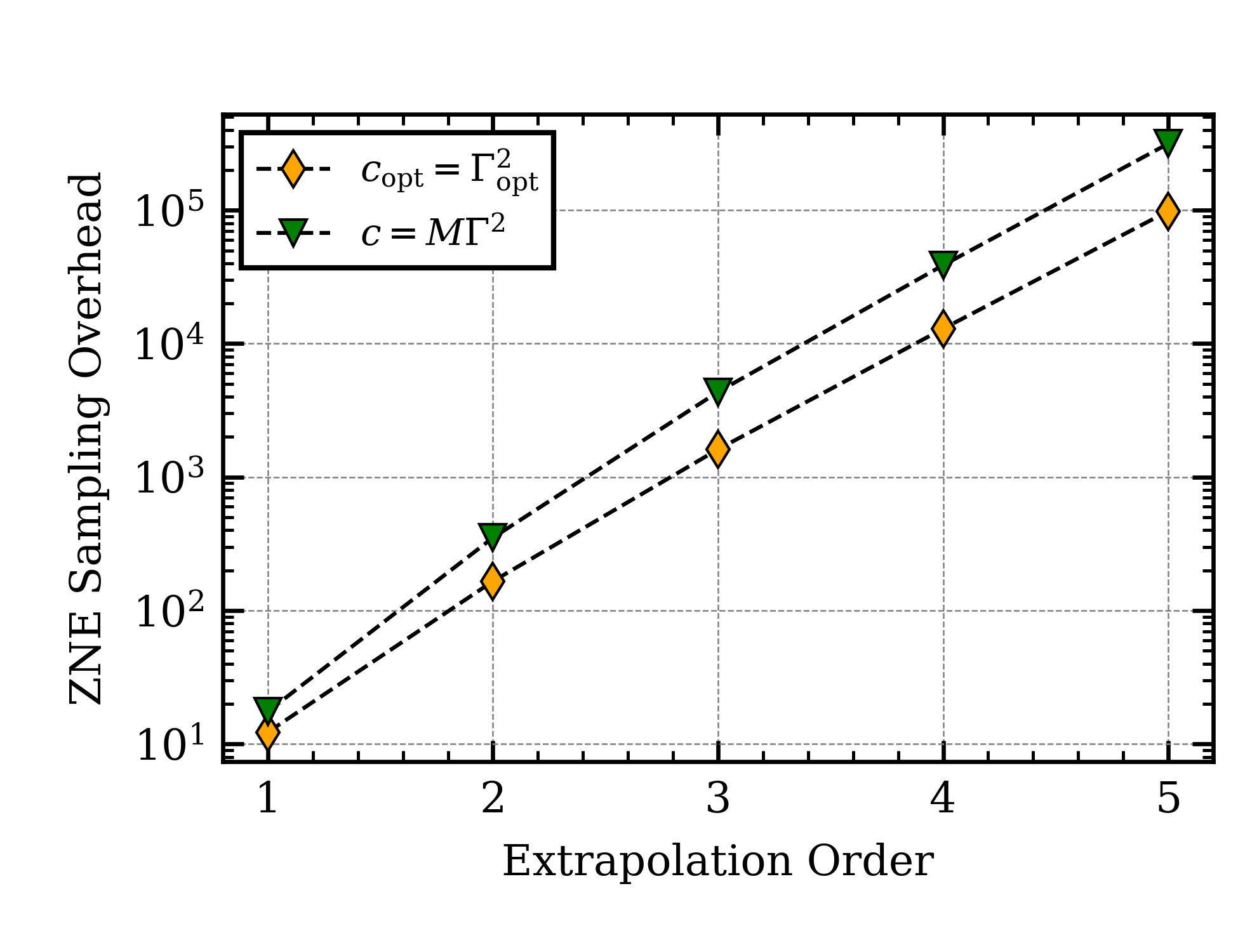}
{\hlred{Exponential scaling of the sampling overhead with the multivariate ZNE extrapolation order for an $8$-qubit, 30-layer system. The minimum expectation value of the $8$-qubit $H_{\text{Ising}}$ is estimated.} \label{fig-plots-zne-overhead-multivariate}}

\subsection{\hlred{ZNE Circuit-layer Depth and Gate Overhead}}
\Figure[t!](topskip=0pt, botskip=0pt, midskip=0pt){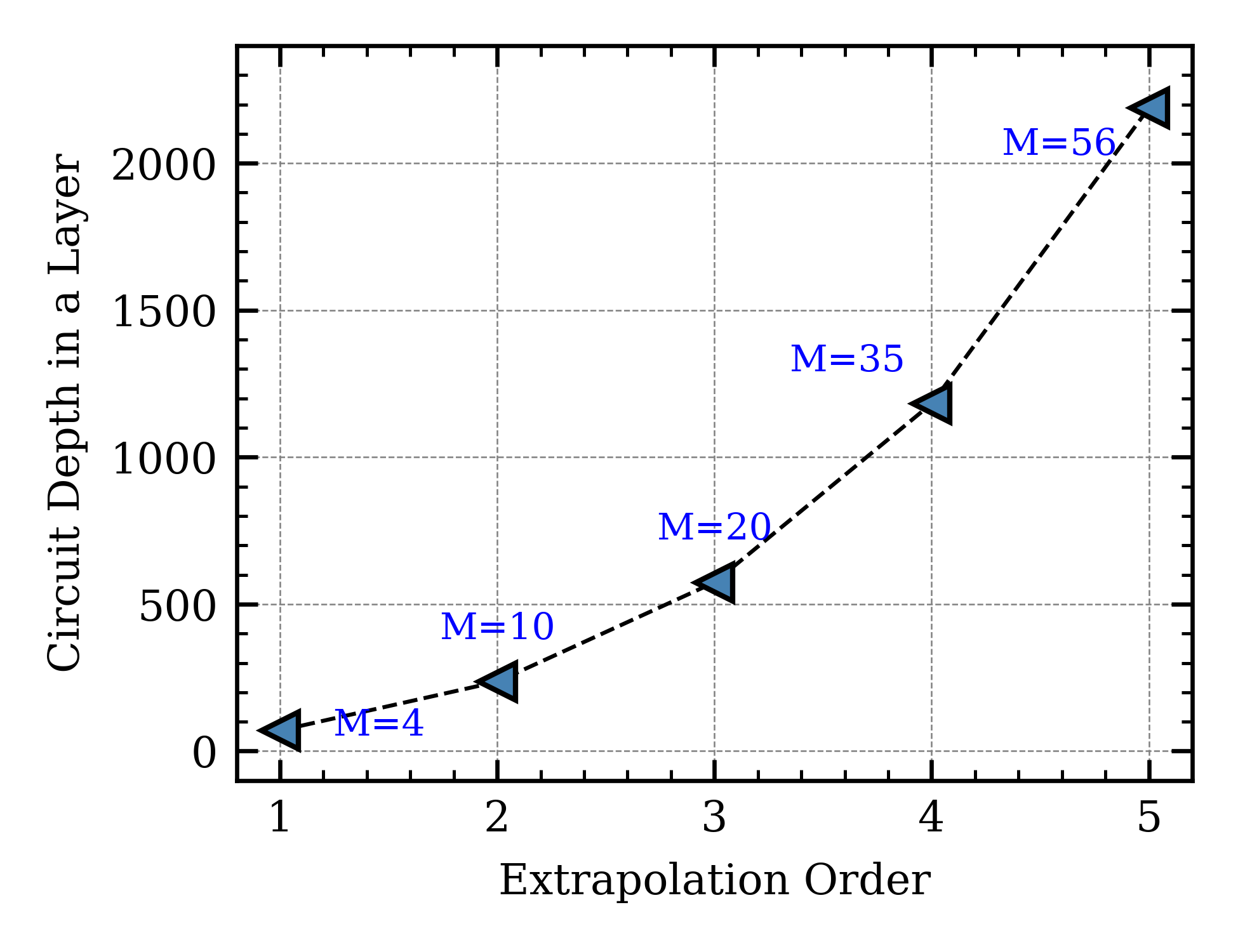}
{\hlred{Circuit-layer depth analysis for multivariate ZNE for 8-qubit, 30-layer system. The annotation $M$ denotes the total number of circuits required for the ZNE algorithm at a fixed extrapolation order.}\label{fig-plots-zne-circ-depth-multivariate}}

\hlred{
Our ansatz consists of structurally identical layers. The depth of layer is determined by the number of sequential gate operations, with gates acting on different qubits executed simultaneously. For example, the unfolded circuit in Figure~\ref{indirect_vqe_redun_ansatz} has a depth of 4 per ansatz layer, whereas the folded circuit next to it has a depth of 18. The total circuit-layer depth across all $M$ circuits (refer to Equation \eqref{eq-lre-m}) is given by}
\begin{equation}
    d_{\text{layer}} = \sum_{i=1}^{M} \left(
    \frac{\mathcal{N}_{R}^{(i)}}{2}
    + \mathcal{N}_{CZ}^{(i)}
    + \mathcal{N}_{T}^{(i)}
    + \frac{\mathcal{N}_{Y}^{(i)}}{n_{\mathrm{odd}}}
    \right),
    \label{eq-circuit-layer-depth}
\end{equation}
\hlred{
where $\mathcal{N}_{R}^{(i)}$, $\mathcal{N}_{CZ}^{(i)}$, $\mathcal{N}_{T}^{(i)}$, and $\mathcal{N}_{Y}^{(i)}$ denote the number of single-qubit rotation gates, two-qubit $CZ$ gates, time-evolution gates acting on $8$-qubit, and $Y$ gates in a single layer, respectively, when executing $i$-th circuit. The factor of $1/2$ accounts for the parallel execution of $R$ gates on the control unit qubits, while the division by $n_{\mathrm{odd}}$ accounts for the simultaneous application of $Y$ gates to all odd-numbered qubits. %
Figure~\ref{fig-plots-zne-circ-depth-multivariate} shows how the circuit-layer depth increases with the ZNE extrapolation order. Additionally, Figure~\ref{fig-plots-zne-circ-gate-op-multivariate} illustrates the total number of gate operations (per circuit layer) across all $M$ circuits for a given order of ZNE. }
\Figure[t!](topskip=0pt, botskip=0pt, midskip=0pt){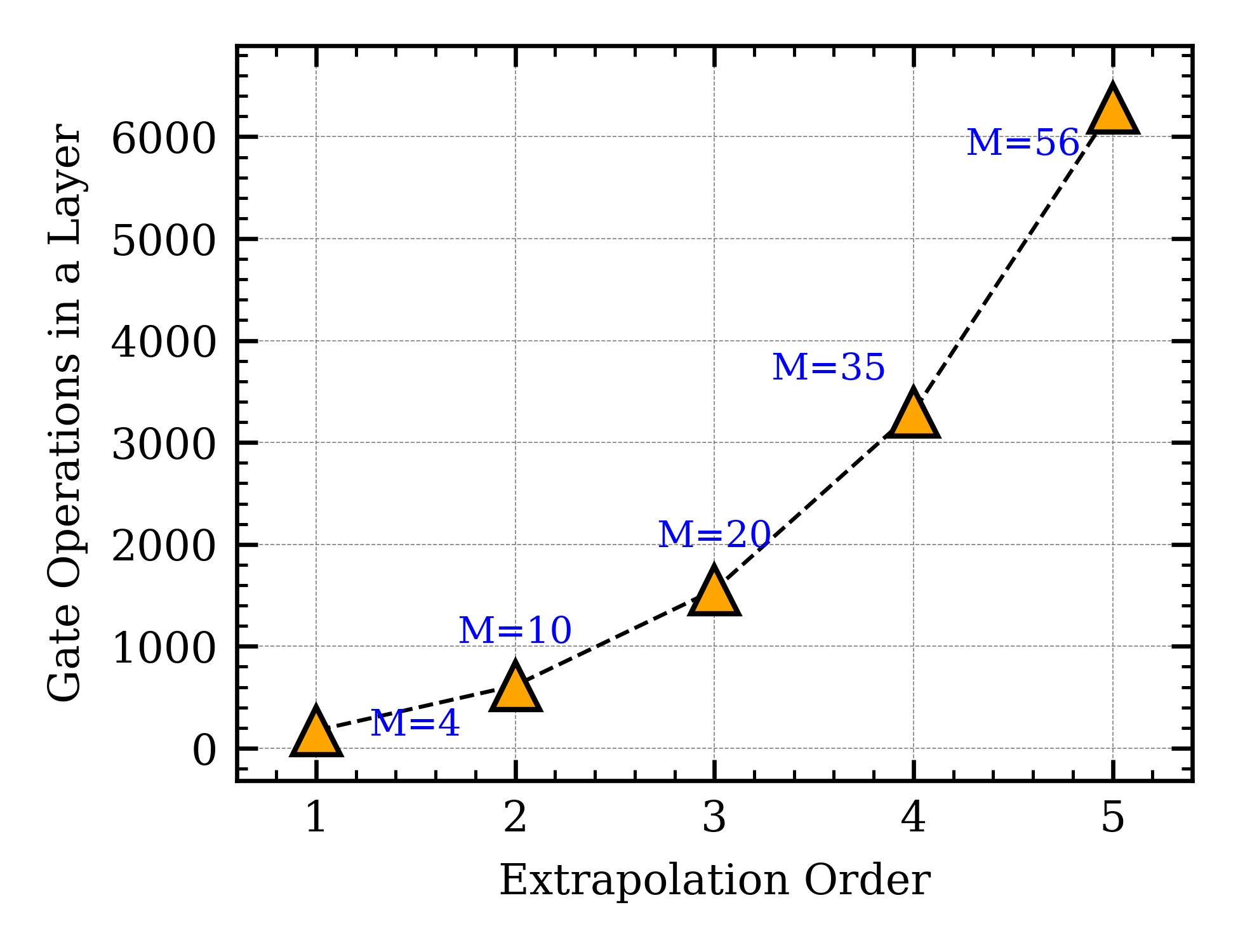}
{\hlred{Number of gate operations per layer summed over all $M$ circuits at a fixed extrapolation order for multivariate ZNE in an $8$-qubit, 30-layer system.}\label{fig-plots-zne-circ-gate-op-multivariate}}

\subsection{\hlred{Runtime}}
\hlred{We assume that the $R$, $Y$, and $CZ$ gate operations are instantaneous, so the total duration of the time-evolution operations determines the runtime. Unitary folding increases this duration by adding time-evolution gates. Since the circuit layers are structurally identical, the folding number $K_T^{(i)}$ for the time-evolution gates is the same across all layers of the $i$-th circuit, resulting in $2K_T^{(i)}+1$ time-evolution gates per layer. Thus, the total runtime can be expressed in terms of the number of time-evolution gates per layer, and the total runtime across all $M$ circuits required for ZNE at a given extrapolation order is defined as}
\begin{equation}
T_{\text{Run}}
= T_{\text{max}}^{\text{opt}} \mathcal{G_P} c s_u \sum_{i=1}^{M} \mathcal{N}_{T}^{(i)}
\label{eq-runtime}
\end{equation}
\hlred{where $T_{\text{max}}^{\text{opt}}$ is the optimized total evolution time (i.e., the final time parameter in the optimized VQE parameter set), $\mathcal{G_P}$ represents the number of measurement groupings of Pauli components for the target observable ($\mathcal{G}_P = 2$ for $H_{\text{Ising}}$, as all $X_k X_{k+1}$ terms and all $Z_k$ terms can be measured simultaneously in two orthogonal bases), $c$ is the ZNE sampling overhead defined in Equation~\eqref{eq-lre-sampling-c-bar}, $s_u$ is the number of measurement shots required for the unmitigated noisy VQE estimator, and $\mathcal{N}_{T}^{(i)}$ is the number of time-evolution gates per ansatz layer in the $i$-th circuit. For simplicity, we assume that all $M$ circuits are allocated an equal number of shots (as discussed in Appendix~\ref{sec smapling overhead}). $T_{\text{Run}}$ is a dimensionless quantity in our simulations. Figure~\ref{fig-plots-zne-circ-runtime-multivariate} illustrates the scaling of the runtime with the ZNE extrapolation order for an 8-qubit, 30-layer system.}

\Figure[t!](topskip=0pt, botskip=0pt, midskip=0pt){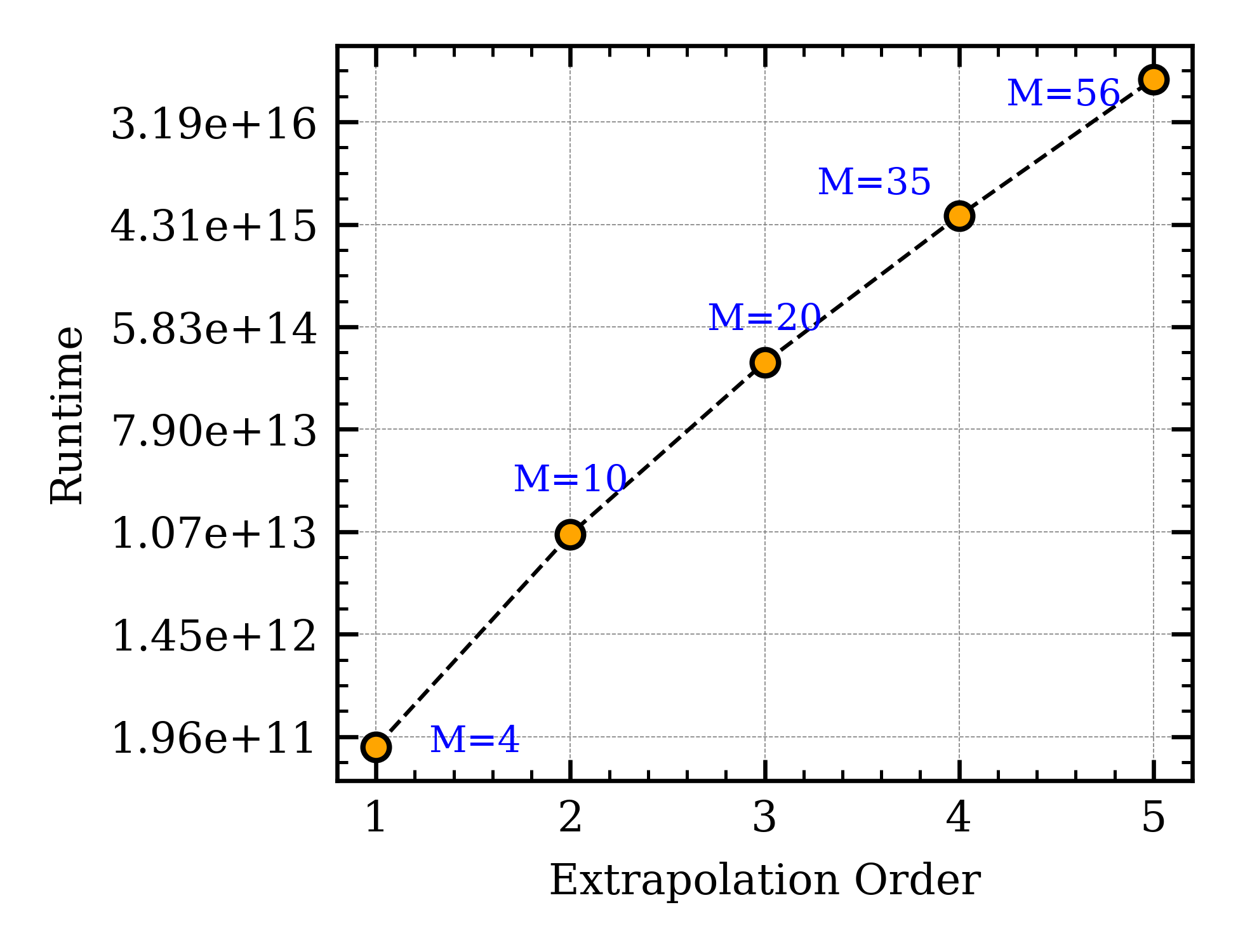}
{\hlred{Runtime versus ZNE extrapolation order for an 8-qubit, 30-layer system under a local depolarizing noise channel. The target observable is the minimum expectation value of the 8-qubit Ising Hamiltonian, $H_{\text{Ising}}$. Local depolarizing noise is applied to the entire system with $\gamma = 0.001$ for the time-evolution gate, while the single-qubit gates ($R$, $Y$) are subjected to depolarizing noise with probability $0.001$, and the $CZ$ gate is subjected to local depolarizing noise with probability $0.005$. Initial random time parameters are sampled from the range $[0, T_{\text{max}}=20]$. The time-evolution gates are simulated using $M_n=1$ in Equation \eqref{eq trotterized time-evolution}. The mean $T^{\text{opt}}_{\text{max}}$ is $19.217$, computed from 10 independent VQE runs with different initializations of the variational parameters. For $H_{\text{Ising}}$, $\mathcal{G_P}=2$ in Equation \eqref{eq-runtime}. The annotation $M$ denotes the total number of circuits required for the ZNE algorithm at a fixed extrapolation order. For a target precision of $\epsilon = 0.001$, the required number of measurement shots is $s_u \approx 1.98 \times 10^{7}$ using Equation \eqref{eq-est-su}. The sampling overhead $c$ is provided in Table~\ref{table-multi-var-zne}.}\label{fig-plots-zne-circ-runtime-multivariate}}

\subsection{\hlred{Error vs. computation complexity}}
\label{comp-cost}

\hlred{The error in error mitigated expectation value is defined as }
\begin{equation}
    E_{\mathrm{err}} = | E_{\mathrm{noise\text{-}off}} - E_{\mathrm{ZNE}} |,
    \label{eq-error-rate}
\end{equation}
\hlred{where $E_{\mathrm{noise\text{-}off}}$ denotes the VQE expectation value obtained in the absence of noise, and $E_{\mathrm{ZNE}}$ denotes the corresponding zero-noise extrapolated estimate for a given extrapolation order. Also, the relative accuracy with respect to $E_{\mathrm{noise\text{-}off}}$ is defined as}
\begin{equation}
    E_{\mathrm{{acc}}} = 100\% \times \left(1 - \frac{|E_{\mathrm{noise\text{-}off}} - E_{\mathrm{ZNE}} |}{|E_{\mathrm{noise\text{-}off}}|}\right).
    \label{eq-relative-accuracy}
\end{equation}
\Figure[t!](topskip=0pt, botskip=0pt, midskip=0pt){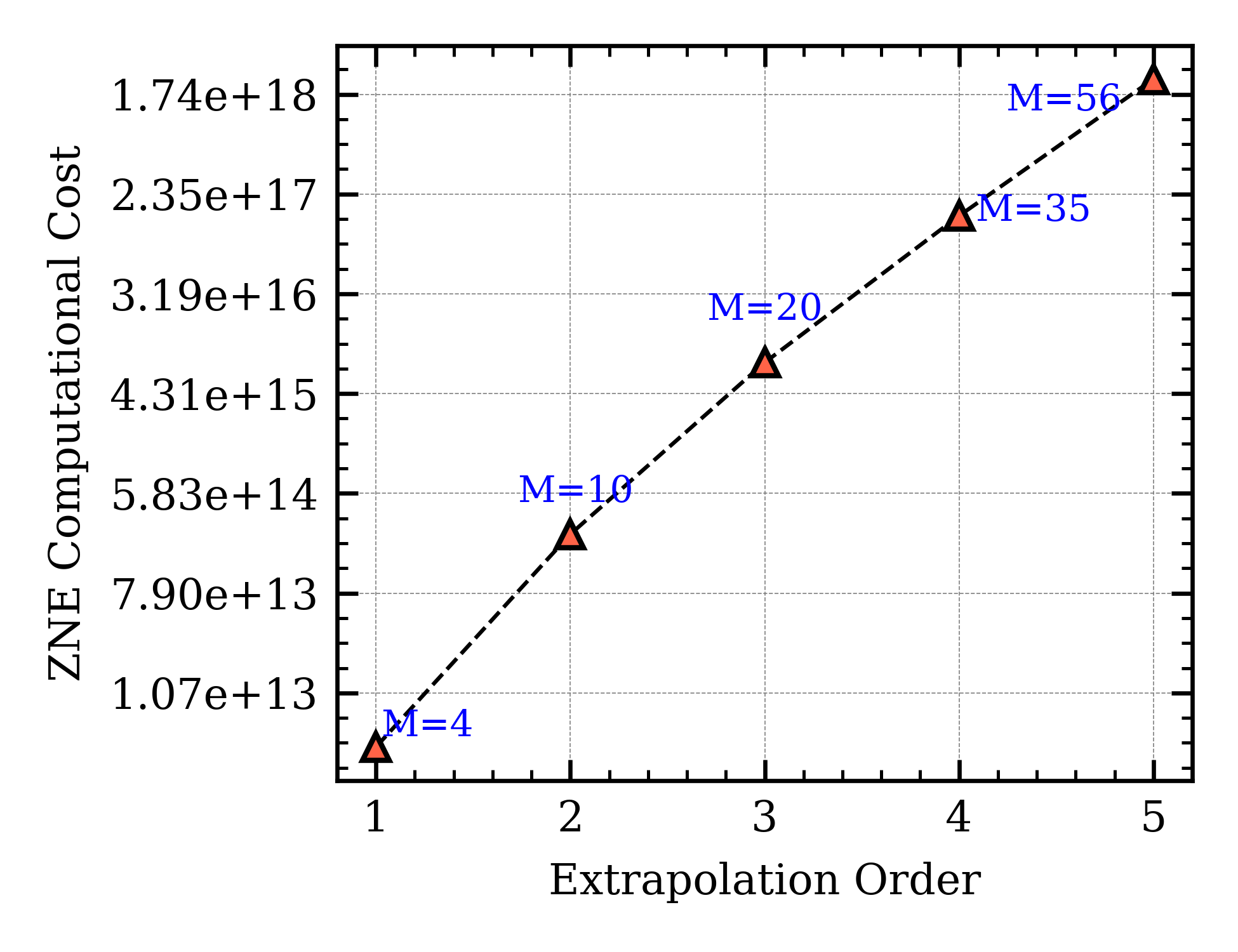}
{\hlred{ZNE computation cost versus ZNE extrapolation order for an 8-qubit, 30-layer system under a local depolarizing noise channel. The target observable is the minimum expectation value of the 8-qubit Ising Hamiltonian, $H_{\text{Ising}}$. Local depolarizing noise is applied to the entire system with $\gamma = 0.001$ for the time-evolution gate, while the single-qubit gates ($R$, $Y$) are subjected to depolarizing noise with probability $0.001$, and the $CZ$ gate is subjected to local depolarizing noise with probability $0.005$. Initial random time parameters are sampled from the range $[0, T_{\text{max}}=20]$. The time-evolution gates are simulated using $M_n=1$ in Equation \eqref{eq trotterized time-evolution}. The annotation $M$ denotes the total number of circuits required for the ZNE algorithm at a fixed extrapolation order. For a target precision of $\epsilon = 0.001$, the required number of measurement shots is $s_u \approx 1.98 \times 10^{7}$. In Equation~\eqref{eq-comp-cost}, $\mathcal{G_P}=2$, overhead $c$ is provided in Table \ref{table-multi-var-zne}.}\label{fig-plots-zne_comp_cost_vs_order}}

\Figure[t!](topskip=0pt, botskip=0pt, midskip=0pt){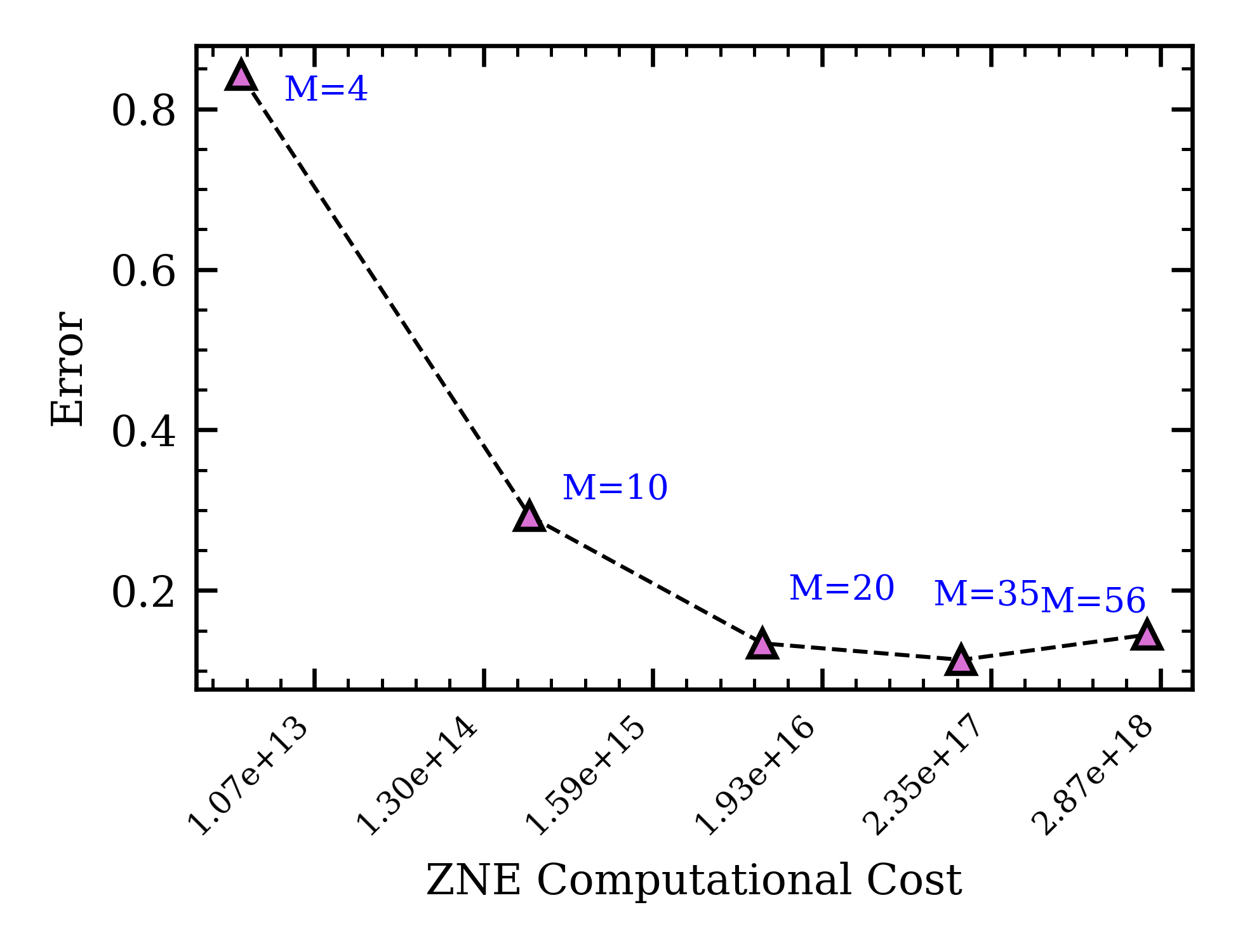}
{\hlred{Error vs. ZNE computational cost for an 8-qubit, 30-layer system under a local depolarizing noise channel. The target observable is the minimum expectation value of the 8-qubit Ising Hamiltonian, $H_{\text{Ising}}$. The annotation $M$ denotes the total number of circuits required for the ZNE algorithm at a fixed extrapolation order.}
\label{fig-plots-zne_comp_cost_vs_error}}
\Figure[t!](topskip=0pt, botskip=0pt, midskip=0pt){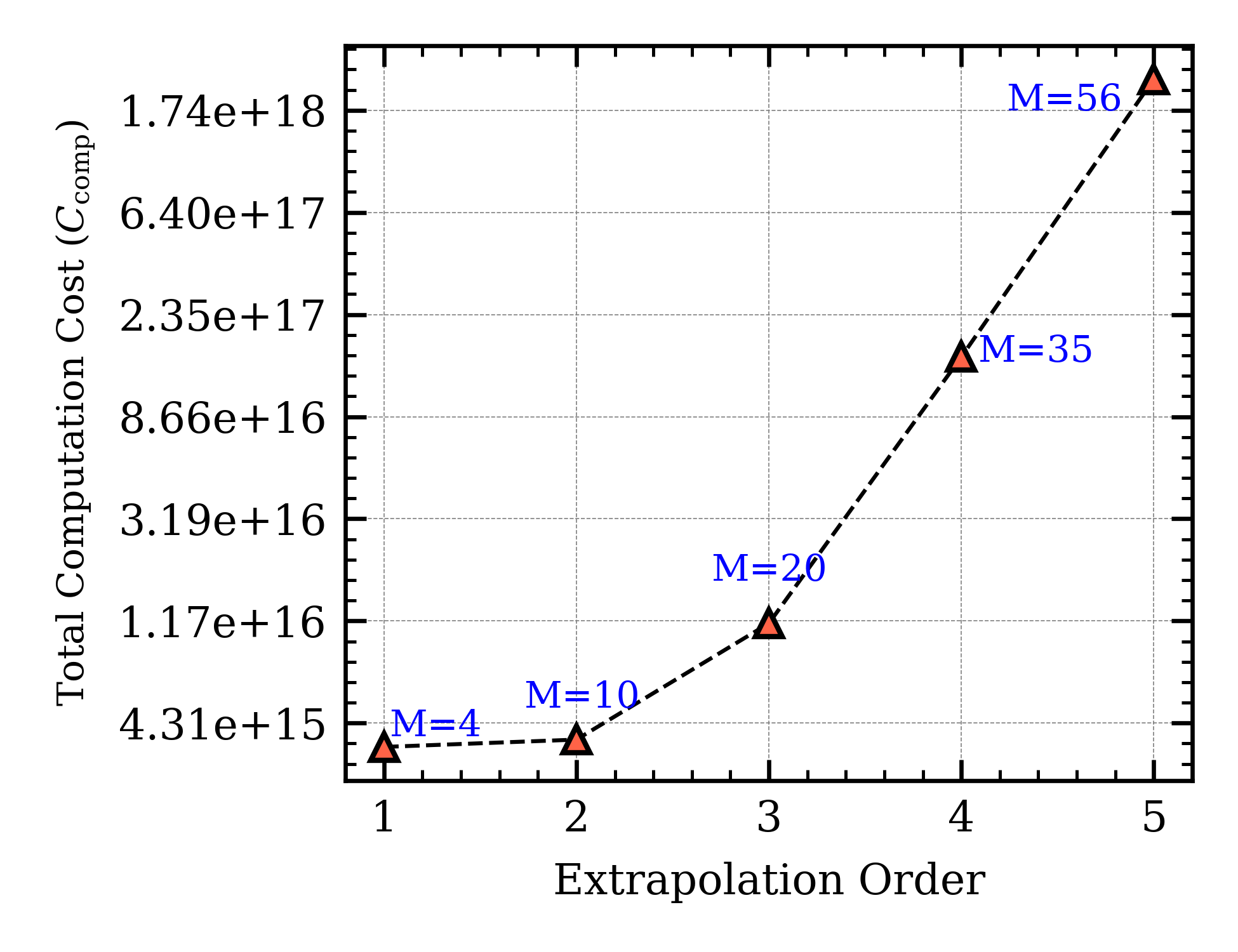}
{\hlred{The total computation cost versus ZNE extrapolation order for an 8-qubit, 30-layer system under a local depolarizing noise channel. The target observable is the minimum expectation value of the 8-qubit Ising Hamiltonian, $H_{\text{Ising}}$. Local depolarizing noise is applied to the entire system with $\gamma = 0.001$ for the time-evolution gate, while the single-qubit gates ($R$, $Y$) are subjected to depolarizing noise with probability $0.001$, and the $CZ$ gate is subjected to local depolarizing noise with probability $0.005$. Initial random time parameters are sampled from the range $[0, T_{\text{max}}=20]$. The time-evolution gates are simulated using $M_n=1$ in Equation \eqref{eq trotterized time-evolution}. The annotation $M$ denotes the total number of circuits required for the ZNE algorithm at a fixed extrapolation order. For a target precision of $\epsilon = 0.001$, the required number of measurement shots is $s_u \approx 1.98 \times 10^{7}$. In Equation~\eqref{eq-comp-cost}, $\mathcal{G_P}=2$, overhead $c$ is provided in Table \ref{table-multi-var-zne}. The mean value of $N_{\mathrm{iter}} = 117.1 \pm 28.29$ is obtained from 10 independent VQE runs with different initial variational parameters. Although the actual iteration count may differ on real hardware, the simulation-derived $N_{\mathrm{iter}}$ serves as a representative proxy for estimating the computational cost.}\label{fig-plots-comp_cost_vs_order}}

\Figure[t!](topskip=0pt, botskip=0pt, midskip=0pt){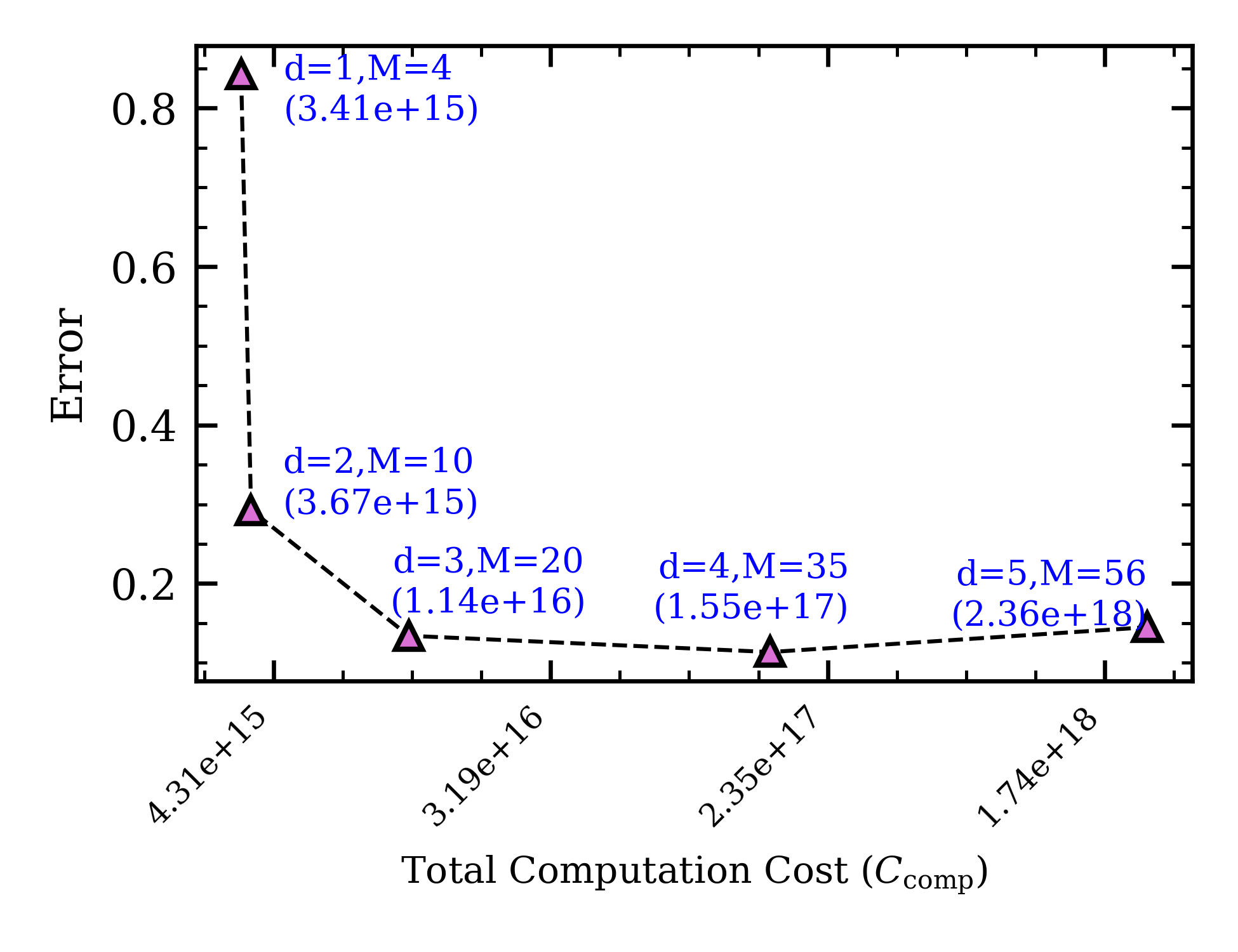}
{\hlred{Error vs. total computational cost for an 8-qubit, 30-layer system under a local depolarizing noise channel. The target observable is the minimum expectation value of the 8-qubit Ising Hamiltonian, $H_{\text{Ising}}$. The annotation $M$ denotes the total number of circuits required for the ZNE at a fixed extrapolation order $d$. The numerical values in parentheses indicate the corresponding total computational cost $C_{\text{Comp}}$.}
\label{fig-plots-comp_cost_vs_error}}

\hlred{To run an indirect-control VQE and subsequently mitigate errors using multivariate ZNE at a given extrapolation order under an equal shot allocation strategy (for the sake of simplicity), we define the total computational cost (in terms of total gate executions) as}
\begin{equation}
C_{\mathrm{Comp}} = \underbrace{N_{\mathrm{iter}} \, Q_{\mathrm{grad}} \, s_u \, \mathcal{N}_{\mathrm{gate}}^{(1)}}_{\text{VQE Optimization Part}} + \underbrace{\mathcal{G_P} \, c \, s_u \sum_{i=1}^{M} \mathcal{N}_{\mathrm{gate}}^{(i)}}_{\text{ZNE Mitigation Part}},
\label{eq-comp-cost}
\end{equation}
\hlred{where $N_{\mathrm{iter}}$ is the total number of classical optimizer iterations required to achieve convergence. We note that $N_{\mathrm{iter}}$ is highly sensitive to the choice of initial variational parameters. $Q_{\mathrm{grad}}$ is the number of hardware circuit evaluations per iteration required for time- and angle-parameter gradient estimations, $s_u$ is the base shot allocation per observable grouping for unmitigated expectation values, and $\mathcal{N}_{\mathrm{gate}}^{(i)}$ denotes the total gate count of the $i$-th scaled circuit across all layers ($i=1$ being the unfolded original circuit). $\mathcal{G_P}$ is the number of measurement groupings. The factor $c = M \Gamma^2$ accounts for the total equal-allocation sampling overhead defined in Equation~\eqref{eq-lre-sampling-c-bar}, such that the total shot budget across all $M$ circuits is $s_{\mathrm{tot}} = c s_u$. In the following paragraphs, we describe in detail how $s_u$ and $Q_{\mathrm{grad}}$ were estimated.}

\paragraph{\hlred{Estimating $s_u$}}
\label{estimating su}
\hlred{Since our simulations were performed using density matrices to compute expectation values deterministically, we estimated the number of shots $s_u$ required for a given target precision $\epsilon$ using all ansatz states encountered during the VQE optimization. To estimate the variance of the target observable, $\langle H_{\text{Ising}} \rangle$, under a local depolarizing noise channel, we first compute the variances of the $XX$ and $Z$ Pauli terms separately across all ansatz output states. The total variance is then}
\[
\operatorname{Var}\!\left[\langle H_{\text{Ising}} \rangle\right]
=
\operatorname{Var}\!\left[ \left\langle \sum_{k=1}^{N-1} X_k X_{k+1} \right\rangle \right]
+
\operatorname{Var}\!\left[ \left\langle \sum_{k=1}^{N} Z_k \right\rangle \right].
\]

\hlred{For a target precision of $\epsilon$, the required number of measurement shots is estimated as}
\begin{equation}
    s_u = \frac{\operatorname{Var}\!\left[\langle H_{\text{Ising}} \rangle\right] }{\epsilon^2}.
    \label{eq-est-su}
\end{equation}

\paragraph{\hlred{Estimating $Q_{\mathrm{grad}}$}}
\hlred{For a standard VQE, the classical optimizer can, in principle, compute the gradient using the parameter-shift rule, provided that the quantum circuit can be executed a sufficient number of times. However, when implementing an indirect-control VQE on real hardware, the conventional parameter-shift rule cannot be applied to the time parameters. The number of quantum circuit evaluations required to compute the gradient of the cost function with respect to all variational parameters is denoted by $Q_{\mathrm{grad}}$:}
\begin{equation}
Q_{\mathrm{grad}} = Q_{t} + Q_{\theta},
\end{equation}
\hlred{where $Q_{t}$ is the number of circuit evaluations required to compute the gradients with respect to the time parameters, and $Q_{\theta}$ is the number of circuit evaluations required to compute the gradients with respect to the angle parameters. Reference \cite{anan_indirect_vqe} provides a method of computing the gradient with respect to time parameters and we can show that for an indirectly controlled system, the number of quantum circuit evaluations required to compute the gradients with respect to the time parameters in a single optimization iteration, $Q_t$, is given by \cite{Suzuki_private}}
\begin{equation}
    Q_t = 2 \times 4^{|C'|} \times \mathcal{G}_P \times P_{\text{time}},
    \label{eq-qt-indirect-measurements}
\end{equation}
\hlred{where $|C'|$ denotes the cardinality of the set of controlled qubits, including the nearest-neighbor qubit. In our setup, because the control unit consists of two qubits located at the boundary of a one-dimensional spin chain (refer to Figure~\ref{indirect-spinchain}), we have $|C'| = 3$. The parameter $\mathcal{G}_P$ represents the number of Pauli measurement groupings required for the target observable. Finally, $P_{\text{time}}$ denotes the total number of VQE time parameters ($P_{\text{time}} = 30$ for a 30-layer system). We note that the exponent term, $4^{|C'|}$, depends solely on the control unit size and remains constant as the total system size scales.

On the other hand, the number of quantum circuit evaluations required to compute the gradients with respect to the angle parameters in a single optimization iteration, $Q_{\theta}$, is calculated using the standard parameter-shift \cite{Wierichs_2022} rule:}
\begin{equation}
    Q_{\theta} = 2 \times P_{\text{angle}} \times \mathcal{G}_P,
    \label{eq-qtheta-indirect-measurements}
\end{equation}
\hlred{where $P_{\text{angle}}$ denotes the total number of VQE angle parameters. For a 30-layer ansatz, $P_{\text{angle}} = 120$ (each layer contains four angle parameters; refer to Figure~\ref{fig-indirect-vqe}).

While $Q_{\mathrm{grad}}$ accounts for the hardware circuit evaluation cost per iteration, we do not have direct access to $N_{\mathrm{iter}}$ from actual hardware executions, that is, the hybrid quantum-classical optimization loop, when computing the total computational cost $C_{\mathrm{Comp}}$ in Equation~\eqref{eq-comp-cost}. Consequently, we adopt the average iteration count obtained from classical simulations as a benchmark proxy. Although the actual value of $N_{\mathrm{iter}}$ in the presence of real quantum hardware may vary depending on the gradient evaluation scheme and implementation-specific factors, the simulation-derived value provides a representative baseline estimate rather than an exact correspondence to hardware execution.

Figures~\ref{fig-plots-zne_comp_cost_vs_order} and~\ref{fig-plots-comp_cost_vs_order} illustrate the rapid growth of the ZNE computational cost and total computational cost, respectively, with increasing extrapolation order. Figure~\ref{fig-plots-zne_comp_cost_vs_error} shows that the estimation error decreases as the ZNE computational cost increases (up to extrapolation order $4$) for an 8-qubit, 30-layer system estimating $\langle H_{\text{Ising}} \rangle$. The corresponding trend with total computational cost is shown in Figure~\ref{fig-plots-comp_cost_vs_error}. A clear trade-off between error mitigation and computational cost emerges from these plots. Moreover, beyond a certain extrapolation order, the estimation error may increase rather than decrease. Therefore, the choice of the ZNE extrapolation order needs to be carefully assessed for practical implementation on real quantum hardware, since high extrapolation orders can introduce numerical instabilities associated with high-order polynomial interpolation.}

\section{\break \hlred{Extension to the Heisenberg Spin Chain Observable and Other Noise Channels}} \label{extending to other noise channels}

\hlred{We extend our simulation further. In this section, additional results using local dephasing and local amplitude damping are presented in an 8-qubit system. Using these two noise models, we estimated the expectation value of the 8-qubit Heisenberg spin chain, $H_{\text{Heisenberg}}$ using the indirect-control ansatz. $H_{\text{Heisenberg}}$ is given by}
\begin{equation}
    H_{\text{Heisenberg}} = \sum_{i=1}^{7} \left( X_i X_{i+1} + Y_i Y_{i+1} + Z_i Z_{i+1} \right)
    \label{eq-heisenberg-spin-chain}
\end{equation}

\hlred{The Trotterization framework presented in Equations \eqref{eq trotterized time-evolution}, \eqref{eq approximation Mn 1} is not restricted to local depolarizing noise and applies equally to local dephasing and local amplitude damping noise. The operator-splitting error in terms of the diamond norm in Equation~\eqref{eq approximation Mn 1} is
 bounded by $\frac{1}{2}\gamma (t_{n+1}-t_n)^2(N-1)c_\text{noise}$, where
$c_\text{noise}$ is a constant depending on both the spin chain Hamiltonian and the noise model.
For local depolarizing noise, local dephasing noise, and local amplitude damping noise, the upper bounds of $c_\text{noise}$ are $4$, $2$, and $6$, respectively for the XY model (see Appendix~\ref{app:trotter_error_explicit} for detail). Under the $M_n=1$ approximation, the layer-dependent noise probability has the same functional form, $p_n = \gamma(t_{n+1}-t_n)$, for all three noise channels, with $p_n^{\text{deph}}$ and $p_n^{\text{AD}}$ defined explicitly below.}

\subsection{\hlred{Dephasing Channel}}
The Lindbladian of the local dephasing channel can be written as $\mathcal{L}_\text{noise}(\rho) =-\gamma \sum_{i=1}^N \Pi_{XY,i}$,
where $\Pi_{XY,i}$ is a projection onto 3-dimensional subspace of a single-qubit Hilbert-Schmidt space spanned by $\{ X_i, Y_i\}$.
Hence, $e^{t\mathcal{L}_\text{noise}}=\bigotimes _i e^{-\gamma t \Pi_{XY,i}}$.
On the other hand, we can easily see that $e^{-\gamma t \Pi_{XY,i}}$ satisfies
\begin{equation}
    e^{-\gamma t \Pi_{XY,i}}=\mathcal{E}^{(1)}_{p}(\rho),
\end{equation}
\hlred{where $\mathcal{E}^{(1)}_{p}$ is 
a single-qubit dephasing channel $\mathcal{E}^{(1)}_p$ with phase-flip probability $p$ acting on a density matrix $\rho$ defined as}
\begin{equation}
    \mathcal{E}^{(1)}_{p}(\rho) = (1-p)\rho + p Z^{\dagger} \rho Z,
    \label{eq-dephasing-channel}
\end{equation}
\hlred{and $p$ is given by}
\begin{equation}
    p=\frac{1-e^{-\gamma t}}{2}.
\end{equation}
\hlred{Thus, when $\gamma t$ is small, $p\approx \frac{\gamma t}{2}$,
and we derive the following relation:}
\begin{equation}
    e^{\mathcal{L}_\text{noise}t}\rho \approx \left(\mathcal{E}^{(1)}_{\gamma t/2}\right)^{\otimes N} (\rho).
\end{equation}
\hlred{Under the single-step ($M_n = 1$) approximation introduced in Equation \eqref{eq approximation Mn 1}, the time-evolution gate in layer $n$ is followed by local dephasing channels with an effective noise probability}
\begin{equation}
    p_n^{\text{deph}} = \frac{\gamma}{2}(t_{n+1} - t_n) = \gamma\prime(t_{n+1} - t_n),
    \label{eq-time-evo-noise-deph}
\end{equation}
where we define a new parameter $\gamma\prime$ by $\gamma\prime:=\frac{\gamma}{2}$.

\subsection{\hlred{Amplitude Damping Channel}}
\hlred{We next consider local amplitude damping, which models spontaneous energy relaxation of each qubit. The Lindblad master equation retains the general form of Equation \eqref{eq lindblad eq}, with the noise superoperator given by}
\begin{equation}
    \mathcal{L}_\text{noise}(\rho) = \gamma\sum_{i=1}^N \mathcal{D}_{\sigma_-^{(i)}}(\rho),\quad
    \mathcal{D}_L(\rho) = L\rho L^\dagger - \tfrac{1}{2}\{L^\dagger L,\rho\},
    \label{eq-AD-lindblad}
\end{equation}
\hlred{where $\sigma_-^{(i)} = |0\rangle\langle 1|$ is the lowering operator acting on qubit $i$, and $\gamma$ denotes the relaxation rate. Unlike depolarizing noise and pure dephasing, amplitude damping is non-unital, driving each qubit towards its ground state $|0\rangle$.}

\hlred{The continuous evolution of a single qubit under $\mathcal{L}_\text{noise}$ yields the standard single-qubit amplitude damping channel $\mathcal{A}^{(1)}_{p(t)}$ \cite{Nielsen_Chuang_2010}:}
\begin{equation}
    \mathcal{A}^{(1)}_{p(t)}(\rho) = E_0 \rho E_0^\dagger + E_1 \rho E_1^\dagger,
    \label{eq-AD-kraus}
\end{equation}
\hlred{where the corresponding Kraus operators are defined as}
\begin{equation}
    E_0 = \begin{pmatrix} 1 & 0 \\ 0 & \sqrt{1-p(t)} \end{pmatrix}, \qquad
    E_1 = \begin{pmatrix} 0 & \sqrt{p(t)} \\ 0 & 0 \end{pmatrix},
    \label{eq-AD-kraus-2}
\end{equation}
\hlred{with dampening probability $p(t) = 1 - e^{-\gamma t} \approx \gamma t$ for $\gamma t \ll 1$. For $N$ independent qubits, the multi-qubit noise process over duration $t$ factors into the tensor product $e^{\mathcal{L}_\text{noise}t} = \left(\mathcal{A}^{(1)}_{p(t)}\right)^{\otimes N}$. Under the single-step ($M_n = 1$) approximation introduced in Equation \eqref{eq approximation Mn 1}, the time-evolution gate in layer $n$ is followed by local amplitude damping channels with effective damping probability}
\begin{equation}
    p^{\text{AD}}_n = \gamma(t_{n+1} - t_n).
    \label{eq-time-evo-noise-AD}
\end{equation}

 \hlred{Figures~\ref{fig-plots-zne-diff-orders-multivariate-dephasing} and \ref{fig-plots-zne-diff-orders-multivariate-ampdamp} demonstrate that the multivariate ZNE approach can significantly mitigate errors arising from local dephasing and local amplitude damping noise in the 8-qubit indirect-control ansatz. Under the local dephasing noise channel, as shown in Figure~\ref{fig-plots-zne-diff-orders-multivariate-dephasing}, the ZNE estimate improves as the extrapolation order increases up to the 3rd order. The 3rd-order multivariate extrapolation achieves an error-mitigated value very close to the noise-free estimate. Under the local amplitude damping noise channel, as shown in Figure~\ref{fig-plots-zne-diff-orders-multivariate-ampdamp}, the 2nd-order extrapolation achieves an error-mitigated value within the statistical uncertainty range of the noise-free estimate. These ZNE values, along with the additional quantitative analyses presented in Appendix~\ref{sec Quantitative analysis} for the two noise channels, are provided in Table~\ref{table-multi-var-zne-full}. While these results highlight the effectiveness of multivariate ZNE for indirect-control VQE, extending error mitigation beyond depolarizing noise introduces additional subtleties. In particular, non-unital and non-Pauli channels, such as amplitude damping, may exhibit nonlinear noise scaling under unitary folding. For local amplitude damping, this dependence may be more complex. Such nonlinear scaling, together with numerical instability in higher-order multivariate polynomial extrapolation, may affect the ZNE estimation.  }
 
\Figure[t!](topskip=0pt, botskip=0pt, midskip=0pt){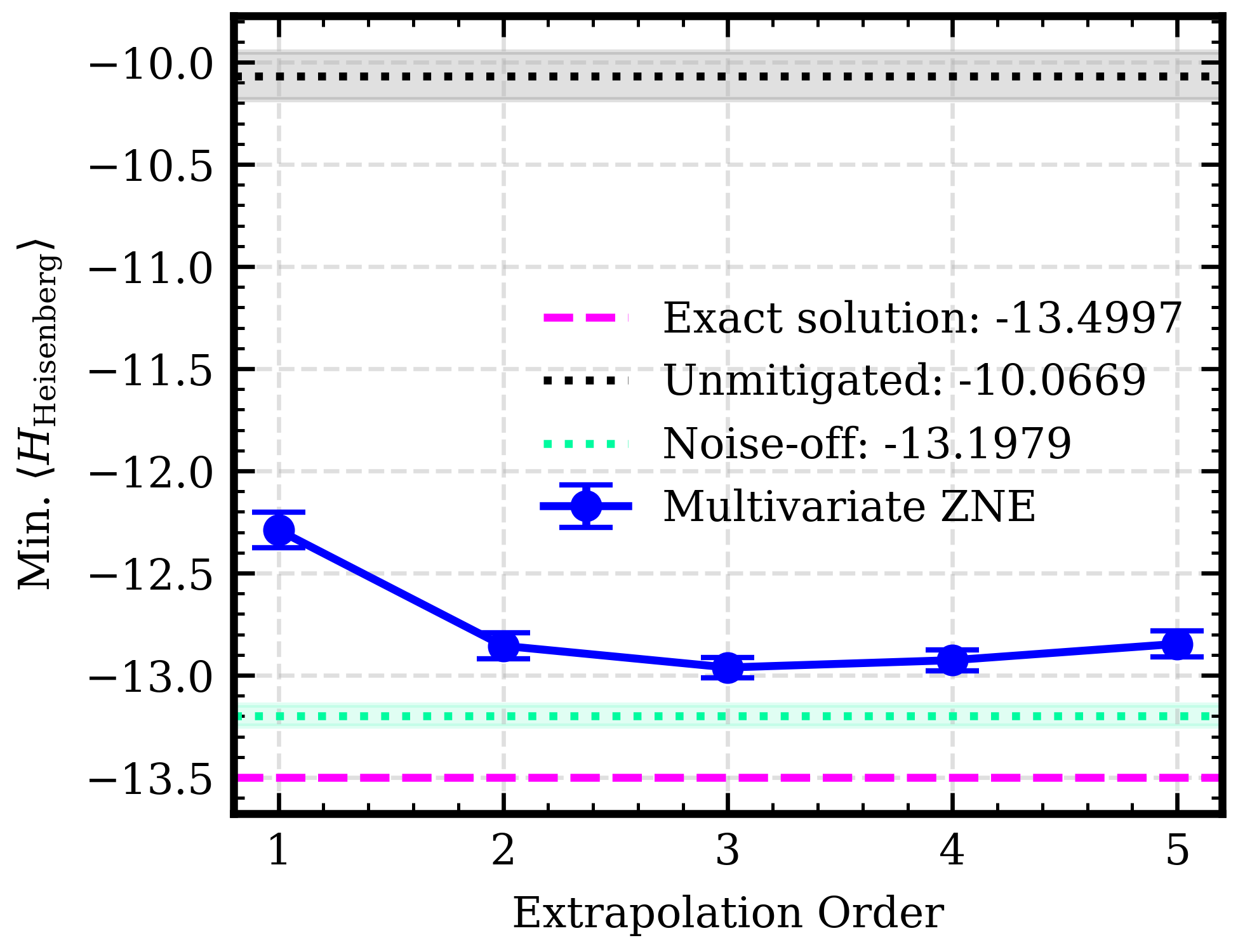}
{\hlred{Multivariate Richardson extrapolation of different orders with vector-valued noise. An 8-qubit, 30-layer indirect-control VQE ansatz is used. The minimum expectation value of the 8-qubit $H_{\text{Heisenberg}}$ is estimated. Local dephasing noise is applied to the entire system with $\gamma\prime = 0.001$ for the time-evolution gate, while the single-qubit gates ($R$, $Y$) are subjected to dephasing noise with probability $0.001$, and the $CZ$ gate is subjected to local dephasing noise with probability $0.005$. Initial random time parameters are sampled from the range $[0, T_{\text{max}}=20]$. The time-evolution gates are simulated using $M_n=1$ in Equation \eqref{eq trotterized time-evolution}. The error bars represent the standard deviation computed from 10 independent trials, each consisting of VQE optimization followed by ZNE. The resulting standard deviation reflects the variability arising from the random initialization of the variational parameters.} \label{fig-plots-zne-diff-orders-multivariate-dephasing}}

\Figure[t!](topskip=0pt, botskip=0pt, midskip=0pt){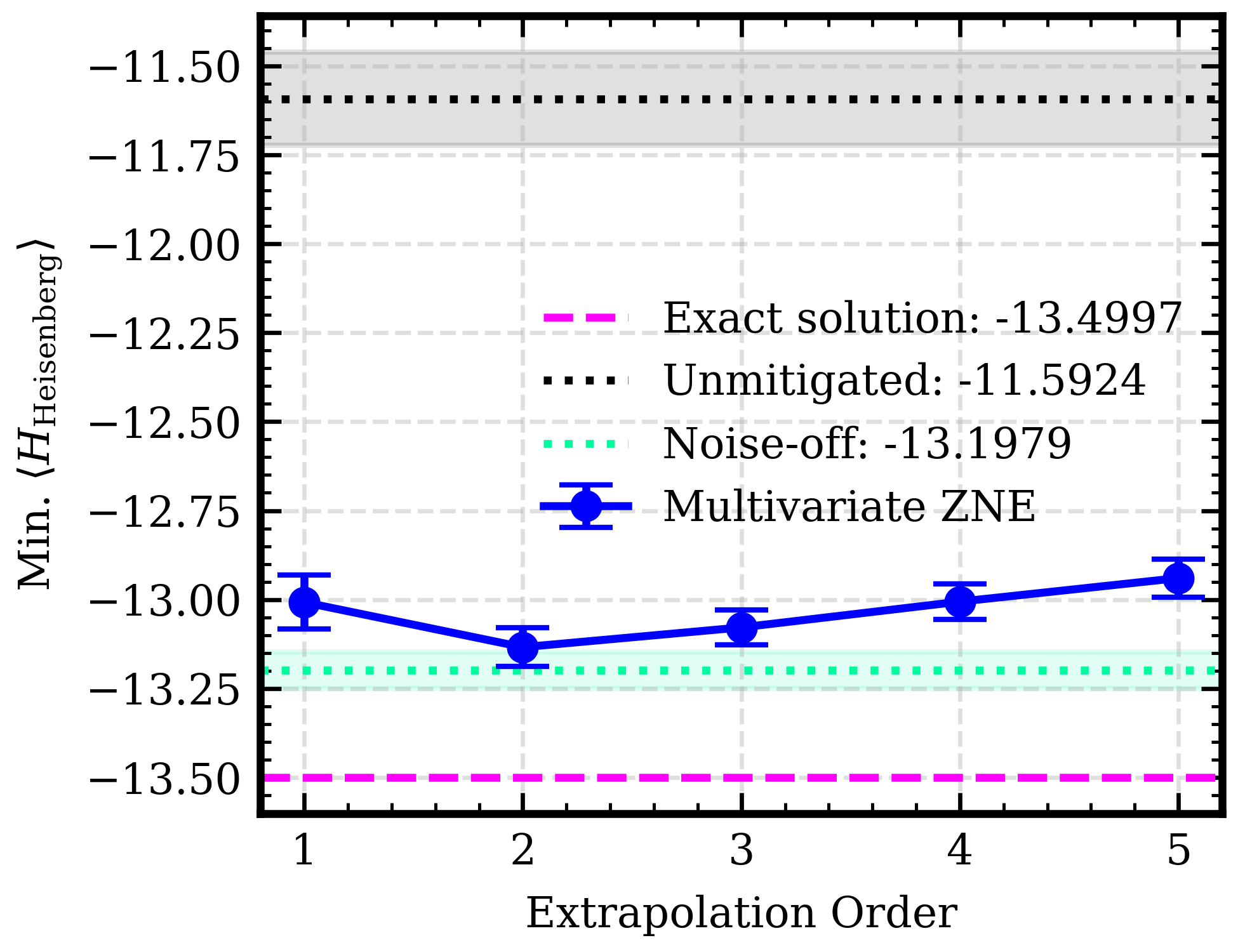}
{\hlred{Multivariate Richardson extrapolation of different orders with vector-valued noise. An 8-qubit, 30-layer indirect-control VQE ansatz is used. The minimum expectation value of the 8-qubit $H_{\text{Heisenberg}}$ is estimated. Local amplitude damping noise is applied to the entire system with $\gamma = 0.001$ for the time-evolution gate, while the single-qubit gates ($R$, $Y$) are subjected to amplitude damping noise with probability $0.001$, and the $CZ$ gate is subjected to local amplitude damping noise with probability $0.005$. Initial random time parameters are sampled from the range $[0, T_{\text{max}}=20]$. The time-evolution gates are simulated using $M_n=1$ in Equation \eqref{eq trotterized time-evolution}. The error bars represent the standard deviation computed from 10 independent trials, each consisting of VQE optimization followed by ZNE. The resulting standard deviation reflects the variability arising from the random initialization of the variational parameters.} \label{fig-plots-zne-diff-orders-multivariate-ampdamp}}

\begin{table*}[t]
\caption{\hlred{\textbf{Summary of multivariate ZNE results for additional noise models. An 8-qubit, 30-layer indirect-control VQE is used to estimate $\langle H_{\text{Heisenberg}} \rangle$. The ansatz contains 150 variational parameters: 30 time parameters and 120 angle parameters. For both local dephasing and amplitude damping, single-qubit gates such as $R$ and $Y$ are assigned a noise probability of $0.001$. We also set $\gamma\prime$ and $\gamma$ to $0.001$ in Equations~\eqref{eq-time-evo-noise-deph} and \eqref{eq-time-evo-noise-AD}, respectively, to account for noise associated with the time-evolution gates. In contrast, the two-qubit $CZ$ gates are assigned a higher noise probability of $0.005$. For $H_{\text{Heisenberg}}$ as the target observable, $\mathcal{G}_P =3$ in Equation \eqref{eq-runtime}, \eqref{eq-qt-indirect-measurements} and \eqref{eq-qtheta-indirect-measurements}.   }}}
\label{table-multi-var-zne-full}
\centering
\small
\setlength{\tabcolsep}{2pt}
\begin{tabular}{|>{\raggedright\arraybackslash}p{64pt}|>{\raggedright\arraybackslash}p{43pt}|>{\centering\arraybackslash}p{60pt}|>{\centering\arraybackslash}p{50pt}|>{\centering\arraybackslash}p{40pt}|>{\centering\arraybackslash}p{50pt}|>{\centering\arraybackslash}p{57pt}|>{\centering\arraybackslash}p{45pt}|>{\centering\arraybackslash}p{27pt}|>{\centering\arraybackslash}p{17pt}|}
\hline
\textbf{Method} & \textbf{Noise Model} & \textbf{Estimated Value Mean $\pm$ Std. Dev.$^{\mathrm{1}}$} & \textbf{Error $E_{\mathrm{err}}$} & \textbf{Accuracy$^{\mathrm{2}}$ $E_{\mathrm{{acc}}}$} & \textbf{Runtime$^{\mathrm{3}}$ $T_{\text{Run}}$} & \textbf{Computational Cost$^{\mathrm{4}}$ $C_{\text{Comp}}$} & \textbf{Overhead $c$} & \textbf{Depth$^{\mathrm{5}}$} & \textbf{Gate$^{\mathrm{6}}$} \\
\hline
\textbf{Noise-free VQE estimation} & -- & $-13.20 \pm 0.05$ & -- & -- & -- & -- & -- & -- & -- \\
\hline
\multirow{2}{64pt}{\textbf{Unmitigated VQE estimation}} & Dephasing & $-10.07 \pm 0.12$ & -- & -- & -- & -- & -- & -- & -- \\
 & AD & $-11.59 \pm 0.13$ & -- & -- & -- & -- & -- & -- & -- \\
\hline
\multirow{2}{*}{\textbf{ZNE of order 1}} & Dephasing & $-12.29 \pm 0.09$ & $9.085 \times 10^{-1}$ & $93.12$ & $1.786 \times 10^{11}$ & $3.918 \times 10^{15}$ & \multirow{2}{*}{$1.750 \times 10^{1}$} & \multirow{2}{*}{70} & \multirow{2}{*}{174} \\
 & AD & $-13.01 \pm 0.08$ & $1.904 \times 10^{-1}$ & $98.56$ & $1.591 \times 10^{11}$ & $3.918 \times 10^{15}$ &  &  &  \\
\hline
\multirow{2}{*}{\textbf{ZNE of order 2}} & Dephasing & $-12.86 \pm 0.06$ & $3.422 \times 10^{-1}$ & $97.41$ & $1.136 \times 10^{13}$ & $4.193 \times 10^{15}$ & \multirow{2}{*}{$3.514 \times 10^{2}$} & \multirow{2}{*}{236} & \multirow{2}{*}{616} \\
 & AD & $-13.13 \pm 0.05$ & $6.407 \times 10^{-2}$ & $99.51$ & $1.012 \times 10^{13}$ & $4.160 \times 10^{15}$ &  &  &  \\
\hline
\multirow{2}{*}{\textbf{ZNE of order 3}} & Dephasing & $-12.96 \pm 0.05$ & $2.360 \times 10^{-1}$ & $98.21$ & $3.255 \times 10^{14}$ & $1.262 \times 10^{16}$ & \multirow{2}{*}{$4.349 \times 10^{3}$} & \multirow{2}{*}{574} & \multirow{2}{*}{1554} \\
 & AD & $-13.08 \pm 0.05$ & $1.196 \times 10^{-1}$ & $99.09$ & $2.900 \times 10^{14}$ & $1.158 \times 10^{16}$ &  &  &  \\
\hline
\multirow{2}{*}{\textbf{ZNE of order 4}} & Dephasing & $-12.93 \pm 0.05$ & $2.717 \times 10^{-1}$ & $97.94$ & $5.697 \times 10^{15}$ & $1.685 \times 10^{17}$ & \multirow{2}{*}{$3.871 \times 10^{4}$} & \multirow{2}{*}{1182} & \multirow{2}{*}{3302} \\
 & AD & $-13.01 \pm 0.05$ & $1.922 \times 10^{-1}$ & $98.54$ & $5.075 \times 10^{15}$ & $1.489 \times 10^{17}$ &  &  &  \\
\hline
\multirow{2}{*}{\textbf{ZNE of order 5}} & Dephasing & $-12.85 \pm 0.06$ & $3.515 \times 10^{-1}$ & $97.34$ & $8.212 \times 10^{16}$ & $2.556 \times 10^{18}$ & \multirow{2}{*}{$3.155 \times 10^{5}$} & \multirow{2}{*}{2188} & \multirow{2}{*}{6282} \\
 & AD & $-12.94 \pm 0.05$ & $2.583 \times 10^{-1}$ & $98.04$ & $7.316 \times 10^{16}$ & $2.252 \times 10^{18}$ &  &  &  \\
\hline

\multicolumn{10}{p{460pt}}{\footnotesize AD: Amplitude damping. ``Dephasing'' and ``AD'' denote local channels.} \\
\multicolumn{10}{p{500pt}}{\footnotesize All values are expressed in dimensionless units. Due to the large number of data points, they are not included explicitly here. All noise models for multivariate ZNE use identical folded circuits.}\\
\multicolumn{10}{p{500pt}}{\footnotesize $^{\mathrm{1}}$ The error bars represent the standard deviation computed from 10 independent trials, where each trial consists of VQE optimization with a different random initialization of the variational parameters, followed by ZNE using the optimized parameters.} \\
\multicolumn{10}{p{500pt}}{\footnotesize $^{\mathrm{2}}$ Relative accuracy in \% with respect to the noise-free VQE estimation.} \\
\multicolumn{10}{p{500pt}}{\footnotesize $^{\mathrm{3}}$ In Equation~\eqref{eq-runtime}, the mean values of $T^{\text{opt}}_{\text{max}}$ over 10 independent simulation runs for each noise model, with different initializations of the variational parameters, are $19.82 \pm 0.99$ for dephasing noise and $20.04 \pm 0.92$ for amplitude damping noise.} \\
\multicolumn{10}{p{500pt}}{\footnotesize $^{\mathrm{4}}$ $\operatorname{Var}\!\left(\langle H_{\text{Heisenberg}} \rangle\right)$ is $21.457$ and $18.903$ under dephasing and amplitude damping noise, respectively. For $\epsilon = 0.001$, the corresponding unmitigated shot counts are $s_u = 2.145 \times 10^{7}$ and $1.890 \times 10^{7}$, respectively. In Equation~\eqref{eq-comp-cost}, the mean values of $N_{\mathrm{iter}}$ over 10 independent simulation runs for each noise model, with different initial variational parameters, are $124.2 \pm 24.87$ for dephasing noise and $141.0 \pm 28.44$ for amplitude damping noise. Although the actual iteration count may differ on real hardware, the simulation-derived $N_{\mathrm{iter}}$ serves as a representative proxy for estimating the total computational cost.} \\
\multicolumn{10}{p{460pt}}{\footnotesize $^{\mathrm{5}}$ Depth: Circuit-layer depth in Equation \eqref{eq-circuit-layer-depth}.} \\
\multicolumn{10}{p{460pt}}{\footnotesize $^{\mathrm{6}}$ Gate: Total gate operations per ansatz layer summed over all $M$ circuits, $M$ is defined in Equation \eqref{eq-lre-m}.} \\
\hline
\end{tabular}
\end{table*}

\section{\break \hl{Univariate and Multivariate ZNE for a 9-Qubit System}} \label{Univariate and Multivariate ZNE for a 9-Qubit System}

\hl{In this section, we present additional ZNE results for a 9-qubit, 30-layer (i.e. 150 variational parameters) indirect-control VQE system. Both standard univariate and multivariate ZNE are performed. For univariate ZNE of different extrapolation orders, all gates are affected by local depolarizing noise. For multivariate ZNE, the vector of noise parameters is defined as in Equation~\eqref{eq-vector-noise}. In the univariate case, the effective noise level $\Lambda^{(i)}$ is expressed as a weighted sum of the numbers of noisy gates in an $N$-qubit circuit:}
\begin{equation}
    \Lambda^{(i)} := \mathcal{N}_{R}^{(i)}  + 2\mathcal{N}_{CZ}^{(i)} + 9\mathcal{N}_{T}^{(i)} + \mathcal{N}_{Y}^{(i)},
    \label{eq-noise-all-gates-noisy}
\end{equation}
\hl{where  $\mathcal{N}_{R}^{(i)}$, $\mathcal{N}_{CZ}^{(i)}$, $\mathcal{N}_{T}^{(i)}$, and $\mathcal{N}_{Y}^{(i)}$ denote the number of single-qubit rotation gates, two-qubit $CZ$ gates, time-evolution gates acting on $9$-qubit, and $Y$ gates in a single layer, respectively, when executing $i$-th circuit. The index $i = 1,2,3,\dots$ labels different circuits in the noise-scaling procedure. The case $i=1$ corresponds to the unfolded circuit with base noise, while $i>1$ denotes folded circuits constructed via noise amplification, each associated with a larger effective noise level. The target observable is 9-qubit transverse-field Ising Hamiltonian described by Equation \eqref{ising-hamiltonian} with $N=9$. The weighting factors $2$ and $9$ reflect the size of the corresponding gates. 
In our simulations, single-qubit gates such as $R$ and $Y$ are assigned a depolarizing noise probability of $0.001$. We also set $\gamma = 0.001$ in Equation \eqref{eq-time-evo-noise}, thereby accounting for the noise associated with the time-evolution gate. In contrast, the local depolarizing noise acting on the two-qubit $CZ$ gate is assigned a higher depolarizing noise probability of $0.005$. In the VQE algorithm, the time parameters $\{t_n\}_{n=2}^{30}$ are first sampled uniformly from $[0,20]$ and then optimized. We note that, under the $M_n=1$ Trotter-step approximation, the condition $t_{n+1}-t_n \ll [c_\text{noise}(N-1)\gamma]^{-\frac{1}{2}} \le \sqrt{31.25} \approx 5.5$ is slightly more stringent than that for the 7- or 8-qubit systems. Nonetheless, the results shown in Figure~\ref{fig-plots-zne-diff-orders-9q-uni-vs-multivariate} demonstrate that multivariate ZNE can provide better error mitigation than univariate ZNE when noise affects different types of gates, at the cost of executing a larger number of deeper folded circuits. 

Scaling to 9 qubits demands substantially greater computational resources, primarily because the SLSQP optimizer requires repeated cost-function and gradient evaluations, each involving density-matrix evolution in a Hilbert space of dimension $2^N$, where $N$ is the number of qubits. Using Qulacs for density-matrix evolution on a consumer-grade workstation equipped with 32 logical CPU cores and 137.2~GB of RAM, three concurrent simulation runs of the 9-qubit system required 9~days and 6.5~hours of continuous execution ($\approx 654$ total CPU-hours), corresponding to an average of 9.08~days per run. Although software-level optimizations could improve computational efficiency, these results highlight the practical limitations of our simulation on consumer-grade hardware. Further increasing the system size would impose substantially greater computational demands and may require access to high-performance computing (HPC) resources and fault-tolerant checkpoint for practical execution.}

\Figure[t!](topskip=0pt, botskip=0pt, midskip=0pt){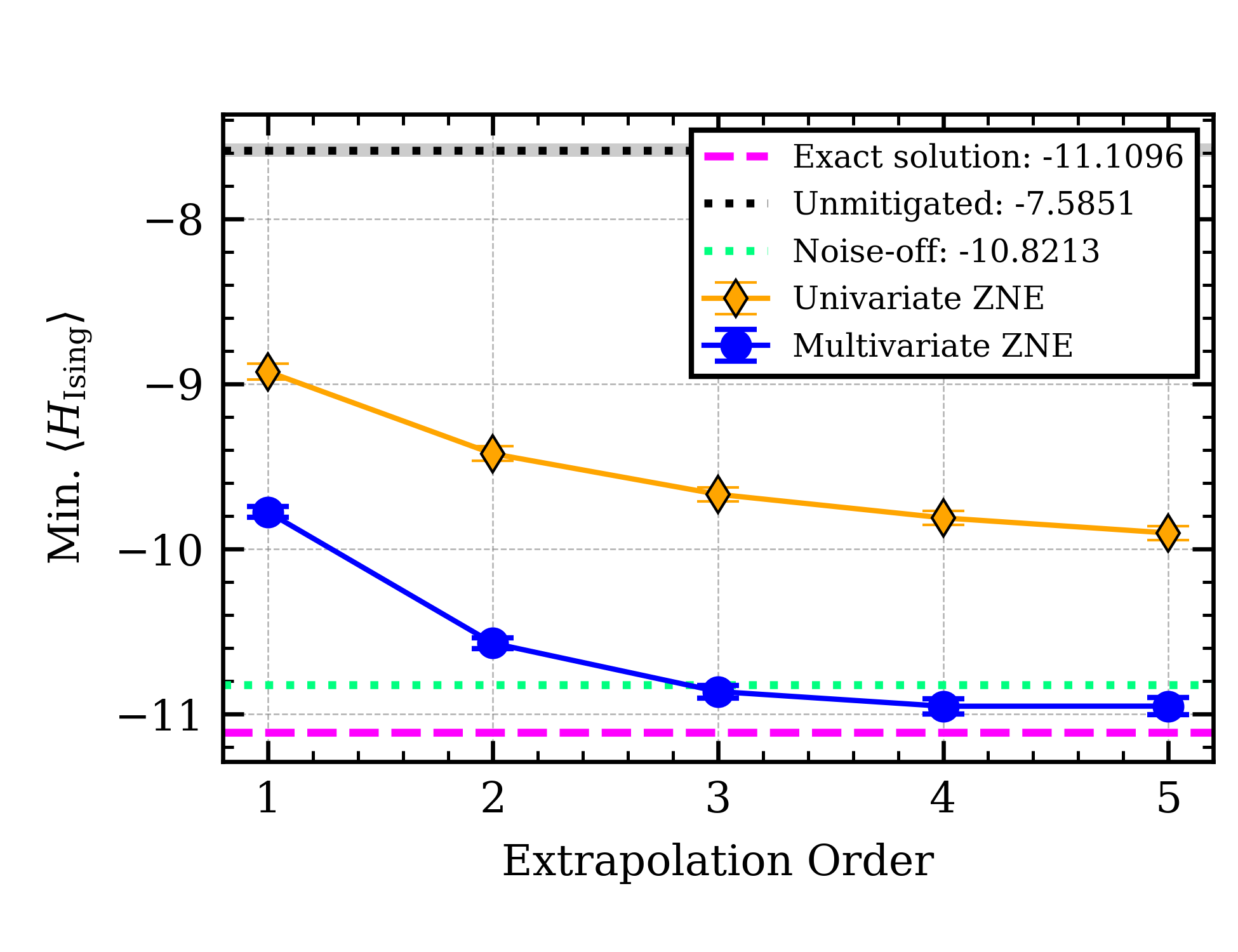}
{\hl{Univariate and multivariate Richardson extrapolation with 9-qubit, 30-layer indirect-control VQE ansatz is used. The minimum expectation value of the 9-qubit $H_{\text{Ising}}$ is estimated. Local depolarizing noise is applied to the entire system with $\gamma = 0.001$ for the time-evolution gate, while the single-qubit gates ($R$, $Y$) are subjected to depolarizing noise with probability $0.001$, and the $CZ$ gate is subjected to local depolarizing noise with probability $0.005$. Initial random time parameters are sampled from the range $[0, T_{\text{max}}=20]$. The time-evolution gates are simulated using $M_n=1$ in Equation \eqref{eq trotterized time-evolution}. The error bars represent the standard deviation of the unmitigated and corresponding ZNE estimates, computed from 3 independent trials with different initial variational parameters. The number of trials is limited by the substantial computational time required to complete each VQE simulation on our workstation. The resulting standard deviations therefore reflect the uncertainty arising from random initialization. The noise-free (``noise-off'') value corresponds to a single VQE run and is shown without a mean or standard deviation, serving only as a reference for the ZNE estimates.}\label{fig-plots-zne-diff-orders-9q-uni-vs-multivariate}}

\section{\break Explicit form of the Trotter splitting error} \label{app:trotter_error_explicit}

We give  a first-order product-formula error bound in the
diamond norm, which is a direct consequence of the
variation-of-parameters approach to product-formula errors
developed in~\cite{Childs2021}, combined with the contractivity of
completely positive trace-preserving maps in the diamond norm.
A related commutator-based analysis specifically for Lindbladian
simulation was recently developed in~\cite{Wang2026}.
For the reader’s convenience, we provide the complete proof of the lemma here.
\begin{lemma}[First-order Trotter error in the diamond norm]
\label{lem:first_order_lindblad_trotter}
Let $\mathcal{L}_1$ and $\mathcal{L}_2$ be Lindblad generators on a
finite-dimensional quantum system. Then, for any $t \geq 0$,
\begin{equation}
\left\|
e^{t\mathcal{L}_1}e^{t\mathcal{L}_2}
-
e^{t(\mathcal{L}_1+\mathcal{L}_2)}
\right\|_{\diamond}
\leq
\frac{t^2}{2}
\left\|
[\mathcal{L}_1,\mathcal{L}_2]
\right\|_{\diamond}.
\label{eq:first_order_lindblad_bound}
\end{equation}
\end{lemma}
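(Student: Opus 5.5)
We compare the two families of maps by differentiating along an interpolating path, following the variation-of-parameters method of \cite{Childs2021}. Set $\mathcal{L}:=\mathcal{L}_1+\mathcal{L}_2$ and define, for $0\le s\le t$,
\begin{equation*}
F(s) := e^{(t-s)\mathcal{L}}\, e^{s\mathcal{L}_1} e^{s\mathcal{L}_2}.
\end{equation*}
This path satisfies $F(0)=e^{t\mathcal{L}}$ and $F(t)=e^{t\mathcal{L}_1}e^{t\mathcal{L}_2}$. Hence the difference in \eqref{eq:first_order_lindblad_bound} equals $\int_0^t F'(s)\,ds$. All generators act on a finite-dimensional space, so the exponentials are entire functions and the differentiation is elementary.

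The first step is to compute $F'(s)$. The product rule gives
\begin{equation*}
F'(s)= e^{(t-s)\mathcal{L}}\Big(-\mathcal{L}\,e^{s\mathcal{L}_1}e^{s\mathcal{L}_2}+\mathcal{L}_1 e^{s\mathcal{L}_1}e^{s\mathcal{L}_2}+e^{s\mathcal{L}_1}\mathcal{L}_2 e^{s\mathcal{L}_2}\Big).
\end{equation*}
Since $\mathcal{L}_1$ commutes with $e^{s\mathcal{L}_1}$, the bracket simplifies to
\begin{equation*}
\big(e^{s\mathcal{L}_1}\mathcal{L}_2-\mathcal{L}_2e^{s\mathcal{L}_1}\big)e^{s\mathcal{L}_2}=[e^{s\mathcal{L}_1},\mathcal{L}_2]\,e^{s\mathcal{L}_2}.
\end{equation*}

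The second step is to express this commutator through $[\mathcal{L}_1,\mathcal{L}_2]$. Write
\begin{equation*}
[e^{s\mathcal{L}_1},\mathcal{L}_2]=\int_0^s \frac{d}{du}\Big(e^{u\mathcal{L}_1}\mathcal{L}_2e^{(s-u)\mathcal{L}_1}\Big)du=\int_0^s e^{u\mathcal{L}_1}[\mathcal{L}_1,\mathcal{L}_2]e^{(s-u)\mathcal{L}_1}\,du.
\end{equation*}
Combining the two steps yields the exact representation
\begin{equation*}
e^{t\mathcal{L}_1}e^{t\mathcal{L}_2}-e^{t\mathcal{L}}=\int_0^t\!\!\int_0^s e^{(t-s)\mathcal{L}}e^{u\mathcal{L}_1}[\mathcal{L}_1,\mathcal{L}_2]e^{(s-u)\mathcal{L}_1}e^{s\mathcal{L}_2}\,du\,ds.
\end{equation*}

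The third step is the norm estimate. Each of $e^{(t-s)\mathcal{L}}$, $e^{u\mathcal{L}_1}$, $e^{(s-u)\mathcal{L}_1}$ and $e^{s\mathcal{L}_2}$ has a nonnegative time argument and a Lindblad generator, so by the GKLS theorem each is a CPTP map. The diamond norm is submultiplicative, and CPTP maps have diamond norm exactly $1$. Therefore the integrand has diamond norm at most $\|[\mathcal{L}_1,\mathcal{L}_2]\|_\diamond$. Applying the triangle inequality to the integral then gives the bound $\int_0^t\int_0^s du\,ds\,\|[\mathcal{L}_1,\mathcal{L}_2]\|_\diamond=\tfrac{t^2}{2}\|[\mathcal{L}_1,\mathcal{L}_2]\|_\diamond$, which is \eqref{eq:first_order_lindblad_bound}.

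The main point of care is arranging the factors so that every exponential outside the commutator is a genuine semigroup element $e^{\tau\mathcal{L}_j}$ with $\tau\ge0$. Only for such factors is contractivity available. The inner interpolation above satisfies this, whereas the more familiar identity $e^{s\mathcal{L}_1}\mathcal{L}_2e^{-s\mathcal{L}_1}$ would introduce non-CPTP factors $e^{-s\mathcal{L}_1}$. A secondary point is that $[\mathcal{L}_1,\mathcal{L}_2]$ need not be CPTP, so its diamond norm simply enters the bound as a finite constant.
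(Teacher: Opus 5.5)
Your proposal is correct and follows the paper's proof essentially step for step. You use the same interpolating path $F(s)=e^{(t-s)(\mathcal{L}_1+\mathcal{L}_2)}e^{s\mathcal{L}_1}e^{s\mathcal{L}_2}$, the same inner interpolation $e^{u\mathcal{L}_1}\mathcal{L}_2e^{(s-u)\mathcal{L}_1}$ to rewrite $[e^{s\mathcal{L}_1},\mathcal{L}_2]$ in terms of $[\mathcal{L}_1,\mathcal{L}_2]$ with only forward-time CPTP factors, and the same contractivity-plus-submultiplicativity estimate; your remark on avoiding the non-CPTP factor $e^{-s\mathcal{L}_1}$ spells out a choice the paper makes without comment.
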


\begin{proof}[Proof]
We first define $F(u)$ for $0\leq u\leq t$ by
\begin{equation}
F(u)
=
e^{(t-u)(\mathcal{L}_1+\mathcal{L}_2)}
e^{u\mathcal{L}_1}
e^{u\mathcal{L}_2}.
\label{eq:F_definition}
\end{equation}
Then, it satisfies $F(0)=e^{t(\mathcal{L}_1+\mathcal{L}_2)},
$ and $F(t)=e^{t\mathcal{L}_1}e^{t\mathcal{L}_2}$.
Differentiating~\eqref{eq:F_definition}, we derive
\begin{align}
F'(u)
={}&
e^{(t-u)(\mathcal{L}_1+\mathcal{L}_2)}
\left[
e^{u\mathcal{L}_1},\mathcal{L}_2
\right]
e^{u\mathcal{L}_2}.
\label{eq:Fprime}
\end{align}
We further define $G(v)$ by $G(v)
=e^{v\mathcal{L}_1}\mathcal{L}_2e^{(u-v)\mathcal{L}_1}$ for $0\leq v\leq u$.
Then, $G'(v)$ can be written as 
\begin{align}
G'(v)
&=
e^{v\mathcal{L}_1}
[\mathcal{L}_1,\mathcal{L}_2]
e^{(u-v)\mathcal{L}_1}.
\label{eq:Gprime}
\end{align}
Since $G(u)=e^{u\mathcal{L}_1}\mathcal{L}_2
$ and $G(0)=\mathcal{L}_2e^{u\mathcal{L}_1}$,
the fundamental theorem of calculus leads to
\begin{equation}
\left[
e^{u\mathcal{L}_1},\mathcal{L}_2
\right]
=
\int_0^u
e^{v\mathcal{L}_1}
[\mathcal{L}_1,\mathcal{L}_2]
e^{(u-v)\mathcal{L}_1}
\,dv.
\label{eq:commutator_integral}
\end{equation}
Substituting~\eqref{eq:commutator_integral} into~\eqref{eq:Fprime}, we
obtain
\begin{align}
F'(u)
=
\int_0^u
&
e^{(t-u)(\mathcal{L}_1+\mathcal{L}_2)}
e^{v\mathcal{L}_1}
[\mathcal{L}_1,\mathcal{L}_2]
\nonumber\\
&\times
e^{(u-v)\mathcal{L}_1}
e^{u\mathcal{L}_2}
\,dv.
\label{eq:Fprime_integral}
\end{align}
Integrating from $u=0$ to $u=t$ gives the error representation
\begin{align}
&
e^{t\mathcal{L}_1}e^{t\mathcal{L}_2}
-
e^{t(\mathcal{L}_1+\mathcal{L}_2)}
\nonumber\\
&\quad=
\int_0^t du
\int_0^u dv\,
e^{(t-u)(\mathcal{L}_1+\mathcal{L}_2)}
e^{v\mathcal{L}_1}
[\mathcal{L}_1,\mathcal{L}_2]
e^{(u-v)\mathcal{L}_1}
e^{u\mathcal{L}_2}.
\label{eq:first_order_error_representation}
\end{align}

Because $\mathcal{L}_1$, $\mathcal{L}_2$, and
$\mathcal{L}_1+\mathcal{L}_2$ are Lindblad generators, their
forward-time evolutions are completely positive and trace preserving.
Consequently, for all nonnegative times $s$,
\begin{equation}
\left\|e^{s\mathcal{L}_j}\right\|_{\diamond}
=
\left\|e^{s(\mathcal{L}_1+\mathcal{L}_2)}\right\|_{\diamond}
=1.
\end{equation}
Using the triangle inequality and the submultiplicativity of the
diamond norm in~\eqref{eq:first_order_error_representation}, we obtain
\begin{align}
&
\left\|
e^{t\mathcal{L}_1}e^{t\mathcal{L}_2}
-
e^{t(\mathcal{L}_1+\mathcal{L}_2)}
\right\|_{\diamond}
\nonumber\\
&\quad\leq
\int_0^tdu\int_0^u dv\,
\left\|
[\mathcal{L}_1,\mathcal{L}_2]
\right\|_{\diamond}
\nonumber\\
&\quad=
\frac{t^2}{2}
\left\|
[\mathcal{L}_1,\mathcal{L}_2]
\right\|_{\diamond},
\end{align}
which proves~\eqref{eq:first_order_lindblad_bound}.
\end{proof}
The formula for the $r$-step version of the Lindbladian is obtained by adapting the exact same argument used by Childs {\em et al.} in~\cite{Childs2021} for the formula regarding Hermitian operators to Lemma~\ref{lem:first_order_lindblad_trotter}.
Thus, when $\mathcal{L}_1$ and $\mathcal{L}_2$ are Lindblad generators,
for any $t \geq 0$ and any positive integer $r$,
\begin{equation}
\left\|
\left(
e^{\frac{t}{r}\mathcal{L}_1}
e^{\frac{t}{r}\mathcal{L}_2}
\right)^r
-
e^{t(\mathcal{L}_1+\mathcal{L}_2)}
\right\|_{\diamond}
\leq
\frac{t^2}{2r}
\left\|
[\mathcal{L}_1,\mathcal{L}_2]
\right\|_{\diamond}.
\label{eq:global_first_order_lindblad_bound}
\end{equation}

Now, we assume that $\mathcal{L}_1$ corresponds to the time evolution under a 1-D spin Hamiltonian $\mathcal{L}_\text{spin}=\sum_{k=1}^{N-1}\mathcal{L}_{\text{spin},k}$.
Here, $\mathcal{L}_{\text{spin},k}$ is defined by 
$\mathcal{L}_{\text{spin},k}(\rho)=-i[H_k,\rho]$, where $H_k$ is a Hamiltonian 
describing the nearest-neighbor
coupling between $k$-th and $k+1$-th qubits.
For example, when the system is a 1-D XY model, $H_k$ is written down as
$H_k=\frac{1}{2}(X_kX_{k+1}+Z_kZ_{k+1})$. 
On the other hand, we assume that $\mathcal{L}_2$ is a Lindblad corresponding to a local noise
given by $\mathcal{L}_2=\mathcal{L}_\text{noise}=\gamma\sum_{j=1}^N \mathcal{L}_j$ with each
$\mathcal{L}_j$ acting only on qubit $j$. 
Applying Equation~\eqref{eq:global_first_order_lindblad_bound} to 
$\mathcal{L}_1=\mathcal{L}_\text{spin}$ and $\mathcal{L}_2=\mathcal{L}_\text{noise}$,
the error of $r$-step Trotter decomposition is evaluated as follows: 
\begin{equation}
\left\|
\left(
e^{\frac{t}{r}\mathcal{L}_\text{spin}}
e^{\frac{t}{r}\mathcal{L}_\text{noise}}
\right)^r
-
e^{t(\mathcal{L}_\text{spin}+\mathcal{L}_\text{noise})}
\right\|_{\diamond}
\leq
\frac{t^2}{2r}
\left\|
[\mathcal{L}_\text{spin},\mathcal{L}_\text{noise}]
\right\|_{\diamond}.
\label{eq-app-diamond}
\end{equation}
Because $\mathcal{L}_{\text{spin},k}$ acts only on qubits $\{k,k+1\}$ and $\mathcal{L}_j$
acts only on qubit $j$, these two superoperators commute exactly whenever
$j\notin\{k,k+1\}$. Hence, the double sum defining
$[\mathcal{L}_\text{spin},\mathcal{L}_\text{noise}]$ collapses to at most
$2(N-1)$ nonzero terms — one for each qubit adjacent to each of the $N-1$
bonds:
\begin{equation}
    [\mathcal{L}_\text{spin},\mathcal{L}_\text{noise}]
    = \gamma\sum_{k=1}^{N-1}\Big([\mathcal{L}_{\text{spin},k},\mathcal{L}_k]+[\mathcal{L}_{\text{spin},k},\mathcal{L}_{k+1}]\Big).
\end{equation}
Further, we assume that the couplings of the spin chain
and the decoherence rates of noise are uniform.
Then, we further derive
\begin{equation}
\begin{aligned}
    \|[\mathcal{L}_\text{spin},\mathcal{L}_\text{noise}]\|_\diamond
    &\le 2(N-1)\gamma\,\|[\mathcal{L}_{\text{spin},1},\mathcal{L}_1]\|_\diamond
\end{aligned}
\label{eq-app-cloc}
\end{equation}
where $\left\|[\mathcal{L}_{\text{spin},1},\mathcal{L}_1]\right\|_\diamond$ is a two-qubit quantity that does \emph{not} grow with
$N$. Combining Equations~\eqref{eq-app-diamond}--\eqref{eq-app-cloc}, 
the error of $r$-step Trotter decomposition takes the explicit form
\begin{equation}
\begin{aligned}
&\left\|
\left(
e^{\frac{t}{r}\mathcal{L}_\text{spin}}
e^{\frac{t}{r}\mathcal{L}_\text{noise}}
\right)^r
-
e^{t(\mathcal{L}_\text{spin}+\mathcal{L}_\text{noise})}
\right\|_{\diamond}\\
    &\qquad \le
    \frac{\gamma t^2(N-1)}{2r}
    \|[\mathcal{L}_{\text{spin},1},\mathcal{L}_1]\|_\diamond.
\end{aligned}
\label{eq-app-final}
\end{equation}
We note that the above inequality is valid as long as \(\mathcal{L}_j\) is an identical Lindbladian acting on the \(j\)-th qubit, i.e., the noise is local and identical for all qubits. 

For convenience, we define the two-qubit constant
$$
c_\text{noise}
:=
\left\|[\mathcal{L}_{\text{spin},1},\mathcal{L}_1]\right\|_\diamond .
$$
Then, Equation \eqref{eq-app-final} guarantees that $\frac{1}{2}\gamma t^2(N-1)c_\text{noise}$ is 
an upperbound of the one-step splitting error. Requiring this local splitting error to remain small gives the step-size condition
\begin{equation}
t\ll
\left[c_\text{noise}(N-1)\gamma\right]^{-1/2}.    
\end{equation}
The constant \(c_\text{noise}\) depends only on the local noise model and can be evaluated on a single two-qubit bond, independently of \(N\).

To evaluate \(c_\text{noise}\), we distinguish the local Lindbladian \(\mathcal{L}_j\) from its Lindblad jump operator. For a single jump operator \(L\), define the corresponding dissipator by
$$
\mathcal{D}_L(\rho)
=
L\rho L^\dagger
-\frac{1}{2}\{L^\dagger L,\rho\}.
$$
A straightforward calculation leads to
$$
[\mathcal{L}_{\text{spin},k},\mathcal{D}_L](\rho)
=
K\rho L^\dagger
+
L\rho K^\dagger
-\frac{1}{2}
\{K^\dagger L+L^\dagger K,\rho\},
$$
where $K$ is defined by \(K\equiv-i[H_k,L]\).
Let us define $\Gamma_{A,B}$ by $\Gamma_{A,B}(\rho)=A\rho B$.
Then,  $\Gamma_{A,B}$ satisfies 
$\|\Gamma_{A,B}\|_\diamond
=\|A\|_{\rm op}\|B\|_{\rm op}$.
By using this equality, 
$c_\text{noise}=\left\|[\mathcal{L}_{\text{spin},k},\mathcal{D}_L]\right\|_\diamond$
can be evaluated as
\begin{align}
&\left\|[\mathcal{L}_{\text{spin},k},\mathcal{D}_L]\right\|_\diamond \nonumber\\
=&\left\|\Gamma_{K,L^\dagger}+\Gamma_{L,K^\dagger}
-\frac{1}{2}(\Gamma_{K^\dagger L,I}+\Gamma_{L^\dagger K,I}
+\Gamma_{I,K^\dagger L}+\Gamma_{I,L^\dagger K})\right\|_\diamond 
\nonumber \\
\le&
4\left\|[H_k,L]\right\|_{\rm op}
\left\|L\right\|_{\rm op}.
\end{align}

For the XY model, since $H_k$ is given by
$H_k=\frac{1}{2}(X_kX_{k+1}+Z_kZ_{k+1})$,
the commutator between $H_k$ and a Pauli matrix can be easily derived as
$$
\|[H_k,X_k]\|_{\rm op}
=
\|[H_k,Z_k]\|_{\rm op}
=1,
\qquad
\|[H_k,Y_k]\|_{\rm op}=2.
$$

In the case of local depolarizing noise, 
$\mathcal{L}_i$ is given by 
$\mathcal{L}_i=\frac{1}{4}(\mathcal{D}_{X_i}+\mathcal{D}_{Y_i}+\mathcal{D}_{Z_i})$.
Consequently, $c_\text{noise}
$ can be evaluated as
$c_\text{noise}
\le
\frac{1}{4}\cdot 4(1+2+1)
=4$ for the XY model.

In the case of local dephasing noise, 
$\mathcal{L}_i$ is given by 
$\mathcal{L}_i=\frac{1}{2}\mathcal{D}_{Z_i}$.
Consequently, $c_\text{noise}
$ can be evaluated as
$
c_\text{noise}
\le
\frac{1}{2}\cdot 4
=2$ for the XY model.

For local amplitude damping with
$\mathcal{L}_i=\mathcal{D}_{\sigma_i^-}$,
where $\sigma_i^-=\frac{X_i-iY_i}{2}$,
we have
$$
\|[H_k,\sigma_k^-]\|_{\rm op}
\le
\frac{1}{2}
\left(
\|[H_k,X_k]\|_{\rm op}
+
\|[H_k,Y_k]\|_{\rm op}
\right),
$$ and hence $c_\text{noise}\le 6$ for the XY model.

\bibliographystyle{IEEEtran}

\bibliography{external/reference}

% Generated by IEEEtran.bst, version: 1.14 (2015/08/26)
\begin{thebibliography}{100}
\providecommand{\url}[1]{#1}
\csname url@samestyle\endcsname
\providecommand{\newblock}{\relax}
\providecommand{\bibinfo}[2]{#2}
\providecommand{\BIBentrySTDinterwordspacing}{\spaceskip=0pt\relax}
\providecommand{\BIBentryALTinterwordstretchfactor}{4}
\providecommand{\BIBentryALTinterwordspacing}{\spaceskip=\fontdimen2\font plus
\BIBentryALTinterwordstretchfactor\fontdimen3\font minus \fontdimen4\font\relax}
\providecommand{\BIBforeignlanguage}[2]{{%
\expandafter\ifx\csname l@#1\endcsname\relax
\typeout{** WARNING: IEEEtran.bst: No hyphenation pattern has been}%
\typeout{** loaded for the language `#1'. Using the pattern for}%
\typeout{** the default language instead.}%
\else
\language=\csname l@#1\endcsname
\fi
#2}}
\providecommand{\BIBdecl}{\relax}
\BIBdecl

\bibitem{Benioff1980}
\BIBentryALTinterwordspacing
P.~Benioff, ``The computer as a physical system: A microscopic quantum mechanical hamiltonian model of computers as represented by turing machines,'' \emph{Journal of Statistical Physics}, vol.~22, no.~5, pp. 563--591, May 1980. [Online]. Available: \url{https://doi.org/10.1007/BF01011339}
\BIBentrySTDinterwordspacing

\bibitem{Feynman1982}
\BIBentryALTinterwordspacing
R.~P. Feynman, ``Simulating physics with computers,'' \emph{International Journal of Theoretical Physics}, vol.~21, no.~6, pp. 467--488, Jun 1982. [Online]. Available: \url{https://doi.org/10.1007/BF02650179}
\BIBentrySTDinterwordspacing

\bibitem{Deutsch85}
\BIBentryALTinterwordspacing
D.~Deutsch, ``Quantum theory, the church-turing principle and the universal quantum computer,'' \emph{Proceedings of the Royal Society of London. A. Mathematical and Physical Sciences}, vol. 400, no. 1818, pp. 97--117, 1985. [Online]. Available: \url{https://royalsocietypublishing.org/doi/abs/10.1098/rspa.1985.0070}
\BIBentrySTDinterwordspacing

\bibitem{Shor_1997}
\BIBentryALTinterwordspacing
P.~W. Shor, ``Polynomial-time algorithms for prime factorization and discrete logarithms on a quantum computer,'' \emph{SIAM Journal on Computing}, vol.~26, no.~5, pp. 1484--1509, oct 1997. [Online]. Available: \url{http://dx.doi.org/10.1137/S0097539795293172}
\BIBentrySTDinterwordspacing

\bibitem{grover1996fastquantummechanicalalgorithm}
\BIBentryALTinterwordspacing
L.~K. Grover, ``A fast quantum mechanical algorithm for database search,'' 1996. [Online]. Available: \url{https://arxiv.org/abs/quant-ph/9605043}
\BIBentrySTDinterwordspacing

\bibitem{Landauer95}
\BIBentryALTinterwordspacing
R.~Landauer, ``Is quantum mechanics useful?'' \emph{Philosophical Transactions of the Royal Society of London. Series A: Physical and Engineering Sciences}, vol. 353, no. 1703, pp. 367--376, 1995. [Online]. Available: \url{https://royalsocietypublishing.org/doi/abs/10.1098/rsta.1995.0106}
\BIBentrySTDinterwordspacing

\bibitem{Shor95}
\BIBentryALTinterwordspacing
P.~W. Shor, ``Scheme for reducing decoherence in quantum computer memory,'' \emph{Phys. Rev. A}, vol.~52, pp. R2493--R2496, Oct 1995. [Online]. Available: \url{https://link.aps.org/doi/10.1103/PhysRevA.52.R2493}
\BIBentrySTDinterwordspacing

\bibitem{Steane96}
\BIBentryALTinterwordspacing
A.~M. Steane, ``Error correcting codes in quantum theory,'' \emph{Phys. Rev. Lett.}, vol.~77, pp. 793--797, Jul 1996. [Online]. Available: \url{https://link.aps.org/doi/10.1103/PhysRevLett.77.793}
\BIBentrySTDinterwordspacing

\bibitem{Calderbank96}
\BIBentryALTinterwordspacing
A.~R. Calderbank and P.~W. Shor, ``Good quantum error-correcting codes exist,'' \emph{Phys. Rev. A}, vol.~54, pp. 1098--1105, Aug 1996. [Online]. Available: \url{https://link.aps.org/doi/10.1103/PhysRevA.54.1098}
\BIBentrySTDinterwordspacing

\bibitem{Steane96b}
\BIBentryALTinterwordspacing
A.~Steane, ``Multiple-particle interference and quantum error correction,'' \emph{Proceedings of the Royal Society of London. Series A: Mathematical, Physical and Engineering Sciences}, vol. 452, no. 1954, pp. 2551--2577, 1996. [Online]. Available: \url{https://royalsocietypublishing.org/doi/abs/10.1098/rspa.1996.0136}
\BIBentrySTDinterwordspacing

\bibitem{Gottesman96}
\BIBentryALTinterwordspacing
D.~Gottesman, ``Class of quantum error-correcting codes saturating the quantum hamming bound,'' \emph{Phys. Rev. A}, vol.~54, pp. 1862--1868, Sep 1996. [Online]. Available: \url{https://link.aps.org/doi/10.1103/PhysRevA.54.1862}
\BIBentrySTDinterwordspacing

\bibitem{Kitaev97}
\BIBentryALTinterwordspacing
A.~Y. Kitaev, ``Quantum computations: algorithms and error correction,'' \emph{Russian Mathematical Surveys}, vol.~52, no.~6, p. 1191, dec 1997. [Online]. Available: \url{https://dx.doi.org/10.1070/RM1997v052n06ABEH002155}
\BIBentrySTDinterwordspacing

\bibitem{Shor96b}
P.~Shor, ``Fault-tolerant quantum computation,'' in \emph{Proceedings of 37th Conference on Foundations of Computer Science}, 1996, pp. 56--65.

\bibitem{Gottesman98}
\BIBentryALTinterwordspacing
D.~Gottesman, ``Theory of fault-tolerant quantum computation,'' \emph{Phys. Rev. A}, vol.~57, pp. 127--137, Jan 1998. [Online]. Available: \url{https://link.aps.org/doi/10.1103/PhysRevA.57.127}
\BIBentrySTDinterwordspacing

\bibitem{Preskill98}
\BIBentryALTinterwordspacing
J.~Preskill, \emph{Fault-Tolerant Quantum Computation}.\hskip 1em plus 0.5em minus 0.4em\relax World Scientific, 1998, pp. 213--269. [Online]. Available: \url{https://www.worldscientific.com/doi/abs/10.1142/9789812385253_0008}
\BIBentrySTDinterwordspacing

\bibitem{Steane1999}
\BIBentryALTinterwordspacing
A.~M. Steane, ``Efficient fault-tolerant quantum computing,'' \emph{Nature}, vol. 399, no. 6732, pp. 124--126, May 1999. [Online]. Available: \url{https://doi.org/10.1038/20127}
\BIBentrySTDinterwordspacing

\bibitem{Aharanov08}
\BIBentryALTinterwordspacing
D.~Aharonov and M.~Ben-Or, ``Fault-tolerant quantum computation with constant error rate,'' \emph{SIAM Journal on Computing}, vol.~38, no.~4, pp. 1207--1282, 2008. [Online]. Available: \url{https://doi.org/10.1137/S0097539799359385}
\BIBentrySTDinterwordspacing

\bibitem{Nielsen_Chuang_2010}
M.~A. Nielsen and I.~L. Chuang, \emph{Quantum Computation and Quantum Information}.\hskip 1em plus 0.5em minus 0.4em\relax Cambridge: Cambridge University Press, 2000.

\bibitem{Brun_2020}
\BIBentryALTinterwordspacing
D.~Lidar and T.~A. Brun, ``Quantum error correction,'' feb 2013. [Online]. Available: \url{http://dx.doi.org/10.1093/acrefore/9780190871994.013.35}
\BIBentrySTDinterwordspacing

\bibitem{Blais04}
\BIBentryALTinterwordspacing
A.~Blais, R.-S. Huang, A.~Wallraff, S.~M. Girvin, and R.~J. Schoelkopf, ``Cavity quantum electrodynamics for superconducting electrical circuits: An architecture for quantum computation,'' \emph{Phys. Rev. A}, vol.~69, p. 062320, Jun 2004. [Online]. Available: \url{https://link.aps.org/doi/10.1103/PhysRevA.69.062320}
\BIBentrySTDinterwordspacing

\bibitem{Schuster2007}
\BIBentryALTinterwordspacing
D.~I. Schuster, A.~A. Houck, J.~A. Schreier, A.~Wallraff, J.~M. Gambetta, A.~Blais, L.~Frunzio, J.~Majer, B.~Johnson, M.~H. Devoret, S.~M. Girvin, and R.~J. Schoelkopf, ``Resolving photon number states in a superconducting circuit,'' \emph{Nature}, vol. 445, no. 7127, pp. 515--518, Feb 2007. [Online]. Available: \url{https://doi.org/10.1038/nature05461}
\BIBentrySTDinterwordspacing

\bibitem{Blais07}
\BIBentryALTinterwordspacing
A.~Blais, J.~Gambetta, A.~Wallraff, D.~I. Schuster, S.~M. Girvin, M.~H. Devoret, and R.~J. Schoelkopf, ``Quantum-information processing with circuit quantum electrodynamics,'' \emph{Phys. Rev. A}, vol.~75, p. 032329, Mar 2007. [Online]. Available: \url{https://link.aps.org/doi/10.1103/PhysRevA.75.032329}
\BIBentrySTDinterwordspacing

\bibitem{Koch07}
\BIBentryALTinterwordspacing
J.~Koch, T.~M. Yu, J.~Gambetta, A.~A. Houck, D.~I. Schuster, J.~Majer, A.~Blais, M.~H. Devoret, S.~M. Girvin, and R.~J. Schoelkopf, ``Charge-insensitive qubit design derived from the cooper pair box,'' \emph{Phys. Rev. A}, vol.~76, p. 042319, Oct 2007. [Online]. Available: \url{https://link.aps.org/doi/10.1103/PhysRevA.76.042319}
\BIBentrySTDinterwordspacing

\bibitem{Schreier08}
\BIBentryALTinterwordspacing
J.~A. Schreier, A.~A. Houck, J.~Koch, D.~I. Schuster, B.~R. Johnson, J.~M. Chow, J.~M. Gambetta, J.~Majer, L.~Frunzio, M.~H. Devoret, S.~M. Girvin, and R.~J. Schoelkopf, ``Suppressing charge noise decoherence in superconducting charge qubits,'' \emph{Phys. Rev. B}, vol.~77, p. 180502, May 2008. [Online]. Available: \url{https://link.aps.org/doi/10.1103/PhysRevB.77.180502}
\BIBentrySTDinterwordspacing

\bibitem{Lucero2012}
\BIBentryALTinterwordspacing
E.~Lucero, R.~Barends, Y.~Chen, J.~Kelly, M.~Mariantoni, A.~Megrant, P.~O'Malley, D.~Sank, A.~Vainsencher, J.~Wenner, T.~White, Y.~Yin, A.~N. Cleland, and J.~M. Martinis, ``Computing prime factors with a josephson phase qubit quantum processor,'' \emph{Nature Physics}, vol.~8, no.~10, pp. 719--723, Oct 2012. [Online]. Available: \url{https://doi.org/10.1038/nphys2385}
\BIBentrySTDinterwordspacing

\bibitem{Barends2014}
\BIBentryALTinterwordspacing
R.~Barends, J.~Kelly, A.~Megrant, A.~Veitia, D.~Sank, E.~Jeffrey, T.~C. White, J.~Mutus, A.~G. Fowler, B.~Campbell, Y.~Chen, Z.~Chen, B.~Chiaro, A.~Dunsworth, C.~Neill, P.~O'Malley, P.~Roushan, A.~Vainsencher, J.~Wenner, A.~N. Korotkov, A.~N. Cleland, and J.~M. Martinis, ``Superconducting quantum circuits at the surface code threshold for fault tolerance,'' \emph{Nature}, vol. 508, no. 7497, pp. 500--503, Apr 2014. [Online]. Available: \url{https://doi.org/10.1038/nature13171}
\BIBentrySTDinterwordspacing

\bibitem{Kim2023}
\BIBentryALTinterwordspacing
Y.~Kim, A.~Eddins, S.~Anand, K.~X. Wei, E.~van~den Berg, S.~Rosenblatt, H.~Nayfeh, Y.~Wu, M.~Zaletel, K.~Temme, and A.~Kandala, ``Evidence for the utility of quantum computing before fault tolerance,'' \emph{Nature}, vol. 618, no. 7965, pp. 500--505, Jun 2023. [Online]. Available: \url{https://doi.org/10.1038/s41586-023-06096-3}
\BIBentrySTDinterwordspacing

\bibitem{CarreraVazquez2024}
\BIBentryALTinterwordspacing
A.~Carrera~Vazquez, C.~Tornow, D.~Rist{\`{e}}, S.~Woerner, M.~Takita, and D.~J. Egger, ``Combining quantum processors with real-time classical communication,'' \emph{Nature}, vol. 636, no. 8041, pp. 75--79, Dec 2024. [Online]. Available: \url{https://doi.org/10.1038/s41586-024-08178-2}
\BIBentrySTDinterwordspacing

\bibitem{Acharya2024}
\BIBentryALTinterwordspacing
R.~Acharya \emph{et~al.}, ``Quantum error correction below the surface code threshold,'' \emph{Nature}, Dec 2024. [Online]. Available: \url{https://doi.org/10.1038/s41586-024-08449-y}
\BIBentrySTDinterwordspacing

\bibitem{quantumsup}
\BIBentryALTinterwordspacing
F.~Arute \emph{et~al.}, ``Quantum supremacy using a programmable superconducting processor,'' \emph{Nature}, vol. 574, pp. 505--510, 2019. [Online]. Available: \url{https://www.nature.com/articles/s41586-019-1666-5}
\BIBentrySTDinterwordspacing

\bibitem{PhysRevLett.127.180501}
\BIBentryALTinterwordspacing
Y.~Wu, W.-S. Bao, S.~Cao, F.~Chen, M.-C. Chen, X.~Chen, T.-H. Chung, H.~Deng, Y.~Du, D.~Fan, M.~Gong, C.~Guo, C.~Guo, S.~Guo, L.~Han, L.~Hong, H.-L. Huang, Y.-H. Huo, L.~Li, N.~Li, S.~Li, Y.~Li, F.~Liang, C.~Lin, J.~Lin, H.~Qian, D.~Qiao, H.~Rong, H.~Su, L.~Sun, L.~Wang, S.~Wang, D.~Wu, Y.~Xu, K.~Yan, W.~Yang, Y.~Yang, Y.~Ye, J.~Yin, C.~Ying, J.~Yu, C.~Zha, C.~Zhang, H.~Zhang, K.~Zhang, Y.~Zhang, H.~Zhao, Y.~Zhao, L.~Zhou, Q.~Zhu, C.-Y. Lu, C.-Z. Peng, X.~Zhu, and J.-W. Pan, ``Strong quantum computational advantage using a superconducting quantum processor,'' \emph{Phys. Rev. Lett.}, vol. 127, p. 180501, Oct 2021. [Online]. Available: \url{https://link.aps.org/doi/10.1103/PhysRevLett.127.180501}
\BIBentrySTDinterwordspacing

\bibitem{Preskill2018}
\BIBentryALTinterwordspacing
J.~Preskill, ``Quantum {C}omputing in the {NISQ} era and beyond,'' \emph{{Quantum}}, vol.~2, p.~79, aug 2018. [Online]. Available: \url{https://doi.org/10.22331/q-2018-08-06-79}
\BIBentrySTDinterwordspacing

\bibitem{nisq}
M.~AbuGhanem and H.~Eleuch, ``Nisq computers: A path to quantum supremacy,'' \emph{IEEE Access}, vol.~12, pp. 102\,941--102\,961, 2024.

\bibitem{Endo_2021}
\BIBentryALTinterwordspacing
S.~Endo, Z.~Cai, S.~C. Benjamin, and X.~Yuan, ``Hybrid quantum-classical algorithms and quantum error mitigation,'' \emph{Journal of the Physical Society of Japan}, vol.~90, no.~3, p. 032001, mar 2021. [Online]. Available: \url{http://dx.doi.org/10.7566/JPSJ.90.032001}
\BIBentrySTDinterwordspacing

\bibitem{vqa}
\BIBentryALTinterwordspacing
M.~Cerezo, A.~Arrasmith, R.~Babbush, S.~C. Benjamin, S.~Endo, K.~Fujii, J.~R. McClean, K.~Mitarai, X.~Yuan, L.~Cincio, and P.~J. Coles, ``Variational quantum algorithms,'' \emph{Nature Reviews Physics}, vol.~3, no.~9, pp. 625--644, aug 2021. [Online]. Available: \url{http://dx.doi.org/10.1038/s42254-021-00348-9}
\BIBentrySTDinterwordspacing

\bibitem{hegde2024buzzstrategicpathsenabling}
\BIBentryALTinterwordspacing
P.~R. Hegde, O.~Kyriienko, H.~Heimonen, P.~Tolias, G.~Netzer, P.~Barkoutsos, R.~Vinuesa, I.~Peng, and S.~Markidis, ``Beyond the buzz: Strategic paths for enabling useful nisq applications,'' 2024. [Online]. Available: \url{https://arxiv.org/abs/2405.14561}
\BIBentrySTDinterwordspacing

\bibitem{Peruzzo_2014}
\BIBentryALTinterwordspacing
A.~Peruzzo, J.~McClean, P.~Shadbolt, M.-H. Yung, X.-Q. Zhou, P.~J. Love, A.~Aspuru-Guzik, and J.~L. O'Brien, ``A variational eigenvalue solver on a photonic quantum processor,'' \emph{Nature Communications}, vol.~5, no.~1, jul 2014. [Online]. Available: \url{http://dx.doi.org/10.1038/ncomms5213}
\BIBentrySTDinterwordspacing

\bibitem{Tilly_2022}
\BIBentryALTinterwordspacing
J.~Tilly, H.~Chen, S.~Cao, D.~Picozzi, K.~Setia, Y.~Li, E.~Grant, L.~Wossnig, I.~Rungger, G.~H. Booth, and J.~Tennyson, ``The variational quantum eigensolver: A review of methods and best practices,'' \emph{Physics Reports}, vol. 986, pp. 1--128, nov 2022. [Online]. Available: \url{http://dx.doi.org/10.1016/j.physrep.2022.08.003}
\BIBentrySTDinterwordspacing

\bibitem{PhysRevX.7.021050}
\BIBentryALTinterwordspacing
Y.~Li and S.~C. Benjamin, ``Efficient variational quantum simulator incorporating active error minimization,'' \emph{Phys. Rev. X}, vol.~7, p. 021050, Jun 2017. [Online]. Available: \url{https://link.aps.org/doi/10.1103/PhysRevX.7.021050}
\BIBentrySTDinterwordspacing

\bibitem{PhysRevLett.119.180509}
\BIBentryALTinterwordspacing
K.~Temme, S.~Bravyi, and J.~M. Gambetta, ``Error mitigation for short-depth quantum circuits,'' \emph{Phys. Rev. Lett.}, vol. 119, p. 180509, Nov 2017. [Online]. Available: \url{https://link.aps.org/doi/10.1103/PhysRevLett.119.180509}
\BIBentrySTDinterwordspacing

\bibitem{Cai_2023}
\BIBentryALTinterwordspacing
Z.~Cai, R.~Babbush, S.~C. Benjamin, S.~Endo, W.~J. Huggins, Y.~Li, J.~R. McClean, and T.~E. O’Brien, ``Quantum error mitigation,'' \emph{Reviews of Modern Physics}, vol.~95, no.~4, Dec. 2023. [Online]. Available: \url{http://dx.doi.org/10.1103/RevModPhys.95.045005}
\BIBentrySTDinterwordspacing

\bibitem{Czarnik2021}
\BIBentryALTinterwordspacing
P.~Czarnik, A.~Arrasmith, P.~J. Coles, and L.~Cincio, ``Error mitigation with {C}lifford quantum-circuit data,'' \emph{{Quantum}}, vol.~5, p. 592, nov 2021. [Online]. Available: \url{https://doi.org/10.22331/q-2021-11-26-592}
\BIBentrySTDinterwordspacing

\bibitem{Lloyd04}
\BIBentryALTinterwordspacing
S.~Lloyd, A.~J. Landahl, and J.-J.~E. Slotine, ``Universal quantum interfaces,'' \emph{Phys. Rev. A}, vol.~69, p. 012305, Jan 2004. [Online]. Available: \url{https://link.aps.org/doi/10.1103/PhysRevA.69.012305}
\BIBentrySTDinterwordspacing

\bibitem{Burgarth07}
\BIBentryALTinterwordspacing
D.~Burgarth and V.~Giovannetti, ``Full control by locally induced relaxation,'' \emph{Phys. Rev. Lett.}, vol.~99, p. 100501, Sep 2007. [Online]. Available: \url{https://link.aps.org/doi/10.1103/PhysRevLett.99.100501}
\BIBentrySTDinterwordspacing

\bibitem{Burgarth09}
\BIBentryALTinterwordspacing
D.~Burgarth, S.~Bose, C.~Bruder, and V.~Giovannetti, ``Local controllability of quantum networks,'' \emph{Phys. Rev. A}, vol.~79, p. 060305, Jun 2009. [Online]. Available: \url{https://link.aps.org/doi/10.1103/PhysRevA.79.060305}
\BIBentrySTDinterwordspacing

\bibitem{Kay10}
\BIBentryALTinterwordspacing
A.~Kay and P.~J. Pemberton-Ross, ``Computation on spin chains with limited access,'' \emph{Phys. Rev. A}, vol.~81, p. 010301, Jan 2010. [Online]. Available: \url{https://link.aps.org/doi/10.1103/PhysRevA.81.010301}
\BIBentrySTDinterwordspacing

\bibitem{Burgarth10}
\BIBentryALTinterwordspacing
D.~Burgarth, K.~Maruyama, M.~Murphy, S.~Montangero, T.~Calarco, F.~Nori, and M.~B. Plenio, ``Scalable quantum computation via local control of only two qubits,'' \emph{Phys. Rev. A}, vol.~81, p. 040303, Apr 2010. [Online]. Available: \url{https://link.aps.org/doi/10.1103/PhysRevA.81.040303}
\BIBentrySTDinterwordspacing

\bibitem{Heule10}
\BIBentryALTinterwordspacing
R.~Heule, C.~Bruder, D.~Burgarth, and V.~M. Stojanovic, ``Local quantum control of heisenberg spin chains,'' \emph{Phys. Rev. A}, vol.~82, p. 052333, Nov 2010. [Online]. Available: \url{https://link.aps.org/doi/10.1103/PhysRevA.82.052333}
\BIBentrySTDinterwordspacing

\bibitem{Maruyama2016}
\BIBentryALTinterwordspacing
K.~Maruyama and D.~Burgarth, \emph{Gateway Schemes of Quantum Control for Spin Networks}.\hskip 1em plus 0.5em minus 0.4em\relax New York, NY: Springer New York, 2016, pp. 167--192. [Online]. Available: \url{https://doi.org/10.1007/978-1-4939-3658-8_6}
\BIBentrySTDinterwordspacing

\bibitem{Gokler17}
\BIBentryALTinterwordspacing
C.~Gokler, S.~Lloyd, P.~Shor, and K.~Thompson, ``Efficiently controllable graphs,'' \emph{Phys. Rev. Lett.}, vol. 118, p. 260501, Jun 2017. [Online]. Available: \url{https://link.aps.org/doi/10.1103/PhysRevLett.118.260501}
\BIBentrySTDinterwordspacing

\bibitem{KATO2020168046}
\BIBentryALTinterwordspacing
G.~Kato, M.~Owari, and K.~Maruyama, ``Algebra and hilbert space structures induced by quantum probes,'' \emph{Annals of Physics}, vol. 412, p. 168046, 2020. [Online]. Available: \url{https://www.sciencedirect.com/science/article/pii/S000349161930301X}
\BIBentrySTDinterwordspacing

\bibitem{indirect}
\BIBentryALTinterwordspacing
G.~Kato, ``Quantum information processing via indirect quantum control,'' \emph{NTT Technical Review}, vol.~18, no.~11, pp. 32--35, 2020. [Online]. Available: \url{https://cir.nii.ac.jp/crid/1390577763022346752}
\BIBentrySTDinterwordspacing

\bibitem{Burgarth09b}
\BIBentryALTinterwordspacing
D.~Burgarth, K.~Maruyama, and F.~Nori, ``Coupling strength estimation for spin chains despite restricted access,'' \emph{Phys. Rev. A}, vol.~79, p. 020305, Feb 2009. [Online]. Available: \url{https://link.aps.org/doi/10.1103/PhysRevA.79.020305}
\BIBentrySTDinterwordspacing

\bibitem{Franco09}
\BIBentryALTinterwordspacing
C.~Di~Franco, M.~Paternostro, and M.~S. Kim, ``Hamiltonian tomography in an access-limited setting without state initialization,'' \emph{Phys. Rev. Lett.}, vol. 102, p. 187203, May 2009. [Online]. Available: \url{https://link.aps.org/doi/10.1103/PhysRevLett.102.187203}
\BIBentrySTDinterwordspacing

\bibitem{Burgarth09c}
\BIBentryALTinterwordspacing
D.~Burgarth and K.~Maruyama, ``Indirect hamiltonian identification through a small gateway,'' \emph{New Journal of Physics}, vol.~11, no.~10, p. 103019, oct 2009. [Online]. Available: \url{https://dx.doi.org/10.1088/1367-2630/11/10/103019}
\BIBentrySTDinterwordspacing

\bibitem{Kato14}
\BIBentryALTinterwordspacing
Y.~Kato and N.~Yamamoto, ``Structure identification and state initialization of spin networks with limited access,'' \emph{New Journal of Physics}, vol.~16, no.~2, p. 023024, feb 2014. [Online]. Available: \url{https://dx.doi.org/10.1088/1367-2630/16/2/023024}
\BIBentrySTDinterwordspacing

\bibitem{Zhang14}
\BIBentryALTinterwordspacing
J.~Zhang and M.~Sarovar, ``Quantum hamiltonian identification from measurement time traces,'' \emph{Phys. Rev. Lett.}, vol. 113, p. 080401, Aug 2014. [Online]. Available: \url{https://link.aps.org/doi/10.1103/PhysRevLett.113.080401}
\BIBentrySTDinterwordspacing

\bibitem{Sone17}
\BIBentryALTinterwordspacing
A.~Sone and P.~Cappellaro, ``Hamiltonian identifiability assisted by a single-probe measurement,'' \emph{Phys. Rev. A}, vol.~95, p. 022335, Feb 2017. [Online]. Available: \url{https://link.aps.org/doi/10.1103/PhysRevA.95.022335}
\BIBentrySTDinterwordspacing

\bibitem{Burgarth12}
\BIBentryALTinterwordspacing
D.~Burgarth and K.~Yuasa, ``Quantum system identification,'' \emph{Phys. Rev. Lett.}, vol. 108, p. 080502, Feb 2012. [Online]. Available: \url{https://link.aps.org/doi/10.1103/PhysRevLett.108.080502}
\BIBentrySTDinterwordspacing

\bibitem{Owari15}
\BIBentryALTinterwordspacing
M.~Owari, K.~Maruyama, T.~Takui, and G.~Kato, ``Probing an untouchable environment for its identification and control,'' \emph{Phys. Rev. A}, vol.~91, p. 012343, Jan 2015. [Online]. Available: \url{https://link.aps.org/doi/10.1103/PhysRevA.91.012343}
\BIBentrySTDinterwordspacing

\bibitem{anan_indirect_vqe}
T.~Anan, M.~Honda, Y.~Nakata, and M.~Owari, ``Implementation of variational quantum eigensolver with indirect control,'' in \emph{24th Asian Quantum Information Science Conference (AQIS24)}, Sapporo, aug 2024.

\bibitem{Cao2019}
\BIBentryALTinterwordspacing
Y.~Cao, J.~Romero, J.~P. Olson, M.~Degroote, P.~D. Johnson, M.~Kieferov{\'a}, I.~D. Kivlichan, T.~Menke, B.~Peropadre, N.~P.~D. Sawaya, S.~Sim, L.~Veis, and A.~Aspuru-Guzik, ``Quantum chemistry in the age of quantum computing,'' \emph{Chemical Reviews}, vol. 119, no.~19, pp. 10\,856--10\,915, Oct 2019. [Online]. Available: \url{https://doi.org/10.1021/acs.chemrev.8b00803}
\BIBentrySTDinterwordspacing

\bibitem{RevModPhys.92.015003}
\BIBentryALTinterwordspacing
S.~McArdle, S.~Endo, A.~Aspuru-Guzik, S.~C. Benjamin, and X.~Yuan, ``Quantum computational chemistry,'' \emph{Rev. Mod. Phys.}, vol.~92, p. 015003, Mar 2020. [Online]. Available: \url{https://link.aps.org/doi/10.1103/RevModPhys.92.015003}
\BIBentrySTDinterwordspacing

\bibitem{PhysRevA.92.042303}
\BIBentryALTinterwordspacing
D.~Wecker, M.~B. Hastings, and M.~Troyer, ``Progress towards practical quantum variational algorithms,'' \emph{Phys. Rev. A}, vol.~92, p. 042303, Oct 2015. [Online]. Available: \url{https://link.aps.org/doi/10.1103/PhysRevA.92.042303}
\BIBentrySTDinterwordspacing

\bibitem{Kokail2019}
\BIBentryALTinterwordspacing
C.~Kokail, C.~Maier, R.~van Bijnen, T.~Brydges, M.~K. Joshi, P.~Jurcevic, C.~A. Muschik, P.~Silvi, R.~Blatt, C.~F. Roos, and P.~Zoller, ``Self-verifying variational quantum simulation of lattice models,'' \emph{Nature}, vol. 569, no. 7756, pp. 355--360, May 2019. [Online]. Available: \url{https://doi.org/10.1038/s41586-019-1177-4}
\BIBentrySTDinterwordspacing

\bibitem{McClean2018}
\BIBentryALTinterwordspacing
J.~R. McClean, S.~Boixo, V.~N. Smelyanskiy, R.~Babbush, and H.~Neven, ``Barren plateaus in quantum neural network training landscapes,'' \emph{Nature Communications}, vol.~9, no.~1, p. 4812, Nov 2018. [Online]. Available: \url{https://doi.org/10.1038/s41467-018-07090-4}
\BIBentrySTDinterwordspacing

\bibitem{Cerezo2021}
\BIBentryALTinterwordspacing
M.~Cerezo, A.~Sone, T.~Volkoff, L.~Cincio, and P.~J. Coles, ``Cost function dependent barren plateaus in shallow parametrized quantum circuits,'' \emph{Nature Communications}, vol.~12, no.~1, p. 1791, Mar 2021. [Online]. Available: \url{https://doi.org/10.1038/s41467-021-21728-w}
\BIBentrySTDinterwordspacing

\bibitem{Wang2021}
\BIBentryALTinterwordspacing
S.~Wang, E.~Fontana, M.~Cerezo, K.~Sharma, A.~Sone, L.~Cincio, and P.~J. Coles, ``Noise-induced barren plateaus in variational quantum algorithms,'' \emph{Nature Communications}, vol.~12, no.~1, p. 6961, Nov 2021. [Online]. Available: \url{https://doi.org/10.1038/s41467-021-27045-6}
\BIBentrySTDinterwordspacing

\bibitem{PhysRevLett.127.120502}
\BIBentryALTinterwordspacing
L.~Bittel and M.~Kliesch, ``Training variational quantum algorithms is np-hard,'' \emph{Phys. Rev. Lett.}, vol. 127, p. 120502, Sep 2021. [Online]. Available: \url{https://link.aps.org/doi/10.1103/PhysRevLett.127.120502}
\BIBentrySTDinterwordspacing

\bibitem{Cao_2025}
\BIBentryALTinterwordspacing
C.~Cao, Y.~Zhou, S.~Tannu, N.~Shannon, and R.~Joynt, ``Exploiting many-body localization for scalable variational quantum simulation,'' \emph{Quantum}, vol.~9, p. 1942, Dec. 2025. [Online]. Available: \url{http://dx.doi.org/10.22331/q-2025-12-12-1942}
\BIBentrySTDinterwordspacing

\bibitem{Grant_2019}
\BIBentryALTinterwordspacing
E.~Grant, L.~Wossnig, M.~Ostaszewski, and M.~Benedetti, ``An initialization strategy for addressing barren plateaus in parametrized quantum circuits,'' \emph{Quantum}, vol.~3, p. 214, Dec. 2019. [Online]. Available: \url{http://dx.doi.org/10.22331/q-2019-12-09-214}
\BIBentrySTDinterwordspacing

\bibitem{Kurita_2023}
\BIBentryALTinterwordspacing
T.~Kurita, H.~Qassim, M.~Ishii, H.~Oshima, S.~Sato, and J.~Emerson, ``Synergetic quantum error mitigation by randomized compiling and zero-noise extrapolation for the variational quantum eigensolver,'' \emph{Quantum}, vol.~7, p. 1184, nov 2023. [Online]. Available: \url{http://dx.doi.org/10.22331/q-2023-11-20-1184}
\BIBentrySTDinterwordspacing

\bibitem{PhysRevResearch.6.033069}
\BIBentryALTinterwordspacing
T.~Jiang, J.~Rogers, M.~S. Frank, O.~Christiansen, Y.-X. Yao, and N.~Lanat{\`{a}}, ``Error mitigation in variational quantum eigensolvers using tailored probabilistic machine learning,'' \emph{Phys. Rev. Res.}, vol.~6, p. 033069, Jul 2024. [Online]. Available: \url{https://link.aps.org/doi/10.1103/PhysRevResearch.6.033069}
\BIBentrySTDinterwordspacing

\bibitem{PhysRevLett.120.210501}
\BIBentryALTinterwordspacing
E.~F. Dumitrescu, A.~J. McCaskey, G.~Hagen, G.~R. Jansen, T.~D. Morris, T.~Papenbrock, R.~C. Pooser, D.~J. Dean, and P.~Lougovski, ``Cloud quantum computing of an atomic nucleus,'' \emph{Phys. Rev. Lett.}, vol. 120, p. 210501, May 2018. [Online]. Available: \url{https://link.aps.org/doi/10.1103/PhysRevLett.120.210501}
\BIBentrySTDinterwordspacing

\bibitem{digitalZne}
T.~Giurgica-Tiron, Y.~Hindy, R.~LaRose, A.~Mari, and W.~J. Zeng, ``Digital zero noise extrapolation for quantum error mitigation,'' in \emph{2020 IEEE International Conference on Quantum Computing and Engineering (QCE)}, 2020, pp. 306--316.

\bibitem{PhysRevA.102.012426}
\BIBentryALTinterwordspacing
A.~He, B.~Nachman, W.~A. de~Jong, and C.~W. Bauer, ``Zero-noise extrapolation for quantum-gate error mitigation with identity insertions,'' \emph{Phys. Rev. A}, vol. 102, p. 012426, Jul 2020. [Online]. Available: \url{https://link.aps.org/doi/10.1103/PhysRevA.102.012426}
\BIBentrySTDinterwordspacing

\bibitem{doi:10.1098/rsta.1911.0009}
\BIBentryALTinterwordspacing
L.~F. Richardson and R.~T. Glazebrook, ``Ix. the approximate arithmetical solution by finite differences of physical problems involving differential equations, with an application to the stresses in a masonry dam,'' \emph{Philosophical Transactions of the Royal Society of London. Series A, Containing Papers of a Mathematical or Physical Character}, vol. 210, no. 459-470, pp. 307--357, 1911. [Online]. Available: \url{https://royalsocietypublishing.org/doi/abs/10.1098/rsta.1911.0009}
\BIBentrySTDinterwordspacing

\bibitem{Menta_2026}
\BIBentryALTinterwordspacing
R.~Menta, F.~Cioni, R.~Aiudi, F.~Caravelli, M.~Polini, and V.~Giovannetti, ``Building globally controlled quantum processors with zz interactions,'' \emph{Physical Review A}, vol. 113, no.~1, Jan. 2026. [Online]. Available: \url{http://dx.doi.org/10.1103/lz5d-lnz2}
\BIBentrySTDinterwordspacing

\bibitem{PhysRevResearch.7.L012065}
\BIBentryALTinterwordspacing
R.~Menta, F.~Cioni, R.~Aiudi, M.~Polini, and V.~Giovannetti, ``Globally driven superconducting quantum computing architecture,'' \emph{Phys. Rev. Res.}, vol.~7, p. L012065, Mar 2025. [Online]. Available: \url{https://link.aps.org/doi/10.1103/PhysRevResearch.7.L012065}
\BIBentrySTDinterwordspacing

\bibitem{Suzuki_2021}
\BIBentryALTinterwordspacing
Y.~Suzuki, Y.~Kawase, Y.~Masumura, Y.~Hiraga, M.~Nakadai, J.~Chen, K.~M. Nakanishi, K.~Mitarai, R.~Imai, S.~Tamiya, T.~Yamamoto, T.~Yan, T.~Kawakubo, Y.~O. Nakagawa, Y.~Ibe, Y.~Zhang, H.~Yamashita, H.~Yoshimura, A.~Hayashi, and K.~Fujii, ``Qulacs: a fast and versatile quantum circuit simulator for research purpose,'' \emph{Quantum}, vol.~5, p. 559, oct 2021. [Online]. Available: \url{http://dx.doi.org/10.22331/q-2021-10-06-559}
\BIBentrySTDinterwordspacing

\bibitem{kraft1988software}
\BIBentryALTinterwordspacing
D.~Kraft and D.~F. und Versuchsanstalt f{\"u}r Luft-und Raumfahrt~K{\"o}ln, \emph{A software package for sequential quadratic programming}, ser. Forschungsbericht / Deutsche Forschungs- und Versuchsanstalt f{\"u}r Luft- und Raumfahrt.\hskip 1em plus 0.5em minus 0.4em\relax Wissenschaftl. Berichtswesen d. DFVLR, 1988. [Online]. Available: \url{https://books.google.co.jp/books?id=T-YYzwEACAAJ}
\BIBentrySTDinterwordspacing

\bibitem{nocedal2006numerical}
J.~Nocedal and S.~J. Wright, \emph{Numerical Optimization}.\hskip 1em plus 0.5em minus 0.4em\relax New York: Springer, 2006.

\bibitem{lawson1995solving}
\BIBentryALTinterwordspacing
C.~Lawson and R.~Hanson, \emph{Solving Least Squares Problems}, ser. Classics in Applied Mathematics.\hskip 1em plus 0.5em minus 0.4em\relax Society for Industrial and Applied Mathematics, 1995. [Online]. Available: \url{https://books.google.co.jp/books?id=AEwDbHp50FgC}
\BIBentrySTDinterwordspacing

\bibitem{breuer2002theory}
H.-P. Breuer and F.~Petruccione, \emph{The theory of open quantum systems}.\hskip 1em plus 0.5em minus 0.4em\relax OUP Oxford, 2002.

\bibitem{Trotter59}
\BIBentryALTinterwordspacing
H.~F. Trotter, ``On the product of semi-groups of operators,'' \emph{Proceedings of the American Mathematical Society}, vol.~10, no.~4, pp. 545--551, 1959. [Online]. Available: \url{http://www.jstor.org/stable/2033649}
\BIBentrySTDinterwordspacing

\bibitem{lre}
V.~Russo and A.~Mari, ``Quantum error mitigation by layerwise richardson extrapolation,'' \emph{arXiv:2402.04000}, 2024.

\bibitem{van_den_Berg_2023}
\BIBentryALTinterwordspacing
E.~van~den Berg, Z.~K. Minev, A.~Kandala, and K.~Temme, ``Probabilistic error cancellation with sparse pauli–lindblad models on noisy quantum processors,'' \emph{Nature Physics}, vol.~19, no.~8, p. 1116–1121, May 2023. [Online]. Available: \url{http://dx.doi.org/10.1038/s41567-023-02042-2}
\BIBentrySTDinterwordspacing

\bibitem{Murali_2020}
\BIBentryALTinterwordspacing
P.~Murali, D.~C. Mckay, M.~Martonosi, and A.~Javadi-Abhari, ``Software mitigation of crosstalk on noisy intermediate-scale quantum computers,'' in \emph{Proceedings of the Twenty-Fifth International Conference on Architectural Support for Programming Languages and Operating Systems}, ser. ASPLOS ’20.\hskip 1em plus 0.5em minus 0.4em\relax ACM, Mar. 2020, p. 1001–1016. [Online]. Available: \url{http://dx.doi.org/10.1145/3373376.3378477}
\BIBentrySTDinterwordspacing

\bibitem{Brown_2019}
\BIBentryALTinterwordspacing
N.~C. Brown and K.~R. Brown, ``Leakage mitigation for quantum error correction using a mixed qubit scheme,'' \emph{Physical Review A}, vol. 100, no.~3, 2019. [Online]. Available: \url{http://dx.doi.org/10.1103/PhysRevA.100.032325}
\BIBentrySTDinterwordspacing

\bibitem{battistel_mitigating_leakage}
F.~Battistel, ``\BIBforeignlanguage{English}{Mitigating leakage and noise in superconducting quantum computing},'' Dissertation (TU Delft), Delft University of Technology, 2022.

\bibitem{xin2025improvederrorcorrectionleakage}
\BIBentryALTinterwordspacing
Y.~Xin, S.~L.~M. van~der Meer, M.~Serra-Peralta, T.~H.~F. Vroomans, M.~Finkel, H.~M. Veen, M.~W. Beekman, and L.~DiCarlo, ``Improved error correction with leakage reduction units built into qubit measurement in a superconducting quantum processor,'' 2025. [Online]. Available: \url{https://arxiv.org/abs/2511.17460}
\BIBentrySTDinterwordspacing

\bibitem{PhysRevA.91.012330}
\BIBentryALTinterwordspacing
J.~Zhang, D.~Burgarth, R.~Laflamme, and D.~Suter, ``Experimental implementation of quantum gates through actuator qubits,'' \emph{Phys. Rev. A}, vol.~91, p. 012330, Jan 2015. [Online]. Available: \url{https://link.aps.org/doi/10.1103/PhysRevA.91.012330}
\BIBentrySTDinterwordspacing

\bibitem{diamondnv}
\BIBentryALTinterwordspacing
X.~Zhu, S.~Saito, A.~Kemp, K.~Kakuyanagi, S.-i. Karimoto, H.~Nakano, W.~J. Munro, Y.~Tokura, M.~S. Everitt, K.~Nemoto, M.~Kasu, N.~Mizuochi, and K.~Semba, ``Coherent coupling of a superconducting flux qubit to an electron spin ensemble in diamond,'' \emph{Nature}, vol. 478, no. 7368, pp. 221--224, oct 2011. [Online]. Available: \url{http://dx.doi.org/10.1038/nature10462}
\BIBentrySTDinterwordspacing

\bibitem{Qi_2026}
\BIBentryALTinterwordspacing
J.~Qi, C.-H.~H. Yang, P.-Y. Chen, and M.-H. Hsieh, ``Tensorhyper-vqc: a tensor-train-guided hypernetwork for robust and scalable variational quantum computing,'' \emph{npj Quantum Information}, vol.~12, no.~1, Feb. 2026. [Online]. Available: \url{http://dx.doi.org/10.1038/s41534-025-01157-z}
\BIBentrySTDinterwordspacing

\bibitem{Gard_2020}
\BIBentryALTinterwordspacing
B.~T. Gard, L.~Zhu, G.~S. Barron, N.~J. Mayhall, S.~E. Economou, and E.~Barnes, ``Efficient symmetry-preserving state preparation circuits for the variational quantum eigensolver algorithm,'' \emph{npj Quantum Information}, vol.~6, no.~1, Jan. 2020. [Online]. Available: \url{http://dx.doi.org/10.1038/s41534-019-0240-1}
\BIBentrySTDinterwordspacing

\bibitem{PhysRevA.101.052340}
\BIBentryALTinterwordspacing
K.~Seki, T.~Shirakawa, and S.~Yunoki, ``Symmetry-adapted variational quantum eigensolver,'' \emph{Phys. Rev. A}, vol. 101, p. 052340, May 2020. [Online]. Available: \url{https://link.aps.org/doi/10.1103/PhysRevA.101.052340}
\BIBentrySTDinterwordspacing

\bibitem{McClean_2018}
\BIBentryALTinterwordspacing
J.~R. McClean, S.~Boixo, V.~N. Smelyanskiy, R.~Babbush, and H.~Neven, ``Barren plateaus in quantum neural network training landscapes,'' \emph{Nature Communications}, vol.~9, no.~1, Nov. 2018. [Online]. Available: \url{http://dx.doi.org/10.1038/s41467-018-07090-4}
\BIBentrySTDinterwordspacing

\bibitem{Saniee}
K.~Saniee, ``A simple expression for multivariate lagrange interpolation,'' \emph{SIAM Undergraduate Research Online}, vol.~1, 01 2008.

\bibitem{Suzuki_private}
Y.~Suzuki, 2023, private communication.

\bibitem{Wierichs_2022}
\BIBentryALTinterwordspacing
D.~Wierichs, J.~Izaac, C.~Wang, and C.~Y.-Y. Lin, ``General parameter-shift rules for quantum gradients,'' \emph{Quantum}, vol.~6, p. 677, Mar. 2022. [Online]. Available: \url{http://dx.doi.org/10.22331/q-2022-03-30-677}
\BIBentrySTDinterwordspacing

\bibitem{Childs2021}
\BIBentryALTinterwordspacing
A.~M. Childs, Y.~Su, M.~C. Tran, N.~Wiebe, and S.~Zhu, ``Theory of trotter error with commutator scaling,'' \emph{Phys. Rev. X}, vol.~11, p. 011020, Feb 2021. [Online]. Available: \url{https://link.aps.org/doi/10.1103/PhysRevX.11.011020}
\BIBentrySTDinterwordspacing

\bibitem{Wang2026}
\BIBentryALTinterwordspacing
X.~Wang, S.~Zhou, X.~Wang, Y.-C. Zheng, S.~Zhang, and T.~Li, ``Lindbladian simulation with commutator bounds,'' 2026. [Online]. Available: \url{https://arxiv.org/abs/2603.28602}
\BIBentrySTDinterwordspacing

\end{thebibliography}

\begin{IEEEbiographynophoto}{ARIJIT DAS}
was born in India. He received the B.Sc. and M.Sc. degrees in physics from Visva-Bharati University, India, in 2021 and 2023, respectively. He is a 2023 Japanese Government MEXT Scholarship recipient and is currently pursuing a Ph.D. in Information Science and Technology at Shizuoka University, Japan.

His research interests include quantum error mitigation, variational quantum eigensolvers, and indirect-control quantum systems. During 2024 and 2025, he held research internship positions with the Computer and Data Science Laboratories at NTT R\&D, Tokyo, and the Processor Research Team at RIKEN Center for Computational Science (R-CCS
), Kobe. His research interests include quantum error mitigation (QEM), specifically Zero-Noise Extrapolation (ZNE), the Variational Quantum Eigensolver (VQE) algorithm, and quantum error mitigation for indirect-control systems. He has presented his work (poster presentations) at the Asian Quantum Information Science Conference (AQIS, 2024) and the Quantum Information Technology Symposium (QIT52, 2025).
\end{IEEEbiographynophoto}

\begin{IEEEbiographynophoto}{Masaki Owari} (M'18) was born in Japan, in 1979. He received the
B.S. degree from the Faculty of Sciences, The University of Tokyo, Japan,
in 2002, and the M.S. and Ph.D. degrees in science from The University of
Tokyo, in 2004 and 2007, respectively. In 2007, he worked with The University
of Tokyo, as a Project Research Associate. From 2007 to 2009, he worked with
the Institute of Mathematical Science and the Blacket Laboratory, Department
of Physics, Imperial College London, as a Research Associate. From 2009 to
2011, he worked with the Institut für Theoretische Physik, Universität Ulm,
as a JSPS Post-Doctoral Fellow for research abroad. From 2011 to 2015,
he worked with the NTT Communication Science Laboratories, NTT Corporation,
as a Scientist. In 2015, he joined the Department of Computer Science,
Faculty of Informatics, Shizuoka University, as an Associate Professor. His
research interests include quantum information, quantum computation and quantum physics.
\end{IEEEbiographynophoto}

\EOD

\end{document}